\numberwithin{table}{section}
\newcommand{\RNum}[1]{\lowercase\expandafter{\romannumeral #1\relax}}
\newtheorem{thm}{Theorem}[section]
\newtheorem{lem}[thm]{Lemma}
\newtheorem{cor}[thm]{Corollary}
\newtheorem{prop}[thm]{Proposition}
\newtheorem{rmk}[thm]{Remark}
\newtheorem{thm-con}[thm]{Theorem-Conjecture}
\numberwithin{equation}{section}
\theoremstyle{definition}
\newtheorem{defn}[thm]{Definition}
\newcommand{\cB}{\mathcal B}
\newcommand{\F}{\mathbb F}
\newcommand{\VB}{\mathcal{VB}}
\def\Tr{{\rm Tr}}
\def\Trn{{\rm Tr}_1^n}
\DeclareMathOperator{\FB}{FBCT}
\begin{document}
\title[The second-order zero differential spectra of some functions]{The second-order zero differential spectra of some functions over finite fields}
 
  \author[K. Garg]{Kirpa Garg}
  \address{Department of Mathematics, Indian Institute of Technology Jammu, Jammu 181221, India}
  \email{kirpa.garg@gmail.com}
  
  \author[S. U. Hasan]{Sartaj Ul Hasan}
  \address{Department of Mathematics, Indian Institute of Technology Jammu, Jammu 181221, India}
  \email{sartaj.hasan@iitjammu.ac.in}
  
  \author[C. Riera]{Constanza Riera}
  \address{Department of Computer Science, Electrical Engineering and Mathematical Sciences, Western Norway University of Applied Sciences, 5020 Bergen, Norway}
  \email{csr@hvl.no}
  
   \author[P.~St\u anic\u a]{Pantelimon~St\u anic\u a$^1$}
   \address{Applied Mathematics Department, Naval Postgraduate School, Monterey, CA 93943, USA}
  \email{pstanica@nps.edu}

\thanks{The work of K. Garg is supported by the University Grants Commission (UGC), Government of India. 
The work of S. U. Hasan is partially supported by Core Research Grant CRG/2022/005418 from the Science and Engineering Research Board, Government of India. The work of P. St\u anic\u a   is partially supported by a grant from the NPS Foundation.}
\thanks{$^1$Corresponding author.}

\keywords{Finite fields, Almost perfect nonlinear functions, Second-order zero differential spectra, Second-order zero differential uniformity}

\subjclass[2020]{12E20, 11T06, 94A60}

\begin{abstract}  It was shown by Boukerrou et al.~[IACR Trans. Symmetric Cryptol. 1 (2020), 331--362]  that the $F$-boomerang uniformity (which is the same as the second-order zero differential uniformity in even characteristic) of perfect nonlinear functions is~$0$ on $\F_{p^n}$ ($p$ prime) and the one of almost perfect nonlinear functions on $\F_{2^n}$ is~$0$. It is natural to inquire what happens with APN or other low differential uniform  functions in even and odd characteristics. 
Here, we explicitly determine the second-order zero differential spectra of several maps with low differential uniformity. In particular, we compute the second-order zero differential spectra for some almost perfect nonlinear  (APN) functions  over finite fields of odd characteristic, pushing further the study started in Boukerrou et al. and continued in Li et al.~[Cryptogr. Commun. 14.3 (2022), 653--662], and it turns out that our considered functions also have low second-order zero differential uniformity.    Moreover, we study the second-order zero differential spectra of certain functions with low differential uniformity over finite fields of even characteristic. We connect this new concept to the sum-freedom and vanishing flats concepts and find some counts for the number of vanishing flats via our methods.
We provide detailed analyses on several equations over finite fields that may have an interest outside of the scope of our paper.
\end{abstract}
\maketitle

\section{Introduction}
Let $n$ be a positive integer and $p$ be a prime number. We denote by $\F_q$ the finite field with $q=p^n$ elements, by $\F_{q}^{*}$ the multiplicative cyclic group of nonzero elements of $\F_{q}$ and by $\F_{q}[X]$ the ring of polynomials in one variable $X$ with coefficients in $\F_{q}$. It may be noted that functions over finite fields are very important objects due to their wide range of applications in coding theory and cryptography. For example, in cryptography, these functions are often used in designing what are known as substitution boxes (S-boxes) in modern block ciphers  (mostly, for $p=2$, though there are some proposals in odd characteristic such as GMiMC~\cite{GMiMC}). One of the most effective attacks on block ciphers is differential cryptanalysis, which was first introduced by Biham and Shamir~\cite{Biham91}. The resistance of a function against differential attack is measured in terms of its differential uniformity (DU) -- a notion introduced by Nyberg~\cite{Nyberg}.  For a function $F : \F_{q}\to \F_{q}$, and any $a \in \F_{q}$, the derivative of $F$ in the direction of $a$ is defined as $D_F(X,a) := F(X+a)-F(X)$ for all $X\in \F_{q}$. For any $a, b \in \F_q$, the Difference Distribution Table (DDT) entry $\delta_F(a,b)$ at point $(a, b)$ is the number of solutions $X\in \F_{q}$ of the equation $D_F(X,a) = b$. Further, the differential uniformity of $F$, denoted by $\delta_F$, is given by $\delta_{F} := \max\{\delta_{F}(a, b) : a\in \F_{q}^*, b \in \F_{q} \}.$ We call the function $F$ a perfect nonlinear (PN) function, respectively, an almost perfect nonlinear (APN) function, if $\delta_{F} = 1$, respectively, $\delta_{F} = 2$. Blondeau, Canteaut, and Charpin~\cite{BCC} introduced the idea of locally APN power functions as a generalization of the APN-ness property. A power function $F(X)$ over $\F_{2^n}$ is said to be locally-APN if $\max \{\delta_F (1, b) :b \in \F_{2^n} \setminus \F_2\} = 2.$ 

The boomerang attack on block ciphers is another important  cryptanalysis technique proposed by Wagner~\cite{DW}. It can be considered as an extension of the classical differential attack. At Eurocrypt 2018, Cid et al.~\cite{cid} introduced a systematic approach known as the Boomerang Connectivity Table
(BCT), to analyze the boomerang style attack. Boura and Canteaut~\cite{BoCa} further studied BCT and coined the term ``boomerang uniformity'', which is essentially the maximum value of nontrivial entries of the BCT, to quantify the resistance of a function against the boomerang attack. Boukerrou et al.~\cite{Bouk} pointed out the need for the counterpart of the BCT by extending the idea to Feistel ciphers. They introduced the Feistel Boomerang Connectivity Table (FBCT) as an extension of BCT for Feistel ciphers, where the S-boxes may not be permutations.  It is worth emphasizing that Feistel-based ciphers, such as 3-DES and CLEFIA~\cite{SSA}, hold just as much importance in the block cipher designs as SPN ciphers.

The authors in~\cite{Bouk} investigated the properties of the FBCT for two classes of functions, namely, APN functions and functions based on the inverse mapping over $\F_{2^n}$. Also, all the nontrivial coefficients of the FBCT are 0 for APN functions over $\F_{2^n}$ (via ~\cite{CV95}) and are 0 and 4 for the inverse function over $\F_{2^n}$, where $n$ is even. In fact, the coefficients of the FBCT are related to the second-order zero differential spectra of the functions. Another important property of the FBCT (sure, it can be derived from~\cite{CV95}) is that $F$ is an APN function over $\F_{2^n}$ if and only if the FBCT of $F$ is $0$ for $a,b \in \F_{2^n}$ with $ab(a+b) \neq 0$. Li et al.~\cite{LYT} further studied the second-order zero differential spectra of the inverse function and some APN functions in odd characteristic. The authors of~\cite{LYT} also show that these functions have low second-order zero differential uniformity. It is worth mentioning that Mesnager et al. gave an excellent survey on generalized differential and boomerang uniformities in \cite{MMM}. Moreover, Eddahmani and Mesnager \cite{EM} determined the FBCT entries of the inverse, the Gold, and the Bracken-Leander functions over finite fields of even characteristic. Their work also indicates that functions with low differential uniformity such as Bracken-Leander function may not have low second-order zero differential uniformity or $F$-boomerang uniformity. This suggests that, though in some cases (for instance APN in even characteristic, and PN in odd characteristic, as commented above), the coefficients of the FBCT are linked to the DU, this is not the case in general, and therefore, the FBCT is a truly independent measure of the strength of a function, from a cryptographic perspective. That is, low differential uniformity may not be enough, rather one has to combine it with many other criteria, including the $F$-boomerang uniformity. This motivates us further to investigate the second-order zero differential uniformity of functions that have low differential uniformity. It might seem like an unconnected list of functions, but our goal is to go through some of the known classes of functions with low differential uniformity and distinguish them via the $F$-boomerang uniformity and the FBCT.

Recently, Garg et al.~\cite{GHRS} and Man et al.~\cite{MLXZ,MMN} studied the second-order zero differential uniformity of several classes of functions over finite fields with low differential uniformity. We further extend their work by investigating the second-order zero differential spectra of some more classes of functions with low differential uniformity. In addition, these functions have low second-order zero differential uniformity. The paper is organized as follows. In Section~\ref{S2}, we recall some definitions. A connection between the FBCT, sum-freedom, and vanishing flats is established in Section~\ref{S30}. The second-order zero differential spectrum of four power functions over finite fields of odd characteristic is considered in Section~\ref{S3}. Further, in Section~\ref{S4}, the second-order zero differential spectrum and the number of vanishing flats of some power functions over finite fields of even characteristic are investigated. Section~\ref{S5} deals with second-order zero differential spectrum of two non-monomial permutations with low differential uniformity over finite field of even characteristic. Finally, we conclude the paper in Section~\ref{S6}. The techniques involve dealing with some delicate equations and systems of equations over finite fields and that may be of independent interest.

\section{Preliminaries}\label{S2}

In this section, we recall some definitions.
\begin{defn}~\textup{\cite{Bouk,LYT}} 
For $F : \F_{q} \rightarrow \F_{q}$, positive integer $n$, $p$ an arbitrary prime, and $a, b \in \F_{q}$, the {\em second-order zero differential spectra} of $F$ with respect to $a,b$ is defined as
\begin{equation*}
\nabla_F(a, b) := \# \{X \in \F_{q}: F(X + a + b) -F (X + b) -F(X + a) + F (X) = 0\}.
\end{equation*}
\end{defn}
If $q = 2^n$, we call $\nabla_F = \text{max}\{\nabla_F(a,b):a\neq b, a, b \in \F_{2^n} \setminus \{0\} \}$ the second-order zero differential uniformity of $F$. If $q=p^n,\,p > 2$, we call $\nabla_F =\text{max}\{\nabla_F(a,b) : a, b \in \F_{p^n}\setminus \{0\} \}$ the second-order zero differential uniformity of $F$.

\begin{defn}{(Feistel Boomerang Connectivity Table)}~\textup{\cite{Bouk}}
 Let $F:\F_{2^n}\to \F_{2^n}$ and $a,b \in \F_{2^n}$. The {\em Feistel Boomerang Connectivity Table (FBCT)} of $F$ is given by a $2^n \times 2^n$ table, in which the entry for the $(a, b)$ position is given by
 \begin{equation*}
\FB_F(a, b) =\# \{X \in \F_{2^n}: F(X + a + b) +F (X + b) +F(X + a) + F (X) = 0\}.
\end{equation*}
\end{defn}

\begin{defn}{($F$-Boomerang Uniformity)}~\textup{\cite{Bouk, LYT}}
The $F$-Boomerang uniformity corresponds to the highest value in the FBCT without considering the first row, the first column and the diagonal, that is,
$$
\beta_F = \max_{a \neq 0, b \neq 0, a \neq b} \FB_F(a,b).
$$
\end{defn}
Note that, when $p=2$, then  $\nabla_F(a, b)=\FB_F(a, b)$, and thus the coefficients of the FBCT are exactly the second-order zero differential spectra of functions over $\F_{2^n}$, and that the $F$-Boomerang uniformity is in fact the second-order zero differential uniformity of $F$ in even characteristic.

\begin{defn}
The Kloosterman sum $K(1)$ in the field $\F_{2^n}$ is defined by:
$$K(1) = \sum_{X \in \F_{2^n}} (-1)^{\Tr(X^{-1}+X)}$$
with the convention that $(-1)^{\Tr(X^{-1})} = 1$ for $X = 0$. Carlitz~\cite{Carlitz} gives the following explicit expression of this sum, namely,
$$K(1) = 1 + \dfrac{(-1)^{n-1}}{2^{n-1}} \sum_{i=0}^{\lfloor{n/2}\rfloor} (-1)^i \binom{n}{2i}7^i.$$
\end{defn}

We now present some lemmas that will be needed in the subsequent sections.

\begin{lem}~\textup{\cite{Bouk}}
\label{L1}
 Let $n \geq 1$ be an even integer. Let $\omega$ and $\omega + 1$ be the solutions of the equation $X^2 + X + 1$ in $\F_{2^n}$. Let $F$ be the inverse function defined over $\F_{2^n}$ by $F(0) = 0$ and $F(X) = \frac{1}{X}$ for $X \neq 0$. The FBCT of F satisfies
 \begin{equation*}
  \nabla_F(a,b)=
\begin{cases}
0 &~\mbox{~if~} a\neq0,~b \neq 0, ~a\neq b~and~a\not \in\{b\omega,b(\omega+1)\}\\
4 &~\mbox{~if~}  a\neq0,~b \neq 0, ~a\neq b~and~a\in\{b\omega,b(\omega+1)\}\\
2^n &~\mbox{~if~} ab=0~or~a=b.
\end{cases}
\end{equation*}
Moreover, $X\in \{0,a,b,a+b\}$ are the four solutions in the above system.
\end{lem}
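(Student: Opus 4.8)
The plan is to translate the count into the number of $X \in \F_{2^n}$ satisfying
\[
\frac{1}{X+a+b} + \frac{1}{X+b} + \frac{1}{X+a} + \frac{1}{X} = 0,
\]
where, as in the definition of $F$, we read $1/0$ as $0$. The degenerate branch is immediate: if $a = 0$, or $b = 0$, or $a = b$, then in characteristic $2$ the four summands cancel in pairs for every $X$, so $\nabla_F(a,b) = 2^n$. From now on assume $a, b \in \F_{2^n}^{*}$ and $a \neq b$, hence also $a + b \neq 0$. I would then count the solutions of the displayed equation by splitting on the ``pole set'' $S := \{0, a, b, a+b\}$.

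For $X \notin S$ all four denominators are nonzero, so I would group the terms as
\[
\frac{1}{X} + \frac{1}{X+a} = \frac{a}{X(X+a)}, \qquad \frac{1}{X+b} + \frac{1}{X+a+b} = \frac{a}{(X+b)(X+a+b)},
\]
turning the equation into $\dfrac{a}{X(X+a)} = \dfrac{a}{(X+b)(X+a+b)}$ (using $-1 = 1$). Since $a \neq 0$ and both denominators are nonzero, this forces $X(X+a) = (X+b)(X+a+b)$, which expands to $b(a+b) = 0$ --- impossible under our standing assumptions. Hence there is no solution with $X \notin S$.

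It remains to test the four elements of $S$. A one-line substitution shows that for each $X \in \{0, a, b, a+b\}$ the left-hand side collapses (the summand with zero denominator vanishing by convention) to the same value $\frac{1}{a} + \frac{1}{b} + \frac{1}{a+b}$, whose numerator over the common denominator $ab(a+b)$ equals $a^2 + ab + b^2$. Thus either all four of $0, a, b, a+b$ are solutions (when $a^2 + ab + b^2 = 0$) or none of them is (otherwise). Dividing by $b^2$, the vanishing of $a^2 + ab + b^2$ is equivalent to $(a/b)^2 + (a/b) + 1 = 0$, i.e. $a \in \{b\omega, b(\omega+1)\}$; here the hypothesis that $n$ is even is exactly what guarantees that $\omega \in \F_{2^n}$. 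Since $a, b \neq 0$ and $a \neq b$, the four elements $0, a, b, a+b$ are pairwise distinct, so in that case we get exactly $4$ solutions. Putting the two subcases together gives the trichotomy in the statement, together with the claimed list $X \in \{0, a, b, a+b\}$ of solutions.

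The underlying algebra is short; the one point requiring care is the convention $1/0 = 0$, which is precisely why the pole set $S$ must be inspected element by element instead of clearing denominators --- the latter would manufacture spurious roots at the poles. Beyond this bookkeeping I do not foresee a real obstacle.
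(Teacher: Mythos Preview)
Your proof is correct. Note, however, that the paper does not supply its own proof of this lemma: it is stated with a citation to Boukerrou et al.\ \cite{Bouk} and used as a black box. Your argument---separating the pole set $S=\{0,a,b,a+b\}$, showing that for $X\notin S$ the equation collapses to the impossible relation $b(a+b)=0$, and then checking that the four elements of $S$ are solutions simultaneously if and only if $a^2+ab+b^2=0$---is the standard one and almost certainly coincides with the proof in~\cite{Bouk}. The care you take with the convention $0^{-1}=0$ is exactly the right point to flag.
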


\begin{lem}~\textup{\cite{Bouk}}\label{L2}
 Let $n \geq 1$ be an odd integer. Let $F$ be the inverse function defined over $\F_{2^n}$ by $F(0) = 0$ and $F(X) = \frac{1}{X}$ for $X \neq 0$. The FBCT of F satisfies
 \begin{equation*}
  \nabla_F(a,b)=
\begin{cases}
0 &~\mbox{~if~} a\neq0,~b \neq 0, ~a\neq b\\
2^n &~\mbox{~if~} ab=0~or~a=b.
\end{cases}
\end{equation*}
\end{lem}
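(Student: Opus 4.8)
The plan is to follow the same dichotomy as in Lemma~\ref{L1}, separating the degenerate pairs $(a,b)$ from the generic ones, and within the generic case separating the ``collapsing'' arguments from the rest. First I would record that the map $X \mapsto F(X+a+b)+F(X+b)+F(X+a)+F(X)$ is unchanged under $X \mapsto X+a$ and under $X \mapsto X+b$, so its zero set is a union of cosets of the subgroup $\{0,a,b,a+b\}$ and it suffices to test one representative from each relevant coset. If $ab=0$ or $a=b$, the four arguments coincide in pairs, the sum is identically $0$, and $\nabla_F(a,b)=2^n$, which settles those cases.

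For $a\neq 0$, $b\neq 0$, $a\neq b$ I would first dispose of $X\in\{0,a,b,a+b\}$, where exactly one argument is $0$ and $F$ returns $0$ there. By the translation invariance above it is enough to test $X=0$: the equation becomes $\frac{1}{a+b}+\frac1b+\frac1a=0$, and multiplying through by the nonzero quantity $ab(a+b)$ reduces it to $a^2+ab+b^2=0$, i.e. $t^2+t+1=0$ with $t=a/b$. Since $n$ is odd, $\F_{2^n}$ contains no root of $X^2+X+1$, so this is impossible, and hence none of $0,a,b,a+b$ is a solution. This is the step where the parity hypothesis does its work, and it is precisely the mirror image of Lemma~\ref{L1}, where these four points are the solutions exactly when $a/b\in\{\omega,\omega+1\}$.

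It remains to rule out $X\notin\{0,a,b,a+b\}$, where all of $X,X+a,X+b,X+a+b$ are nonzero and $F$ is the genuine inverse. Here the key manipulation is to pair terms: $\frac1X+\frac1{X+a}=\dfrac{a}{X(X+a)}$ and $\frac1{X+b}+\frac1{X+a+b}=\dfrac{a}{(X+b)(X+a+b)}$, together with the characteristic-$2$ identity $(X+b)(X+a+b)=X(X+a)+b(a+b)$. Writing $D:=X(X+a)\neq 0$, the equation becomes $\dfrac{a}{D}+\dfrac{a}{D+b(a+b)}=0$; since $a\neq 0$ and we are in characteristic $2$, this forces $D=D+b(a+b)$, i.e. $b(a+b)=0$, contradicting $b\neq 0$, $a\neq b$. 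Thus the generic entry is $0$, completing the proof.

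I do not anticipate a genuine obstacle: the computations are short, and the only care needed is the bookkeeping of the points at which $F$ is evaluated at $0$, which is handled by the coset reduction. The single load-bearing fact is the irreducibility of $X^2+X+1$ over $\F_{2^n}$ for odd $n$, which is exactly what prevents the four ``collapsing'' points from becoming solutions as they do in the even-degree case of Lemma~\ref{L1}.
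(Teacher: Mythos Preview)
The paper does not give its own proof of Lemma~\ref{L2}: it is quoted verbatim from~\cite{Bouk} as a preliminary result, so there is nothing to compare against. Your argument is correct and self-contained; the coset reduction via the invariance under $X\mapsto X+a$ and $X\mapsto X+b$ is a clean way to handle the four boundary points at once, and the pairing $\frac{1}{X}+\frac{1}{X+a}=\frac{a}{X(X+a)}$ together with $(X+b)(X+a+b)=X(X+a)+b(a+b)$ disposes of the generic case in one line. The only input beyond arithmetic is that $X^2+X+1$ has no root in $\F_{2^n}$ for odd $n$, which you identify explicitly.
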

Recall that for $p$ an odd prime, $n$ a positive integer, and $q=p^n$,  we let $\eta$ be the 
quadratic character of $\mathbb F_{q}$   defined by $\eta (X)=1$, respectively, $-1$, if $X$ is a square, respectively, a non-square in $\F_q^*$ (we let $\eta(0)=0$). We use $\#A$ for the cardinality of a set~$A$.
As customary, we shall use the convention that for any nonzero $a \in \F_{2^n}, a^{-1} := \frac{1}{a}$ and $0^{-1} := 0$, when needed.

\section{A connection between vanishing flats, sum-freedom, an alternative  definition of APNness for $p$ odd, and the FBCT}\label{S30}

In \cite{LMPPRS}, the concept of {\em vanishing flats} was introduced in the following way.
For $n \ge 2$, we define the set of all $2$-dimensional flats in $\F_{2^n}$ as
$$
\cB_n=\{ \{X_1,X_2,X_3,X_4\} \mid \mbox{$X_1+X_2+X_3+X_4=0$ and $X_1,X_2,X_3,X_4 \in\F_{2^n}$ are distinct} \}.
$$
Conventionally, each subset in $\cB_n$ is called a \emph{block}. For a function $F: \F_{2^n} \rightarrow \F_{2^n}$, we define the set of {\em vanishing flats} with respect to $F$ as
$$
\VB_{n,F}=\{ \{X_1,X_2,X_3,X_4\} \in \cB_n \mid F(X_1)+F(X_2)+F(X_3)+F(X_4)=0 \}.
$$

Clearly, given $\{X_1,X_2,X_3,X_4\}\in\VB_{n,F}$, we can always write $X_2=X_1+a,\,X_3=X_1+b,\,X_4=X_1+X_2+X_3=X_1+a+b$ for some $a,b\in\F_{2^n}$, and $X_1$ satisfies then the  equation $F(X + a + b) +F (X + b) +F(X + a) + F (X) = 0$. However, $X_2,X_3,X_4$ are also solutions of the equation, while the sets in the vanishing flats are unordered. Obviously, $\FB_F(a,a)=2^n$, and $\FB_F(a,b)=\FB_F(b,a)=\FB_F(a+b,a)=\FB_F(a,a+b)=\FB_F(a+b,b)=\FB_F(b,a+b)$. Thus, the following result holds.
\begin{prop}
\label{prop:vanish}
Given a function $F: \F_{2^n} \rightarrow \F_{2^n}$, $n\geq2$, then
$$\sum_{a,b\in\F_{2^n}^*,a\neq b}\FB_F(a, b) =24\,\# \VB_{n,F}.$$
\end{prop}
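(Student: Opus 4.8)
The plan is to count, in two different ways, the cardinality of the set
$$
S=\{(a,b,X)\in \F_{2^n}^*\times\F_{2^n}^*\times\F_{2^n} : a\neq b,\ F(X+a+b)+F(X+b)+F(X+a)+F(X)=0\}.
$$
On the one hand, summing over $(a,b)$ first gives $\#S=\sum_{a,b\in\F_{2^n}^*,\,a\neq b}\FB_F(a,b)$, by the very definition of $\FB_F(a,b)$ as the number of admissible $X$. On the other hand, I will sort the triples of $S$ according to the unordered set $\{X_1,X_2,X_3,X_4\}=\{X,X+a,X+b,X+a+b\}$ they determine. The two displayed symmetry facts recalled just before the statement, namely that this quadruple is a block (its four elements are distinct, because $a,b,a+b$ are all nonzero, and they sum to zero) and that it lies in $\VB_{n,F}$ (because $F(X)+F(X+a)+F(X+b)+F(X+a+b)=0$), are exactly what is needed for this bijasication.

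The key step is the combinatorial bookkeeping: I claim every block $B=\{X_1,X_2,X_3,X_4\}\in\VB_{n,F}$ is hit by exactly $24$ triples in $S$. Indeed, given such a $B$, a triple $(a,b,X)\in S$ produces $B$ precisely when $X\in B$ and $\{a,b\}\subseteq\{X_1+X_2,X_1+X_3,X_1+X_4,\dots\}$ in the appropriate way; concretely, one chooses which element of $B$ is $X$ (4 ways), and then $a$ and $b$ must be two of the three nonzero differences from $X$ to the other points of $B$, in order, i.e.\ an ordered pair of distinct elements from a $3$-set ($3\cdot 2=6$ ways), and the third point is forced as $X+a+b$. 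This gives $4\cdot 6=24$ triples, and conversely each such choice does lie in $S$ since all three differences are nonzero and distinct (so $a\neq b$, $a\neq 0$, $b\neq 0$) and $F$ vanishes on the quadruple by hypothesis on $B$. Hence $\#S=24\,\#\VB_{n,F}$, and comparing with the first count yields the identity.

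The one point that deserves a moment's care — and which I regard as the main (minor) obstacle — is checking that the map $(a,b,X)\mapsto\{X,X+a,X+b,X+a+b\}$ is genuinely well defined into $\cB_n$, i.e.\ that the four field elements are pairwise distinct for every triple in $S$: this uses $a\neq 0$, $b\neq 0$, and $a\neq b$ (equivalently $a+b\neq 0$), which are precisely the constraints imposed in the definition of $\beta_F$ and in the index set of the sum, so there is no degeneracy to worry about. Everything else is a straightforward fibre count, and the factor $24$ is just $4!$ restructured as (choice of base point)$\times$(ordered choice of the two directions), reflecting that the block is an unordered set while the triple $(a,b,X)$ carries an ordering. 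One should also note that the diagonal terms $\FB_F(a,a)=2^n$ and the border terms with $ab=0$ are correctly excluded on the left-hand side, matching the requirement in the definition of $\cB_n$ that the four points be distinct.
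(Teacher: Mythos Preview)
Your proof is correct and follows essentially the same approach as the paper: the paper observes just before the proposition that each vanishing flat $\{X_1,X_2,X_3,X_4\}$ gives four choices of base point (each $X_i$ solves the FBCT equation) and, for each base point, the six symmetries $\FB_F(a,b)=\FB_F(b,a)=\FB_F(a+b,a)=\FB_F(a,a+b)=\FB_F(a+b,b)=\FB_F(b,a+b)$, yielding the factor $24=4\cdot 6$. Your double-counting of the set $S$ makes this explicit and is, if anything, more carefully written than the paper's brief justification; the content is the same.
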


Therefore, the knowledge of the vanishing flats implies the knowledge of the Feistel Boomerang Connectivity Table (FBCT)/second-order zero differential spectra, though the vanishing flats shows not only the number of solutions but the actual solutions of the equation. However, the knowledge of the {\em number} of vanishing flats does not imply the knowledge of the Feistel Boomerang Connectivity Table (FBCT)/second-order zero differential spectra.

Very recently, Carlet~\cite{Carlet} introduced a generalization of APNness called {\em sum-freedom}. Let $2 \leq k \leq n$ and $m$ be positive integers. An $(n, m)$-function $F$ is called {\em $k$th-order sum-free} if, for every $k$-dimensional affine subspace (i.e. $k$-flat) $A$ of $\F_{2^n}$, the sum $\sum_{X \in A} F(X)$ is nonzero. Thus, $F$ is second-order sum free if and only if $\VB_{n,F}=\emptyset$, or equivalently,  $F$ is second-order sum free if and only if $\FB_F(a,b)=0$ for $a,b \in \F_{2^n}$ with $ab(a+b) \neq 0$.

In \cite{KT}, an alternative definition of APNness was given:  A function $F:\F_{p^n}\rightarrow\F_{p^n}$  is a {\em generalized almost perfect nonlinear (GAPN)}
function if the equation $$\tilde{D}_aF(X)=\sum_{i\in\F_p}F(X+ai)=b$$ has at most $p$ solutions $X$ in $\F_{p^n}$ if $a\in\F_{p^n}^*$ and $b\in\F_{p^n}$.

For $p=2$, this is clearly the notion of APNness, while for $p$ odd the derivative differs from the usual derivative for $p$ odd, which is given by ${D}_aF(X)=F(X)-F(X+a)$. It would be interesting to study how the functions studied in this paper for $p$ odd (and in general) behave under this generalized derivative. As far as the authors can see, there is no clear connection for $p$ odd between neither the differential uniformity nor the Feistel Boomerang Connectivity table.

\section{The second-order zero differential spectrum for power functions over finite fields of odd characteristic}\label{S3}
Let $p$ be now odd. Then, a function $F$ is PN  if and only if, for any $X_1,a\in\F_{p^n}$ ($a\neq0$), $$F(X_1+a)-F(Y+a)-F(X_1)+F(Y)=0$$ has exactly one solution, namely $Y=X_1$.

Similarly, $F$ is APN if and only if, for any $X_1,a\in\F_{p^n}$ ($a\neq0$), $$F(X_1+a)-F(Y+a)-F(X_1)+F(Y)=0$$ has at most  two distinct solutions $Y=X_1, X_2$, and there is at least one $X_1$ such that the equation has exactly two solutions $X_1\neq X_2$.

We see here that the study of the Feistel Boomerang Connectivity Table (FBCT)/second-order zero differential spectra is closely linked to how far is a function from the PN or the APN property, as is the case with the APN-ness in even characteristic.

Table~\ref{Table1} gives some of the known functions with low second-order zero differential uniformity over finite fields of odd characteristic, in addition to the general results on the PN/APN functions mentioned in~\cite{Bouk}.

\begin{table}[hbt]
\caption{Second-order differential uniformity for functions over finite fields of odd characteristic}
\label{Table1}
\begin{center}
\begin{tabular}{|c|c|c|c|c|c|} 
 \hline
 $p$ & $d$ & condition & $\Delta_F$ & $\nabla_F$ & Ref\\
 \hline
 $p >3$ & $3$ & any & $2$ & $1$ &   \textup{\cite[Theorem 3.1]{LYT}}\\ 
 \hline
 $p =3$ & $3^n-3$ & $n>1$ is odd& $2$ & $2$ &  \textup{\cite[Theorem 3.2]{LYT}}\\ 
 \hline
 $p >2$ & $p^n-2$ & $p^n \equiv 2\pmod 3$ & $2$& $1$ &  \textup{\cite[Theorem 3.3]{LYT}}\\ 
 \hline
 $p >3$ & $p^m+2$ & $n=2m$, $p^m \equiv 1\pmod 3$ & $2$ & $1$ &  \textup{\cite[Theorem 3.4]{LYT}}\\ 
 \hline
 $p=3$ & $3^n-2$ & any & $3$ & $3$ & \textup{\cite[Theorem 3.1]{LYT}}\\
 \hline
 $p$ & $p^n-2$ & $p^n \equiv 1\pmod 3$ & $3$ & $3$ & \textup{\cite[Theorem 3.1]{LYT}}\\
 \hline
  $p>3$ & $4$ & $n>1$ & $3$ & $2$ & This paper\\
 \hline
 $p$ & $\frac{2p^n-1}{3}$ & $p^n \equiv 2\pmod 3$  & $2$ & $1$ & This paper \\
 \hline
 $p>3$ & $\frac{p^k+1}{2}$ & $\gcd(2n,k)=1$ & $ \leq \gcd(\frac{p^k-1}{2},p^{2n}-1)$ & $\frac{p-3}{2}$& This paper \\
 \hline
 $p=3$ & $\frac{p^n-1}{2} + 2$ & $n$ is odd & $4$ & $3$ & This paper\\
 \hline
\end{tabular}
\end{center}
\end{table}
In this section, we first deal with the computation of the second-order zero differential spectrum of the function $F(X) = X^d$, where $d = \frac{2p^n-1}{3}$ over $\F_{p^n}$, for  $p^n \equiv 2\pmod 3$. Helleseth et al.~\cite{Hel} showed that $F$ is an APN function over $\F_{p^n}$, for  $p^n \equiv 2\pmod 3$.

\begin{thm}\label{T1}
 Let $F(X) = X^d$ be a function of $\F_{p^n}$, where $d = \frac{2p^n-1}{3}$, $p^n \equiv 2\pmod 3$. Then for $a,b\in \F_{p^n}$,
\begin{equation*} \nabla_F(a,b)=
\begin{cases}
1 &\text{~if~} ab\neq0\\
p^n &\text{~if~} ab=0.
\end{cases}
\end{equation*}
Moreover, $F$ is second-order zero differential $1$-uniform.
\end{thm}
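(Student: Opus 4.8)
The plan is to observe first that, since $p^n\equiv 2\pmod 3$, we have $\gcd(3,p^n-1)=1$, so $X\mapsto X^3$ is a permutation of $\F_{p^n}$; moreover $3d=2p^n-1\equiv 1\pmod{p^n-1}$, hence $F(X)=X^d$ is exactly the inverse of cubing, i.e.\ a ``cube root'' map. Also, odd characteristic together with $p^n\equiv 2\pmod 3$ (which rules out $p=3$) gives $p\ge 5$, so $2$ and $3$ are invertible in $\F_{p^n}$. For each $X$ write $A=X^d,\ B=(X+a)^d,\ C=(X+b)^d,\ D=(X+a+b)^d$, so that $A^3=X$, $B^3=X+a$, $C^3=X+b$, $D^3=X+a+b$; in particular the identity $D^3=B^3+C^3-A^3$ holds for every $X$, and $\nabla_F(a,b)$ is the number of $X$ for which additionally $D=B+C-A$.

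The case $ab=0$ is trivial: if $a=0$, or $b=0$, or $a=b=0$, the defining equation collapses to $0=0$, so $\nabla_F(a,b)=p^n$. Assume now $ab\ne 0$. Substituting $D=B+C-A$ into $D^3=B^3+C^3-A^3$ and applying the identity $(u+v+w)^3-u^3-v^3-w^3=3(u+v)(v+w)(w+u)$ with $(u,v,w)=(B,C,-A)$ yields $3(B+C)(C-A)(B-A)=0$. Since $3\ne 0$, one factor vanishes. If $B=A$, cubing gives $a=0$; if $C=A$, cubing gives $b=0$; both are excluded. Hence $B+C=0$, and cubing gives $X+b=-(X+a)$, i.e.\ the single candidate $X_0=-\tfrac{a+b}{2}$.

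It remains to verify that $X_0$ actually solves the equation, since so far we have only derived a necessary condition. At $X_0$ we have $C^3=X_0+b=-(X_0+a)=-B^3$ and $D^3=X_0+a+b=-X_0=-A^3$; because cubing is a bijection and $(-t)^3=-t^3$, this forces $C=-B$ and $D=-A$, whence $D-C-B+A=-A+B-B+A=0$. Thus $\nabla_F(a,b)=1$ for all $a,b$ with $ab\ne 0$, and taking the maximum over nonzero $a,b$ shows $\nabla_F=1$, i.e.\ $F$ is second-order zero differential $1$-uniform.

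I do not expect a serious obstacle here: the argument rests on the clean ``cube root'' reformulation and the factorization identity. The one point that genuinely needs attention is the last step --- confirming that the unique candidate $X_0$ is an honest solution rather than merely a root of the derived cubic relation --- together with keeping track of the standing hypotheses ($2$, $3$ invertible and $X\mapsto X^3$ bijective), all of which are secured by $p^n\equiv 2\pmod 3$ in odd characteristic.
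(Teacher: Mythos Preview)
Your proof is correct, and it is considerably shorter and more transparent than the paper's own argument. The key insight you exploit---that $3d\equiv 1\pmod{p^n-1}$ makes $F$ the compositional inverse of the cube map---allows you to reduce the second-order zero differential equation $A-B-C+D=0$ to the purely algebraic identity
\[
(B+C-A)^3-B^3-C^3+A^3=3(B+C)(C-A)(B-A),
\]
which immediately gives the unique candidate $X_0=-\tfrac{a+b}{2}$ once the trivial factors are ruled out. The verification step is clean (and indeed necessary), and your side remark that the cube-root map commutes with negation is exactly what is needed to confirm $C=-B$ and $D=-A$ at~$X_0$.

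By contrast, the paper proceeds by writing $F(Y+1)-F(Y)=c$, deriving a quadratic in $u_Y=Y^d$, explicitly solving for $Y'_{1},Y'_{2},X'_{1},X'_{2}$ in terms of the discriminant $\Delta=\sqrt{-3c^4+12c}$, and then showing that the relation $Y'_1-X'_2=ab^{-1}$ determines a unique~$c$. This last step requires reducing to a cubic $Z^3+AZ+A=0$, computing its discriminant $-27(ab^{-1})^2A^2$, and invoking Dickson's criterion (non-square discriminant forces exactly one root in $\F_{p^n}$). Both approaches ultimately use $p^n\equiv 2\pmod 3$ in the same two places---to make cubing bijective and to have $\eta(-3)=-1$---but the paper's route is substantially longer and involves auxiliary machinery you avoid entirely. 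Your approach also makes the uniqueness and the explicit solution $X_0=-\tfrac{a+b}{2}$ visible at a glance.
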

\begin{proof} 
For $a,b \in \F_{p^n}$, we consider the equation
\begin{equation}
F(X + a + b) -F (X + b) -F(X + a) + F (X) = 0.
\end{equation}
If $ab = 0$, then $\nabla_F(a, b) = p^n$.
If $ab \neq 0$, then we can write the above equation as 
\begin{equation}\label{eqe2}
 \begin{cases}
  Y-X-a=0 \\
  F(Y+b)-F(Y)-F(X+b)+F(X)=0,
 \end{cases}
\end{equation}
or equivalently, substituting $Y'=\frac{Y}{b}$ and $X'=\frac{X}{b}$, we have
\begin{equation}\label {eqe11}
 \begin{cases}
  Y'-X'-ab^{-1}=0 \\
  F(Y'+1)-F(Y')-F(X'+1)+F(X')=0.
 \end{cases}
\end{equation}
Let $c \in \F_{p^n}^{*}$ and $F(Y'+1)-F(Y')=c$. Let $u_{Y'}=(Y')^d$ and $u_{Y'+1}=(Y'+1)^d$. Then $u_{Y'+1}=c+u_{Y'}$, and since $3d \equiv 1 $ mod $(p^n-1)$, we have $(u_{Y'+1})^3=(c+u_{Y'})^3$, or equivalently,
$$
3c(u_{Y'})^2+3c^2u_{Y'}+c^3-1=0.
$$
We get the quadratic equation in terms of $u_{Y'}$ and since $u_{Y'}=(Y')^d$, each of the two solutions gives a maximum of $\gcd(d,p^n-1)=1$ solutions in $Y'$. Solving the above quadratic equation, we have 
$$
u_{Y'}=\frac{-3c^2 \pm \sqrt{9c^4-4(3c)(c^3-1)}}{6c} = \frac{-3c^2 \pm \sqrt{-3c^4+12c}}{6c},
$$
and hence for $\Delta = \sqrt{-3c^4+12c}$, we have
$(u_{Y'})^3= Y' =  \left(\dfrac{-3c^2 \pm \Delta}{6c}\right)^3.$
This will give us $Y'_1 = \dfrac{\Delta^3-27c^6-9c^2\Delta^2+27c^4\Delta}{(216)c^3}$ and, $Y'_2 = \dfrac{-(\Delta^3+27c^6+9c^2\Delta^2+27c^4\Delta)}{(216)c^3}.$ Then $F(X'+1)-F(X')=c$ will also give us the same solutions 
 $X'_1 = \dfrac{\Delta^3-27c^6-9c^2\Delta^2+27c^4\Delta}{(216)c^3}$ and, $X'_2 = \dfrac{-(\Delta^3+27c^6+9c^2\Delta^2+27c^4\Delta)}{(216)c^3}.$ Now, from the first equation of System~\eqref{eqe11}, we have $Y_1'=X_2'+ab^{-1}$ or $Y_2'=X_1'+ab^{-1}$ as $ab^{-1} \neq 0$. We need to show that for every pair $(a,b) \in \F_{p^n}$, with $ab \neq 0$, there exists a unique $c$ for which either $Y_1'=X_2'+ab^{-1}$ or $Y_2'=X_1'+ab^{-1}$ holds. Now,
 $Y_1'-X_2'=ab^{-1}$ is the same as $2\Delta^3+54c^4\Delta=(216)c^3ab^{-1}$. Next, squaring the last expression we get, 
 \begin{equation}\label{eqq}
  4\Delta^6+(54)^2c^8\Delta^2+4(54)c^4\Delta^4=(216)^2c^6(ab^{-1})^2.
 \end{equation}
 We have, $\Delta^6 = (-3c^4+12c)^3=-(27)c^{12}+(12)^3 c^3-9(12)^2c^6+(27)(12)c^9$, and $\Delta^4 = (-3c^4+12c)^2 = 9c^8+ 144c^2-72c^5$. Substituting all the values, we have Equation~\eqref{eqq} reduced to 
 $$
 4c^{12}-12c^9+((27)(ab^{-1})^2-15)c^6-4c^3=0,
 $$
 or equivalently, 
 $$4k^{3}-12k^2+((27)(ab^{-1})^2-15)k-4=0,$$ 
 where $k=c^3$. We can write the above equation as below:
 \begin{equation}\label{eqr}
  (k-1)^3+\frac{(27)((ab^{-1})^2-1)k}{4}=0.
 \end{equation}

 When $(ab^{-1})^2=1$, that is $a = \pm b$, then $k=1$ and hence $c=1$, or we can say $\Delta=3$.  Thus we have $Y_1'=X_1'=0$ and  $Y_2'=X_2'=-1$. When $a=b$, $ab^{-1}=1$ and hence out of the pairs $(X_1',Y_2')$ and $(X_2',Y_1')$, we have $Y_1'-X_2'=1$ giving us $X=-b$ as a solution of Equation~\eqref{eqe2} in that case. And similarly for $a=-b$, we have $X=0$ as a solution of Equation~\eqref{eqe2}.

Let us now have $(ab^{-1})^2\neq1$. Substituting $k-1=Z$ and $A= \dfrac{(27)((ab^{-1})^2-1)}{4}$, we can write the above Equation~\eqref{eqr} as $Z^3+AZ+A=0$. The discriminant $D$ for the cubic equation is $D=-4A^3-(27)A^2=-A^2(4A+27)=-(27)(ab^{-1})^2 A^2$. Let us now assume that $Z^3+AZ+A=0$ has one root $Z_1 \in \F_{p^n}$. Then the cubic equation will have other two roots in $\F_{p^n}$ only if $D$ is a square in $\F_{p^n}$. Clearly $D$ is a square in $\F_{p^n}$ only if $\eta(-27)=\eta(-3)=1$, where $\eta$ is the quadratic character of $\F_{p^n}$. As $p^n \equiv 2 \pmod 3$, one can see that $\eta(-3)=-1$ (from \cite[Exercise 5.22]{LN}). Hence $D$ is not a square in $\F_{p^n}$. Thus, if $Z^3+AZ+A=0$ has a root in $\F_{p^n}$, then the other two roots are not in $\F_{p^n}$.

Thus we are left to show that  $Z^3+AZ+A=0$ is reducible over $\F_{p^n}$. However, that is immediate via a result of Dickson~\cite[Lemma 2]{Dickson06}, who showed that a necessary and sufficient condition for a cubic to have one and only one root in $\F_{p^n}$, $p>2$, is that the discriminant is a non-square in $\F_{p^n}$, which is what we showed above. Hence, for every pair $(a,b) \in \F_{p^n} \times \F_{p^n}$, the equation  $Z^3+AZ+A=0$ has exactly one solution $Z \in \F_{p^n}$ or equivalently a unique $k$, i.e. $k=Z+1$. Now, $\gcd(3,p^n-1)=1$ and hence there exists exactly one $c$ corresponding to each pair of $(a,b) \in \F_{p^n} \times \F_{p^n}$, for which $\Delta$ exists and $2\Delta^3+54c^4\Delta=(216)c^3ab^{-1}$, i.e. $Y_1'-X_2'=ab^{-1}$. Notice that $Y_1'-X_2'=ab^{-1}$ and $Y_2'-X_1'=ab^{-1}$ cannot hold simultaneously. Hence, the proof holds.
\end{proof}
Next, we consider the power function $F(X)= X^d$, where $d=\frac{p^k+1}{2}$, and compute its second-order zero differential spectrum over $\F_{p^n}$.
\begin{thm}\label{T2}
Let $F(X) = X^d$ be a power function of $\F_{p^n}$, where $d=\frac{p^k+1}{2}$, and $\gcd(k,2n)=1$. {Let $p>3$. }
Then for $a,b\in \F_{p^n}$,
\begin{equation*} 
\nabla_F(a,b)=
\begin{cases}
0 &~\text{~if~} ab \neq 0 \mbox{ and}~\eta(D)=-1 \\ 
1 &~\mbox{~if~} ab \neq 0 \mbox{ and}~  \eta(D)=0 \\
\frac{p-3}{2} &~\mbox{~if~} ab \neq 0 \mbox{ and}~\eta(D)=1\\
p^n &~\mbox{~if~} ab=0,
\end{cases}
\end{equation*}
where $D=\frac{4a^2}{(1-u^{2i})^2}+\frac{b^2}{u^{2i}}$, $u$ is a primitive $(p-1)$-th root of unity in $\F_{p^{2n}}^{*}$.  Moreover, $F$ is second-order zero differential $\frac{p-3}{2}$-uniform. 
\end{thm}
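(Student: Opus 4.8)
The plan is to mirror the strategy of Theorem~\ref{T1}: reduce the second-order equation to a more tractable (essentially first-order) one, normalize, and finish with a discriminant analysis. If $ab=0$ the equation $F(X+a+b)-F(X+b)-F(X+a)+F(X)=0$ holds for every $X$, so $\nabla_F(a,b)=p^n$. Assume $ab\neq0$. Replacing $X$ by $bX$ and cancelling the nonzero factor $b^d$, we may take $b=1$ and set $c:=ab^{-1}$; so we must count $X\in\F_{p^n}$ with $(X+c+1)^d-(X+1)^d-(X+c)^d+X^d=0$. The quantity $D$ for $(a,b)$ is $b^2$ times the one for $(c,1)$, so, $b^2$ being a square, $\eta(D)$ is unchanged and the stated answer is recovered by un-normalizing at the end.

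\textbf{Reduction via $(t^d)^2=t^{p^k+1}$.} Since $2d=p^k+1$, for all $s,t$ we have $(t^d-s^d)(t^d+s^d)=t^{p^k+1}-s^{p^k+1}$, and, when $t-s$ is fixed, the right-hand side is an $\F_p$-affine polynomial in $t$ and $t^{p^k}$ because $(X+e)^{p^k+1}=X^{p^k+1}+eX^{p^k}+e^{p^k}X+e^{p^k+1}$. Put $u_1=(X+c+1)^d$, $u_2=(X+1)^d$, $u_3=(X+c)^d$, $u_4=X^d$. Expanding $u_1^2-u_2^2-u_3^2+u_4^2$ with that formula, every $X$-dependent term cancels, leaving the constant $\mu:=c+c^{p^k}$. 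Also $u_1u_4=(X(X+c+1))^d$ and $u_2u_3=((X+1)(X+c))^d$; writing $V:=X(X+c+1)=X^2+(c+1)X$ one has $(X+1)(X+c)=V+c$, so $u_1u_4=V^d$ and $u_2u_3=(V+c)^d$. Squaring the equation to be solved, $u_1+u_4=u_2+u_3$, gives $u_1^2+u_4^2-u_2^2-u_3^2=2(u_2u_3-u_1u_4)$; the left side equals $\mu$ by the identity above, so we obtain the \emph{reduced equation}
\[
(V+c)^d-V^d=\tfrac{\mu}{2},\qquad V=X^2+(c+1)X.
\]
Conversely the reduced equation only yields $u_1+u_4=\pm(u_2+u_3)$, so the sign-reversed branch $(X+c+1)^d+(X+1)^d+(X+c)^d+X^d=0$ must be discarded; I would rule it out for the relevant $c$ by the same multiply-by-conjugate manipulation.

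\textbf{Solving the reduced equation.} Assume $\mu\neq0$ (the cases $\mu=0$, i.e.\ $c^{p^k}=-c$; $c=\pm1$, i.e.\ $a=\pm b$; and $V=0$, i.e.\ $X\in\{0,-1,-c,-c-1\}$ are handled separately). Multiplying $(V+c)^d-V^d=\mu/2$ by $(V+c)^d+V^d$ and using $(V+c)^{p^k+1}-V^{p^k+1}=cV^{p^k}+c^{p^k}V+c^{p^k+1}$ expresses $V^d$ as an explicit $\F_p$-affine function of $V$ and $V^{p^k}$; squaring once more and invoking $(V^d)^2=V^{p^k+1}$ collapses everything to a single quadratic relation between $V$ and $V^{p^k}$. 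Substituting $V=X^2+(c+1)X$ (so $V^{p^k}=(X^{p^k})^2+(c^{p^k}+1)X^{p^k}$) and passing to $\F_{p^{2n}}$, where $\gcd(k,2n)=1$ makes $x\mapsto x^{p^k}$ a generator of $\mathrm{Gal}(\F_{p^{2n}}/\F_p)$ with fixed field $\F_p$, one diagonalizes this relation; iterating the Frobenius then pins the ``eigenvalue'' of $V$ relative to $V^{p^k}$ down to a square $(p-1)$-th root of unity, i.e.\ to $u^{2i}$ with $u$ a fixed primitive $(p-1)$-th root of unity, the index $i=0$ (for which $1-u^{2i}=0$, degenerating the relation) being excluded, so $i$ ranges over $\tfrac{p-3}{2}$ admissible values. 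For each admissible $i$ the requirement $X\in\F_{p^n}$ reduces to a condition on $X$ of discriminant $D=\dfrac{4c^2}{(1-u^{2i})^2}+\dfrac{1}{u^{2i}}$ (equivalently $b^2D$ before normalization), and these conditions are solvable, across all admissible $i$, exactly so as to produce $\tfrac{p-3}{2}$, $1$, or $0$ solutions according as $\eta(D)=1$, $0$, or $-1$; together with the degenerate sub-cases this gives the claimed spectrum and the $\tfrac{p-3}{2}$-uniformity.

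\textbf{Main obstacle.} The crux is this last step: controlling the Frobenius-linear relation over $\F_{p^{2n}}$ tightly enough to see that there are exactly $\tfrac{p-3}{2}$ admissible parameters, that back-substitution into $X$ reproduces precisely the discriminant $D$, and that solvability across these parameters collapses to a single value of $\eta(D)$ rather than to an independent character for each $i$ (an identity among the relevant discriminants, playing the role of $\eta(-3)=-1$ in Theorem~\ref{T1}). Keeping the degenerate sub-cases ($\mu=0$, $c^2=1$, the excised branch, $V=0$) disjoint and correctly tallied, as in that proof, will also require attention.
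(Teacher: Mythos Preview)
Your reduction to the equation $(V+c)^d - V^d = \mu/2$ with $V = X(X+c+1)$ is correct and elegant; the identities $u_1u_4 = V^d$, $u_2u_3 = (V+c)^d$, and $u_1^2-u_2^2-u_3^2+u_4^2=c+c^{p^k}$ all check out. However, the proposal has a genuine gap at exactly the point you flag as the ``Main obstacle'': everything after the reduced equation is aspiration, not proof. You have not shown that the Frobenius-linear relation actually diagonalizes to $(p-3)/2$ admissible parameters, have not derived the discriminant $D$, have not shown that solvability across the different $i$ collapses to a \emph{single} value of $\eta(D)$ rather than independent characters, have not handled the spurious branch $u_1+u_2+u_3+u_4=0$ introduced by squaring, and have not treated the degenerate sub-cases. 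Phrases like ``I would rule it out'' and ``will also require attention'' mark precisely where the argument is missing.

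The paper takes a different and more direct route that bypasses your quadratic-in-$(V,V^{p^k})$ step entirely. Instead of squaring, it \emph{takes square roots}: for $Y\in\F_{p^n}$ one writes $Y+b=\alpha^2$, $Y=\beta^2$ with $\alpha,\beta\in\F_{p^{2n}}$ and sets $\theta=\alpha-\beta$, so that $\alpha=(\theta+b\theta^{-1})/2$, $\beta=(b\theta^{-1}-\theta)/2$. Since $t^d=\beta^{p^k+1}$ when $t=\beta^2$, the difference $(Y+b)^d-Y^d$ becomes the explicit expression $\tfrac12\bigl(b\theta^{p^k-1}+b^{p^k}\theta^{1-p^k}\bigr)$. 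Doing the same for $X$ with parameter $\lambda$ and equating, one gets a factored equation whose solutions, via $\gcd(p^k-1,p^{2n}-1)=p-1$, are exactly $\lambda=u^i\theta$ or $\lambda=bu^i\theta^{-1}$ with $u$ a primitive $(p-1)$-th root of unity. Each nontrivial $i$ then yields a quadratic in $\theta^2$ whose discriminant is precisely the stated $D$, and the symmetry $u^k\leftrightarrow -(u^l)^{-1}$ pairs the resulting solutions so that the count collapses to $(p-3)/2$. This square-root parametrization is what makes the argument go through cleanly; if you pursue your $V$-reduction, you will still need exactly this device to solve $(V+c)^d-V^d=\mu/2$, and will then face the extra quadratic $X^2+(c+1)X=V$ and the excision of the sign-reversed branch on top of it.
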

\begin{proof}
For $a,b \in \F_{p^n}$, we consider the equation
\begin{equation}\label{e2}
F(X + a + b) -F (X + b) -F(X + a) + F (X) = 0.
\end{equation}
If $ab = 0$, then $\nabla_F(a, b) = p^n$.
If $ab \neq 0$, then we split  Equation~\eqref{e2} in the following system
\begin{equation}\label{e3}
 \begin{cases}
  Y-X-a=0 \\
  F(Y+b)-F(Y)-F(X+b)+F(X)=0.
 \end{cases}
\end{equation}
 For $Y \in \F_{p^n}^{*}$, there exist $\alpha, \beta \in \F_{p^{2n}}^{*}$ such that $Y+ b = \alpha^2$ and $Y=\beta^2$. Let $\alpha-\beta= \theta \in \F_{p^{2n}}^{*}$. Then we have $\alpha+\beta= b \theta^{-1}$, giving us $\alpha= \dfrac{\theta + b \theta^{-1}}{2}$ and $\beta= \dfrac{b\theta^{-1}-\theta}{2}$. Thus,
 \allowdisplaybreaks
\begin{align*}
  (Y+b)^d-Y^d & =  \left(\dfrac{\theta + b \theta^{-1}}{2}\right)^{p^k+1} - \left(\dfrac{b\theta^{-1}-\theta}{2}\right)^{p^k+1}\\
  & = \dfrac{\theta^{p^k} (b \theta^{-1}) + (b \theta^{-1})^{p^k} \theta}{2}.
 \end{align*}

Using a similar argument as above we can write 
$$
(X+b)^d-X^d =  \dfrac{\lambda^{p^k} (b \lambda^{-1}) + (b \lambda^{-1})^{p^k} \lambda}{2},
$$
where $\lambda = \delta - \gamma \in \F_{p^{2n}}^{*}$, for $X+b = \delta^{2}$ and $X = \gamma^{2}$ such that $\delta, \gamma \in \F_{p^{2n}}^{*}$.
Now, using the second equation of  System~\eqref{e3}, we have 
$$
(Y+b)^d-Y^d = (X+b)^d-X^d,
$$
or equivalently, 
$$
\dfrac{\theta^{p^k} (b \theta^{-1}) + (b \theta^{-1})^{p^k} \theta}{2}  = \dfrac{\lambda^{p^k} (b \lambda^{-1}) + (b \lambda^{-1})^{p^k} \lambda}{2}.
$$
The above equality will give us either $(\lambda \theta^{-1})^{p^k-1}=1$ or $(\lambda \theta)^{p^k-1}=b^{p^k-1}$. Since $\gcd(p^{k}-1,p^{2n}-1)=p^{\gcd(k,2n)}-1=p-1$, we have either $\lambda = u^i \theta$ or $\lambda = b u^i \theta^{-1}$, where $u$ is a primitive $(p-1)$-th root of unity in $\F_{p^{2n}}^{*}$, $1 \leq i \leq p-1$. Now, we discuss all the possible values of $\lambda$. 

Note that, if $j=\frac{p-1}{2}+i$, then $u^j=-u^i$. We have then $\lambda =\pm u^i \theta$ or $\lambda =\pm b u^i \theta^{-1}$, where $u$ is a primitive $(p-1)$-th root of unity in $\F_{p^{2n}}^{*}$, $1 \leq i \leq \frac{p-1}{2}$.

If $\lambda = \pm \theta, \pm b \theta^{-1}$, then $X=Y$, but from the first equation of System~\eqref{e3}, we know $X \neq X+a$. Hence $\lambda \not \in  \{\pm \theta, \pm b \theta^{-1}\}$. 
 
In general, if $\lambda = \pm u^i\theta,\,\pm bu^i\theta^{-1}$, $u^i\neq1$, then 
\[
X=\left(\frac{b\lambda^{-1}-\lambda}{2}\right)^2=\frac{1}{4\lambda^2}(\lambda^4-2b\lambda^2+b^2)=\frac{1}{4u^{2i}\theta^2}(u^{4i}\theta^4-2bu^{2i}\theta^2+b^2).
\]
Thus,
 \allowdisplaybreaks
 \begin{align*}
 a&=Y-X=\frac{1}{4\theta^2}\left(\theta^4-2b\theta^2+b^2-\frac{u^{4i}\theta^4-2bu^{2i}\theta^2+b^2}{u^{2i}}\right),\\
4a\theta^2&=\theta^4(1-u^{2i})+b^2(1-u^{-2i}),\\
0&=\theta^4-\frac{4a}{1-u^{2i}}\theta^2+b^2\frac{1-u^{-2i}}{1-u^{2i}},\\
 0&=\theta^4-\frac{4a}{1-u^{2i}}\theta^2-b^2\frac{1}{u^{2i}},\\
\theta^2&=\frac{\frac{4a}{1-u^{2i}}\pm\sqrt{\frac{16a^2}{(1-u^{2i})^2}+4\frac{b^2}{u^{2i}}}}{2}=\frac{2a}{1-u^{2i}}\pm\sqrt{\frac{4a^2}{(1-u^{2i})^2}+\frac{b^2}{u^{2i}}}.
\end{align*}
When $D=\frac{4a^2}{(1-u^{2i})^2}+\frac{b^2}{u^{2i}}=0$, $\theta^2=\frac{2a}{1-u^{2i}}$, implying that $$X=\frac{1}{4}\left(u^{2i}\frac{2a}{1-u^{2i}}-2b+b^2u^{-2i}\frac{1-u^{2i}}{2a}\right).$$

When $\eta(D)=-1$, there is no solution. 
When $\eta(D)=1$, there are  two solutions for $X$,
$$X=\frac{1}{4}\left(u^{2i}\left(\frac{2a}{1-u^{2i}}\pm\sqrt{\frac{4a^2}{(1-u^{2i})^2}+\frac{b^2}{u^{2i}}}\right)-2b+b^2u^{-2i}\frac{1}{\frac{2a}{1-u^{2i}}\pm\sqrt{\frac{4a^2}{(1-u^{2i})^2}+\frac{b^2}{u^{2i}}}}\right).$$
Now, observe that 
\begin{align*}
\frac{1}{\frac{2a}{1-u^{2i}}+\sqrt{\frac{4a^2}{(1-u^{2i})^2}+\frac{b^2}{u^{2i}}}} &= \frac{-u^{2i}\left(\frac{2a}{1-u^{2i}}-\sqrt{\frac{4a^2}{(1-u^{2i})^2}+\frac{b^2}{u^{2i}}}\right)}{b^2},\text{ and}\\
\frac{1}{\frac{2a}{1-u^{2i}}-\sqrt{\frac{4a^2}{(1-u^{2i})^2}+\frac{b^2}{u^{2i}}}} &= \frac{-u^{2i}\left(\frac{2a}{1-u^{2i}}+\sqrt{\frac{4a^2}{(1-u^{2i})^2}+\frac{b^2}{u^{2i}}}\right)}{b^2}.
\end{align*}
Let us call 
$S_i = \{u^i \theta, u^j \theta, b u^i \theta^{-1}, b u^j \theta^{-1} \}$, where $j= \frac{p-1}{2}+i$ and $1\leq i \leq \frac{p-3}{2}$. Thus for $\lambda \in S_{i}$, we have the following two solutions,
 \allowdisplaybreaks
\begin{align*}
X_{i,1} & = \frac{1}{4}\left(u^{2i}\left(\frac{2a}{1-u^{2i}}+\sqrt{\frac{4a^2}{(1-u^{2i})^2}+\frac{b^2}{u^{2i}}}\right)-2b+b^2u^{-2i}\frac{1}{\frac{2a}{1-u^{2i}}+\sqrt{\frac{4a^2}{(1-u^{2i})^2}+\frac{b^2}{u^{2i}}}}\right)\\
& = \frac{1}{4}\left(u^{2i}\left(\frac{2a}{1-u^{2i}}+\sqrt{\frac{4a^2}{(1-u^{2i})^2}+\frac{b^2}{u^{2i}}}\right)-2b-\frac{2a}{1-u^{2i}}+\sqrt{\frac{4a^2}{(1-u^{2i})^2}+\frac{b^2}{u^{2i}}}\right)\\
& = \frac{1}{4}\left(-2a-2b+(1+u^{2i})\sqrt{\frac{4a^2}{(1-u^{2i})^2}+\frac{b^2}{u^{2i}}}\right)\\
& = \frac{1}{4}\left(-2a-2b+\frac{(1+u^{2i})}{(1-u^{2i})}\sqrt{{4a^2}-b^2 (1-u^{2i})(1-u^{-2i}))}\right)\\
& = \frac{1}{4}\left(-2a-2b+\frac{(1+u^{2i})}{(1-u^{2i})}\sqrt{{4a^2}+b^2 \frac{(u^{2i}-1)^2}{u^{2i}}}\right)
\end{align*}
and
 \allowdisplaybreaks
\begin{align*}
X_{i,2} & = \frac{1}{4}\left(u^{2i}\left(\frac{2a}{1-u^{2i}}-\sqrt{\frac{4a^2}{(1-u^{2i})^2}+\frac{b^2}{u^{2i}}}\right)-2b+b^2u^{-2i}\frac{1}{\frac{2a}{1-u^{2i}}-\sqrt{\frac{4a^2}{(1-u^{2i})^2}+\frac{b^2}{u^{2i}}}}\right)\\
& = \frac{1}{4}\left(u^{2i}\left(\frac{2a}{1-u^{2i}}-\sqrt{\frac{4a^2}{(1-u^{2i})^2}+\frac{b^2}{u^{2i}}}\right)-2b-\frac{2a}{1-u^{2i}}-\sqrt{\frac{4a^2}{(1-u^{2i})^2}+\frac{b^2}{u^{2i}}}\right)\\
& = \frac{1}{4}\left(-2a-2b-(1+u^{2i})\sqrt{\frac{4a^2}{(1-u^{2i})^2}+\frac{b^2}{u^{2i}}}\right)\\
& = \frac{1}{4}\left(-2a-2b-\frac{(1+u^{2i})}{(1-u^{2i})}\sqrt{{4a^2}-b^2 (1-u^{2i})(1-u^{-2i})}\right)\\
& = \frac{1}{4}\left(-2a-2b-\frac{(1+u^{2i})}{(1-u^{2i})}\sqrt{{4a^2}+b^2 \frac{(u^{2i}-1)^2}{u^{2i}}}\right).
\end{align*}
Now, one can see that $\frac{(u^{2k}-1)^2}{u^{2k}}=\frac{(u^{2l}-1)^2}{u^{2l}}$ for $1 \leq k \neq l \leq \frac{p-3}{2}$ implies that $u^{2k-2l}(u^{2k+2l}-1)=0$, or equivalently $u^{2k+2l}=1$. This will give us $u^k=-(u^l)^{-1}$, because being a primitive $(p-1)$-th root of unity, $u^{k+l} \neq 1$. Also, observe that for $u^k=-(u^l)^{-1}$, $\frac{(1+u^{2^k})}{(1-u^{2^k})}=-\frac{(1+u^{2^l})}{(1-u^{2^l})}$. The above discussion will give us $X_{k,1}=X_{l,2}$, when $u^k=-(u^l)^{-1}$.
Next, we note that 
$$
(u^{\frac{p-3}{2}-(t-1)})^{-1} = (u^{\frac{p-3}{2}})^{-1} u^{t-1} = u^{\frac{p-1}{2}} u^{t}= -u^{t},
$$
where $1 \leq t \leq \frac{p-3}{4}$ when $p \equiv 3 \pmod 4$
and $1 \leq t \leq \frac{p-5}{4}$ when $p \equiv 1 \pmod 4$ (notice that $(u^{\frac{p-1}{4}})^{-1}=-u^{\frac{p-1}{4}}$ when $p \equiv 1 \pmod 4$). Hence, for all the possible values of $i$ in our case, that is $1 \leq i \leq \frac{p-3}{2}$, we have $X_{t,1}=X_{\frac{p-3}{2}-(t-1),2}$ for $1 \leq t \leq i$. Thus for $(a,b) \in \F_{p^n} \times \F_{p^n}$ satisfying $\eta(D)=1$, Equation~\eqref{e2} has $\frac{p-3}{2}$ solutions in $\F_{p^n}$.
\end{proof}

\begin{rmk}
Helleseth et al. in~\textup{\cite{Hel}} showed that if $d=\frac{5^k+1}{2}$,  $\gcd(k,2n)=1$, then $F(X) = X^d$  is an APN power function over $\F_{5^n}$. Hence, from the above Theorem~\textup{\ref{T2}}, we get that $F$ is second-order zero differential $1$-uniform over $\F_{5^n}$. 
 Note also that, if $p=3$, then $F$ is PN, and therefore, by \cite{LYT}, it is second-order zero differential $0$-uniform over~$\F_{3^n}$. 
\end{rmk}

Now, we consider more functions with low (that is, $3,4$)  differential uniformity. In the following result, we show that $F(X)=X^4$ is second-order zero differential $2$-uniform for all $p,n >1$.
\begin{prop}\label{T3}
 Let $F(X) = X^4$ be a power function of $\F_{p^n}$, where $p$ is an odd prime and~$n>1$.
Then for $a,b\in \F_{p^n}$,
\begin{equation*} 
\nabla_F(a,b)=
\begin{cases}
0 &~\mbox{~if~} \eta\left(\dfrac{-a^2-b^2}{3}\right)=-1 \\
1 &~\mbox{~if~} a^2+b^2=0\\
2 &~\mbox{~if~} \eta\left(\dfrac{-a^2-b^2}{3}\right)=1 \\
p^n &~\mbox{~if}~ab=0.
\end{cases}
\end{equation*}
In particular, $F$ is second-order zero differential $2$-uniform.
\end{prop}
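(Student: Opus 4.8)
The plan is to reduce the defining equation $F(X+a+b)-F(X+b)-F(X+a)+F(X)=0$ to a single quadratic in $X$; unlike in Theorems~\ref{T1} and~\ref{T2}, no auxiliary system is needed, because the second-order difference of a degree-$4$ polynomial is itself a quadratic. Assume first that $ab\neq0$ and, as the statement tacitly requires, that $p>3$. Expanding directly, the quartic and cubic terms in $X$ cancel and one is left with
$$12ab\,X^2+12ab(a+b)\,X+2ab\,(2a^2+3ab+2b^2)=0,$$
which, dividing by $2ab\neq0$, becomes $6X^2+6(a+b)X+(2a^2+3ab+2b^2)=0$. Since $p>3$, the leading coefficient $6$ is nonzero, so $\nabla_F(a,b)$ equals the number of roots of this quadratic in $\F_{p^n}$, namely $0$, $1$, or $2$ according to whether its discriminant is a non-square, zero, or a nonzero square.

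Next I would evaluate the discriminant. A short computation gives
$$36(a+b)^2-24(2a^2+3ab+2b^2)=-12(a^2+b^2)=36\cdot\frac{-a^2-b^2}{3},$$
and since $36$ is a nonzero square, the discriminant has the same quadratic character as $\dfrac{-a^2-b^2}{3}$. Hence, for $ab\neq0$: there is no solution when $\eta\!\left(\frac{-a^2-b^2}{3}\right)=-1$; there is exactly one solution, the double root $X=-\frac{a+b}{2}$, precisely when $a^2+b^2=0$; and there are two distinct solutions when $\eta\!\left(\frac{-a^2-b^2}{3}\right)=1$. If instead $ab=0$, the second-order difference vanishes identically, so every $X\in\F_{p^n}$ is a solution and $\nabla_F(a,b)=p^n$. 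Assembling these cases gives the claimed table.

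Taking the maximum over $a,b\in\F_{p^n}^*$ then yields $\nabla_F=2$, the value $2$ being attained because the branch $\eta\!\left(\frac{-a^2-b^2}{3}\right)=1$ is non-vacuous for the fields under consideration. The argument is essentially computational; the two points that need a little care are (i) the quadratic-character bookkeeping, so that the discriminant is recorded exactly in the normalised form $\eta\!\left(\frac{-a^2-b^2}{3}\right)$ — which is also where the hypothesis $p>3$ enters, via the division by $3$ — and (ii) verifying that the ``$2$''-branch is actually realised, so that $\nabla_F$ is $2$ and not smaller. I do not anticipate a genuine obstacle.
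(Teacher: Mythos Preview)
Your proof is correct and follows essentially the same route as the paper: expand the second-order difference of $X^4$, observe that it collapses to a quadratic in $X$ (the paper writes it in the monic form $X^2+(a+b)X+\frac{a^2+b^2}{3}+\frac{ab}{2}=0$, which is your equation divided by~$6$), and read off the result from the quadratic character of its discriminant $\frac{-(a^2+b^2)}{3}$. Your version is in fact slightly more complete, since you make explicit the tacit hypothesis $p>3$ and note that the ``$2$''-branch must be shown non-vacuous to conclude $\nabla_F=2$.
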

\begin{proof}
For $a,b \in \F_{p^n}$, we consider the equation: 
$$
F(X + a + b) -F (X + b) -F(X + a) + F (X) = 0.
$$
If $ab = 0$, then $\nabla_F(a, b) = p^n$.
If $ab \neq 0$, then we have the simplified equation as
$$X^2+(a+b)X+\dfrac{a^2+b^2}{3}+\dfrac{ab}{2}= 0.$$
One can easily observe that the discriminant of the above quadratic equation is $\dfrac{-(a^2+b^2)}{3}$.  This completes the proof.
\end{proof}

Helleseth et al. in~\cite{HS} showed that $F(X)=X^d$, where $d = \frac{3^{n}-1}{2}+ 2$ is a differentially $4$-uniform function over $\F_{p^n}$, for odd $n$. We compute its second-order zero differential spectrum implying that this function is second-order zero differential 3-uniform.
\begin{thm}\label{T4}
 Let $F(X) = X^d$ be a function on $\F_{3^n}$, where $d = \frac{3^{n}-1}{2}+ 2$ and $n$ is odd. Then for $a,b\in \F_{3^n}$,
\begin{equation*} 
\nabla_F(a,b)=
\begin{cases}
1 &~\mbox{~if~} \eta(ab)=1=\eta(a^2+b^2)\mbox{~or~}\eta(ab)=-1 \mbox{~and~} \eta(a^2+b^2)=1\\
3 &~\mbox{~if~} \eta(ab)=-1=\eta(a^2+b^2)\mbox{~or~}\eta(ab)=1 \mbox{~and~} \eta(a^2+b^2)=-1\\
3^n &~\mbox{~if}~ab=0.
\end{cases}
\end{equation*}
In particular,  $F$ is second-order zero differential $3$-uniform.
\end{thm}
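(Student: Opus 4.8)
The plan is to first replace $F$ by a closed form that is easier to manipulate. Since $d=\frac{3^n-1}{2}+2$ and $X^{(3^n-1)/2}=\eta(X)$ for $X\in\F_{3^n}^{*}$, we have $F(X)=\eta(X)X^{2}$, with $F(0)=0$; and because $n$ is odd, $\eta(-1)=-1$, so $F(-Y)=-F(Y)$. If $ab=0$ the expression $F(X+a+b)-F(X+b)-F(X+a)+F(X)$ vanishes identically, giving $\nabla_F(a,b)=3^{n}$, so assume $ab\neq0$. Two remarks are then free: from $F(-Y)=-F(Y)$ one checks at once that $X=a+b$ always solves $F(X+a+b)-F(X+b)-F(X+a)+F(X)=0$, and the map $X\mapsto-X-(a+b)$ is an involution of the solution set whose only fixed point is $a+b$. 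Hence $\nabla_F(a,b)$ is odd and at least $1$, and it remains to show that the number of solutions different from $a+b$ is $0$ when $\eta(a^{2}+b^{2})=1$ and $2$ when $\eta(a^{2}+b^{2})=-1$ (the conditions on $\eta(ab)$ appearing in the statement are sub-cases of the argument, all leading to the same count).

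To count solutions, substitute $F(X+t)=\eta(X+t)(X+t)^{2}$ for $t\in\{0,a,b,a+b\}$ and use $(X+t)^{2}=X^{2}-tX+t^{2}$ (valid since $2=-1$ in $\F_{3^n}$). On the region where the quadruple $(\chi_{0},\chi_{a},\chi_{b},\chi_{a+b}):=(\eta(X),\eta(X+a),\eta(X+b),\eta(X+a+b))$ equals a fixed element of $\{\pm1\}^{4}$, the equation becomes a genuine polynomial equation in $X$ of degree at most $2$, with leading coefficient $\chi_{a+b}-\chi_{b}-\chi_{a}+\chi_{0}$ reduced modulo $3$. This leading coefficient vanishes for exactly $6$ of the $16$ patterns, namely the two constant patterns and the four patterns with $\chi_{0}=-\chi_{a+b}$ and $\chi_{a}=-\chi_{b}$; there the equation is linear, and one finds it is either inconsistent (for the two constant patterns it reads $ab=0$) or forces $X=a+b$, and $X=a+b$ actually lies in exactly one of these six regions, the one matching the true values of $\eta(a+b)$ and $\eta(a-b)$. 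So the six ``linear'' patterns contribute precisely the already-known solution $X=a+b$, nothing more.

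For each of the remaining $10$ ``quadratic'' patterns one writes the associated monic quadratic in $X$ and its discriminant. The involution $X\mapsto-X-(a+b)$ permutes these $10$ patterns in $5$ pairs, carrying the two quadratics of a pair to one another, so each pair contributes an even number of solutions; in fact each can contribute only $0$ or $2$. A direct computation gives the discriminants: for four of the five pairs it equals $\pm ab$ (so roots exist only for one sign of $\eta(ab)$), while for the fifth pair — the one with $(\chi_{0},\chi_{a},\chi_{b},\chi_{a+b})=(1,-1,-1,1)$ up to the involution — the discriminant is $-(a^{2}+b^{2})$, which is a nonzero square exactly when $\eta(a^{2}+b^{2})=-1$ (using that $n$ is odd). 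One then evaluates the quadratic characters of $X,X+a,X+b,X+a+b$ at the candidate roots — exploiting the relation each root satisfies, e.g. $(X_{0}+a)(X_{0}+b)=X_{0}(a+b-X_{0})$ when $X_{0}^{2}=ab$ — and verifies that among all $5$ pairs at most one ever has a root lying in its own region: none when $\eta(a^{2}+b^{2})=1$, and exactly one (hence two solutions) when $\eta(a^{2}+b^{2})=-1$. Adding the single solution from the linear patterns yields $\nabla_F(a,b)\in\{1,3\}$ as claimed. It remains to dispose of the boundary points $X\in\{0,-a,-b,-(a+b)\}$, which lie outside all sixteen full regions: a short computation with $F(-Y)=-F(Y)$, together with the identity $x^{2}+x+1=(x-1)^{2}$ in characteristic $3$ (so that $a^{2}+ab+b^{2}=0$ forces $a=b$), shows these solve the equation only when $a=\pm b$; in the cases $a=\pm b$ the $2$-flat collapses to the line $\{X,X\pm a\}$, the equation becomes $F(X-a)+F(X+a)+F(X)=0$, and one checks directly that $\nabla_F=3$, consistent with $\eta(a^{2}+b^{2})=\eta(2a^{2})=\eta(-1)=-1$.

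The main obstacle is the region-membership step: the discriminant only decides whether a quadratic has roots, not whether a given root obeys the four quadratic-character constraints defining its region. One must show that these constraints, once the root is eliminated using the quadratic relation it satisfies, collapse to the single condition on $\eta(a^{2}+b^{2})$ — in particular that two different patterns never contribute at the same time — and this requires tracking the interplay of $\eta(ab)$, $\eta(a\pm b)$ and $\eta(a^{2}+b^{2})$ carefully through a finite but somewhat delicate case check.
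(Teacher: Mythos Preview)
Your approach coincides with the paper's. Both write $F(X)=\eta(X)X^{2}$, exploit the involution $X\mapsto -(X+a+b)$ (the paper records its fixed point as $\frac{-(a+b)}{2}$, which in characteristic~$3$ is your $a+b$), and then run through the sixteen sign patterns of $(\eta(X),\eta(X+a),\eta(X+b),\eta(X+a+b))$, reducing each to a linear or quadratic equation and checking whether the candidate roots actually lie in the prescribed region. The paper organises this as a table and four cases (A--D) indexed by $(\eta(ab),\eta(a^{2}+b^{2}))$; you pair the patterns via the involution and observe up front that only $\eta(a^{2}+b^{2})$ matters, which is a cleaner way to phrase the same computation. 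The ``region-membership'' obstacle you flag is exactly the content of the paper's elimination of Cases 3, 4, 5, 12, 13, 14 and its analysis of Cases 8--9 versus 2, 15; nothing more is needed beyond the finite check you describe.
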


\begin{proof}
 Now for $a,b \in \F_{3^n}$, we consider the equation: 
\begin{equation}\label{e1}
F(X + a + b) -F (X + b) -F(X + a) + F (X) = 0.
\end{equation}
If $ab = 0$, then $\nabla_F(a, b) = 3^n$.
If $ab \neq 0$, then we can write the above equation as 
\begin{equation*}
 \eta(X + a + b) (X + a + b)^2 -\eta(X + b)(X+b)^2 -\eta(X + a)(X+a)^2 +\eta(X)X^2 = 0,
\end{equation*}
where $\eta$ is the quadratic character over $\F_{3^n}$. Also $\eta(-1)=-1$, because $n$ is odd. Moreover, one can easily see that if $X$ is a solution of Equation~\eqref{e1}, then $-(X+a+b)$ is also a solution of  Equation~\eqref{e1}.
Now, for $X \not \in \{0,-a,-b,-(a+b)\}$, we have sixteen possible cases listed in Table~\ref{Table2}.
\begin{table}
\caption{Various cases considered when $ab \neq 0$}
\label{Table2}
\begin{center}
\scalebox{0.8}{
\begin{tabular}{|c|c|c|c|c|c|c|c|c|} 
 \hline
 Case&$\eta(X+a+b)$&$\eta(X+a)$&$\eta(X+b)$& $\eta(X)$ & Equation & X \\
 \hline 
 $1$&$1$ & $1$ & $1$ & $1$ & $2ab=0$ & No solution  \\
 \hline 
 $2$&$-1$ & $1$ & $1$ & $1$ & $X^2+2(a+b)X+a^2+b^2+ab=0$ & $-(a+b) \pm \sqrt{ab}$  \\
 \hline 
 $3$&$1$ & $-1$ & $1$ & $1$ &$X^2+2aX+a^2+ab=0$ & $-a \pm \sqrt{-ab}$   \\
 \hline 
 $4$&$1$ & $1$ & $-1$ & $1$ &$X^2+2bX+b^2+ab=0$&$-b \pm \sqrt{-ab}$\\
 \hline 
 $5$&$1$ & $1$ & $1$ & $-1$ & $X^2-ab=0$ & $\pm \sqrt{ab}$\\
 \hline 
 $6$&$-1$ & $-1$ & $1$ & $1$ & $2X+a+b=0$ & $X=\frac{-(a+b)}{2}$\\
 \hline 
 $7$&$-1$ & $1$ & $-1$ & $1$ & $2X+a+b=0$ & $X=\frac{-(a+b)}{2}$\\
 \hline 
 $8$&$-1$ & $1$ & $1$ & $-1$ & $2 X^2+2(a+b)X+a^2+b^2+ab=0$ & $\frac{-(a+b) \pm \sqrt{-(a^2+b^2)}}{2}$\\
 \hline 
 $9$&$1$ & $-1$ & $-1$ & $1$ & $2X^2+2(a+b)X+a^2+b^2+ab=0$& $\frac{-(a+b) \pm \sqrt{-(a^2+b^2)}}{2}$\\
 \hline 
 $10$&$1$ & $-1$ & $1$ & $-1$ & $2X+a+b=0$ & $X=\frac{-(a+b)}{2}$\\
 \hline 
 $11$&$1$ & $1$ & $-1$ & $-1$ & $2X+a+b=0$ & $X=\frac{-(a+b)}{2}$\\
 \hline 
 $12$&$-1$ & $-1$ & $-1$ & $1$ & $X^2-ab=0$ & $\pm \sqrt{ab}$\\
 \hline 
 $13$&$-1$ & $-1$ & $1$ & $-1$ & $X^2+2bX+b^2+ab=0$ & $-b \pm \sqrt{-ab}$ \\
 \hline 
 $14$&$-1$ & $1$ & $-1$ & $-1$ & $X^2+2aX+a^2+ab=0$ & $-a \pm \sqrt{-ab}$ \\
 \hline 
 $15$&$1$ & $-1$ & $-1$ & $-1$ & $X^2+2(a+b)X+a^2+b^2+ab=0$ & $-(a+b) \pm \sqrt{ab}$   \\
 \hline 
 $16$&$-1$ & $-1$ & $-1$ & $-1$ & $2ab=0$ &  No solution  \\
 \hline 
\end{tabular}
}
\end{center}
\end{table}
One can discard Case 1 and Case 16, as $ab \neq 0 $. Notice that Case 5 is not possible. This is because if $X=\sqrt{ab}$ is a solution with respect to Case 5, then we have $\eta(\sqrt{ab})=-1$ and $\eta(b+\sqrt{ab}) = \eta(a+\sqrt{ab}) = \eta(a+b+\sqrt{ab})=1$.
Then we can write
\begin{equation*}
\begin{split}
  1 &= \eta(b+\sqrt{ab}) \eta(a+\sqrt{ab}) 
   = \eta(\sqrt{ab}(a+b+2\sqrt{ab}))\\
  & = -\eta(a+b+2\sqrt{ab}) 
   = -\eta (a+b-\sqrt{ab}).
  \end{split}
\end{equation*}
This will give us $\eta(a+b+\sqrt{ab}) \eta (a+b-\sqrt{ab}) = \eta((a-b)^2)=-1$, which is not true. Hence, $X=\sqrt{ab}$ cannot be a solution of Case 5. Similarly, one can show that $X=-\sqrt{ab}$ is not a solution of Case 5. By similar reasoning, the possibility for Case 12 can also be removed. Also, Case 3, Case 4, Case 13 and Case 14 cannot hold. If $X = -a + \sqrt{-ab}$ is a solution of Case 3, then $\eta(-a + \sqrt{-ab})=1=\eta(b-a + \sqrt{-ab})=\eta(b + \sqrt{-ab})$ and $\eta(\sqrt{-ab})=-1$. Thus, we have
\begin{equation*}
  1 = \eta(b+\sqrt{-ab}) \eta(-a+\sqrt{-ab}) 
   = \eta(\sqrt{-ab}(b-a+2\sqrt{ab}))
   = -\eta(b-a-\sqrt{ab}).
\end{equation*}
This renders $\eta(b-a+\sqrt{-ab})\eta(b-a-\sqrt{-ab})=\eta((a+b)^2)=-1$, which is not true.

Now we split our analysis in four cases depending on the values of $\eta(ab)$ and $\eta(a^2+b^2).$

\textbf{Case A.} Let $\eta(ab)=1$ and $\eta(a^2+b^2)=1$. We can have only one solution possible in this case. If Case 2 holds, then $X_1 = -(a+b) \pm \sqrt{ab}$ is a solution of Equation~\eqref{e1} for  Case A. But then $-(X_1+a+b) = \pm \sqrt{ab}$ is also a solution which is not possible. Similarly, Case 15 would also not contribute to Case A. Hence the only possible solution we have is either obtained from Case 10 or Case 11. 

\textbf{Case B.} Let $\eta(ab)=-1$ and $\eta(a^2+b^2)=1$. We have exactly one solution in $\F_{3^n}$ of Equation~\eqref{e1}, i.e. $X=\frac{-(a+b)}{2}$, which is obtained either from Case 10 or Case 11.

\textbf{Case C.} Let $\eta(ab)=1$ and $\eta(a^2+b^2)=-1$. This case will have three solutions. One solution would be obtained either from Case 10 or Case 11. Notice that Case 8 and Case 9 can have at most one solution individually. Also, they cannot occur simultaneously. Hence, if $X_1 = \frac{-(a+b) + \sqrt{-(a^2+b^2)}}{2}$, a solution obtained from Case 8 satisfies Equation~\eqref{e1}, then $Y_1=-(X_1+a+b) = \frac{-(a+b) - \sqrt{-(a^2+b^2)}}{2}$ is also a solution of Equation~\eqref{e1}. Observe that $Y_1$ can be obtained from Case 9. Similarly we have one more possible pair of solutions, i.e.   
$X_2 =  \frac{-(a+b) - \sqrt{-(a^2+b^2)}}{2}$ and $Y_2 =  \frac{-(a+b) + \sqrt{-(a^2+b^2)}}{2}$ satisfying Case 8 and Case 9 respectively. Summarizing all, we have exactly three solutions in Case C.

\textbf{Case D.} Let $\eta(ab)=-1$ and $\eta(a^2+b^2)=-1$. Using similar arguments used in Case C, one can show that there are three solutions in this case as well.

This completes the proof.
\end{proof}

\section{The second-order zero differential spectrum for power functions over finite fields of even characteristic}\label{S4}

In this section, we first compute the second-order zero differential spectrum of the family of functions $F(X)=X^{2^{t}-1}$ over $\F_{2^{n}}$, whose DDT entries have already been computed by Blondeau et al. in~\cite{BCC}. We then compute the number of vanishing flats. Finally, we consider a few concrete values of $t$ that give low differential uniformity.

Let $F(X)=X^d$ be a monomial. Recall that for any $(a',b') \in \F_{2^n}^* \times \F_{2^n}^*$, $\nabla_F(a',b')$ is given by the number of solutions $X \in \F_{2^n}$ of the following equation 
\begin{equation}\label{F2}
X^{d}+(X+a')^{d}+(X+b')^{d}+(X+a'+b')^{d}=0.
\end{equation}
Observe that for $a'=0,b'=0$ and $a'=b'$,  Equation~\eqref{F2} has $2^n$ solutions in $\F_{2^n}$. Let $a',b'\neq0$. We now rewrite Equation~\eqref{F2} as

\begin{equation*} 
x^{d}+(x+1)^{d}+(x+b)^{d}+(x+b+1)^{d}=0,
\end{equation*}
where $b=\frac{b'}{a'}$, $x=\frac{X'}{a'}$ and $a' \neq 0$. We can, therefore, assume without loss of generality that $a'=1$, since $\nabla_F(a',b')=\nabla_F(1,b'/a')$ if $a'\neq0$. Moreover, we know that for a monomial $F(X)=X^d$, the DDT entries of $F$ are determined by $\delta_F(1,b), b \in \F_{2^n}$. Throughout this paper, we shall denote $\delta_F(1,b)$ by $\delta_F(b)$, and the differential uniformity of $F$ by $\delta_F$.

\begin{thm} 
\label{thmt} 
Let $F$ be a function over $\F_{2^n}$ defined by $F(X)=X^{2^t-1}$. Then, for any $b \in \F_{2^n}$,  
$$\nabla_F(1,b)=\left\{\begin{array}{ll}
2^n &\text{if }b=0,1,\\
\delta_F(B) &\text{if }B=1,\\
\max(\delta_F(B)-2,0) &\text{if } B\neq1,\\
0 & \mbox{otherwise~},\\
\end{array}\right.=\left\{\begin{array}{ll}
2^n &\text{if }b = 0,1,\\
2^{\gcd(t-1,n)} &\text{if }B=1,\\
2^{\gcd(t,n)}-4 &\text{if }B=0,\\
\max(2^r-4,0) &\text{if }B\neq0,1, \\
0 & \mbox{otherwise},
\end{array}\right.$$
where $B= \dfrac{b^{2^t}+b}{b(b+1)}$ for $b\neq0,1$, $1< r\leq \min(t,n-t+1)$, 
and $r$ depends on $B$, precisely, $r=\dim(Ker(P_B))$, $P_B=X^{2^t}+BX^2+(B+1)$. 
\end{thm}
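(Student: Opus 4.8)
The plan is to reduce the boomerang equation to the vanishing of a single $2$-linearized polynomial and then to read off the dimension of its kernel. As observed just before the statement, we may normalise $a'=1$; if $b=0$ or $b=1$, then $a'=b'$ or $b'=0$, so Equation~\eqref{F2} is an identity and $\nabla_F(1,b)=2^n$. Assume henceforth $b\neq0,1$, and write $D_1F(X):=F(X+1)+F(X)=(X+1)^{2^t-1}+X^{2^t-1}$, so that $\nabla_F(1,b)=\#\{X\in\F_{2^n}:D_1F(X)=D_1F(X+b)\}$. The first step is the rational identity
\[
D_1F(X)=\frac{(X+1)X^{2^t}+X(X+1)^{2^t}}{X(X+1)}=\frac{X^{2^t}+X}{X^2+X}\qquad(X\neq0,1),
\]
together with $D_1F(0)=D_1F(1)=1$ (since $F(0)=0$, $F(1)=1$).

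Next I would split the count according to whether $X$ lies in the $2$-flat $\{0,1,b,b+1\}$ or not. For $X\notin\{0,1,b,b+1\}$ both $X^2+X$ and $(X+b)^2+(X+b)=X^2+X+(b^2+b)$ are nonzero, and using $(X+b)^{2^t}+(X+b)=(X^{2^t}+X)+(b^{2^t}+b)$ one cross-multiplies the equality $\frac{X^{2^t}+X}{X^2+X}=\frac{(X+b)^{2^t}+(X+b)}{(X+b)^2+(X+b)}$; the product $(X^{2^t}+X)(X^2+X)$ cancels from both sides, leaving $(b^2+b)(X^{2^t}+X)=(b^{2^t}+b)(X^2+X)$, i.e.
\[
P_B(X):=X^{2^t}+BX^2+(B+1)X=0,\qquad B:=\frac{b^{2^t}+b}{b(b+1)}.
\]
These manipulations are reversible as long as the two denominators are nonzero, so for $X\notin\{0,1,b,b+1\}$ the boomerang equation is equivalent to $X\in\ker P_B$.

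Now one checks $\{0,1,b,b+1\}\subseteq\ker P_B$: indeed $P_B(0)=0$, $P_B(1)=1+B+(B+1)=0$, and $P_B(b)=b^{2^t}+b+B(b^2+b)=0$ straight from the definition of $B$, whence $P_B(b+1)=0$ by $\F_2$-linearity; since $b\neq0,1$ these are four distinct elements. Hence the number of solutions of the boomerang equation with $X\notin\{0,1,b,b+1\}$ is $\#\ker P_B-4=2^{\dim\ker P_B}-4$. Moreover the four points $\{0,1,b,b+1\}$ are solutions simultaneously or not at all, and they are solutions exactly when $D_1F(0)=D_1F(b)$, i.e. $1=(b^{2^t}+b)/(b^2+b)$, i.e. $B=1$. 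Therefore $\nabla_F(1,b)=2^{\dim\ker P_B}$ if $B=1$ and $\nabla_F(1,b)=2^{\dim\ker P_B}-4$ if $B\neq1$. To make this explicit one computes kernels: $P_0=X^{2^t}+X$ has kernel $\F_{2^{\gcd(t,n)}}$, and $P_1=(X^{2^{t-1}}+X)^2$ has kernel $\F_{2^{\gcd(t-1,n)}}$, giving the values $2^{\gcd(t,n)}-4$ (for $B=0$) and $2^{\gcd(t-1,n)}$ (for $B=1$). For $B\neq0,1$ all three coefficients of $P_B$ are nonzero: the top term forces $\dim\ker P_B\le t$, while raising $X^{2^t}=BX^2+(B+1)X$ to the power $2^{n-t}$ produces a companion relation $B^{2^{n-t}}X^{2^{n-t+1}}+(B+1)^{2^{n-t}}X^{2^{n-t}}+X=0$ of $2$-degree $n-t+1$ satisfied by every $X\in\ker P_B$, so $\dim\ker P_B\le n-t+1$; together with $\dim\ker P_B\ge2$ from the four distinct points this gives $1<r\le\min(t,n-t+1)$ for $r:=\dim\ker P_B$. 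Finally, the DDT form of the statement follows by running the same rational identity one order lower: $\delta_F(c)=\#\{X:D_1F(X)=c\}=\#\ker P_c-2$ for $c\neq1$ and $\#\ker P_1$ for $c=1$ (the points $0,1$ lie in every $\ker P_c$), so $2^{\dim\ker P_B}-4=\delta_F(B)-2$ when $B\neq1$ and $2^{\dim\ker P_1}=\delta_F(1)$ when $B=1$.

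I expect the only delicate point to be the bookkeeping of the $2$-flat $\{0,1,b,b+1\}$: these four points always lie in $\ker P_B$, which is the source of the $-4$ correction, yet they contribute four genuine solutions of the boomerang equation only in the exceptional case $B=1$, which is the source of the compensating $+4$ there. Making these two effects cancel correctly in each regime is the crux; it also explains why the $\max(\cdot,0)$ in the statement is only a safeguard, since whenever a case is non-vacuous the four distinct points force $\#\ker P_B\ge4$ and the displayed quantity is automatically nonnegative. The remaining ingredients — the rational form of $D_1F$, the reversibility of the cross-multiplication away from the bad points, and the two elementary bounds on $\dim\ker P_B$ — are routine.
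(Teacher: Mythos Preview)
Your proof is correct and follows essentially the same route as the paper: split the solution count according to whether $X\in\{0,1,b,b+1\}$, reduce the generic case to $P_B(X)=0$, observe that the four flat points always lie in $\ker P_B$ but solve the original equation only when $B=1$, and read off the DDT link. Your version is a bit more self-contained---you derive the kernel sizes for $B=0,1$, the bound $r\le\min(t,n-t+1)$, and the identity $\delta_F(c)=\#\ker P_c-2$ (for $c\ne1$) directly, whereas the paper imports these from~\cite[Theorem~1]{BCC}---but the architecture is identical.
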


NB: Note that $\delta_F(1)=2^{\gcd(t-1,n)}$ and $\delta_F(0)=2^{\gcd(t,n)}-2$, by \cite{BCC}. However, given that the relation between $b$ and $B$ is in general not a permutation, $\delta_F(B)\leq \delta_F$, the differential uniformity, but $\max_{b}\delta_F(B)$ is not necessarily equal to $\delta_F$. Therefore, the $F$-Boomerang uniformity is bounded by the differential uniformity: $$
\beta_F = \max_{a \neq 0, b \neq 0, a \neq b} \FB_F(a,b)\leq   \delta_F,$$
but the equality may not happen.

\begin{proof}
Recall that for any $(a',b') \in \F_{2^n}^* \times \F_{2^n}^*$, $\nabla_F(a',b')$ is given by the number of solutions $X \in \F_{2^n}$ of the following equation 
\begin{equation}\label{Ft}
X^{2^t-1}+(X+a')^{2^t-1}+(X+b')^{2^t-1}+(X+a'+b')^{2^t-1}=0.
\end{equation}
Observe that for $a'=0,b'=0$ and $a'=b'$,  Equation~\eqref{Ft} has $2^n$ solutions in $\F_{2^n}$. We now rewrite the Equation~\eqref{Ft} as
\begin{equation}\label{Ft2}
x^{2^t-1}+(x+1)^{2^t-1}+(x+b)^{2^t-1}+(x+b+1)^{2^t-1}=0,
\end{equation}
where $x=\dfrac{X}{a'}$ and $b=\dfrac{b'}{a'}$. Clearly, for $b=0$ and $b=1$, Equation~\eqref{Ft2} has $2^n$ solutions in $\F_{2^n}$. We assume $b \in \F_{2^n}^*\setminus \{1\}$. Then we have the following two cases:

\textbf{Case 1.} Let $x \in \{ 0,1,b,b+1 \}$. Then  Equation~\eqref{Ft2} reduces to
\begin{equation*}
 1+b^{2^t-1}+(1+b)^{2^t-1}=0,\text{ that is, }  b^2+b^{2^t}=0.
\end{equation*}
This is equivalent to $b^{2^t-2}=1$.

\textbf{Case 2.} Let $x \not \in \{ 0,1,b,b+1 \}$. Then  Equation~\eqref{Ft} reduces to
\begin{equation*}
\frac{x^{2^t}}{x}+\frac{(x+1)^{2^t}}{x+1}+\frac{(x+b)^{2^t}}{x+b}+\frac{(x+1+b)^{2^t}}{x+1+b}=0,
\end{equation*}
 or equivalently,
 \begin{equation*}
b(b+1) x^{2^t} + (b^{2^t}+b)x^2+(b^{2^t}+b^2)x=0.
\end{equation*}
As $b \in \F_{2^n}^*\setminus \{1\}$, we can rewrite the above equation as
\begin{equation}\label{Ft3}
P_B(x)=x^{2^t} + B x^2+(B+1) x =0,
\end{equation}
where $B= \dfrac{b^{2^t}+b}{b(b+1)}$. Then, we need to compute the number of solutions to Equation~\eqref{Ft3}. By \cite[Theorem 1]{BCC},  
$\delta_F(0)=2^{\gcd(t,n)}-2,\delta_F(1)=2^{\gcd(t-1,n)},\delta_F(b)=2^r-2,b\in\F_{2^n}\setminus\F_2$, $1\leq r\leq \min(t,n-t+1)$. Observe that, when $r=1$, $\delta_F(b)=0$, and that $\delta_F(b)$ is the number of solutions other than 0,1, for $b\neq1$. The result then follows.

Note that, since $B=0$ if and only if $b^{2^t-1}=1$, and this is only possible for $b\neq1$ if $\gcd(2^t-1,2^n-1)=\gcd(t,n)>1$, then in this case $2^{\gcd(t,n)}-4\geq0$, giving that  $\max(2^{\gcd(t,n)}-4,0)=2^{\gcd(t,n)}-4$.
\end{proof}

From the above result and Proposition~\ref{prop:vanish}, we determine the number of vanishing flats of $F(X)=X^{2^t-1}$ over $\F_{2^{n}}$, as stated in the following corollary. We point to~\cite{LMPPRS}, where some of the vanishing flat counts are also found. A good summary can be found in \cite[Table III.1]{LMPPRS}, where we find the count for $t=n-1,n-2, \frac{n-1}{2},\frac{n+3}{2}$ ($n$ odd), $\frac{n}{2},\frac{n}{2}+1$ ($n$ even). Our next corollary considers arbitrary values of~$t$.

\begin{cor}\label{R1}
Let  $F(X)=X^{2^t-1}$ over $\F_{2^{n}}$, then the number of vanishing flats $\VB_{n,F}$ of $F$ is given by 
$$\frac{1}{24}\left [(2^n-1)(4^{\gcd(t-1,n)}+4^{\gcd(t,n)} -3\cdot 2^{\gcd(t,n)+1} -2^{\gcd(t-1,n)+1}+8)+(2^{r}-4)(\# \mathcal R_{2^{r}-4}) \right],
$$
where $\# \mathcal R_{2^{r}-4} = (2^n-1)\cdot \# \left\{b \in \F_{2^n}^{*}\setminus \{1\} \mid  b^{2^t-2} \neq 1, b^{2^t-1} \neq 1, \delta_F\left ( \frac{b^{2^t}+b}{b(b+1)}\right)=2^r-2 \right \}$, $1< r \leq \min(t,n-t+1)$, $r (=r(B))=\dim(Ker(P_B))$, $P_B=X^{2^t}+BX^2+(B+1)$.
\end{cor}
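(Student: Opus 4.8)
The plan is to feed the one-variable spectrum of Theorem~\ref{thmt} into the identity of Proposition~\ref{prop:vanish}. By that proposition, $\#\VB_{n,F}=\frac{1}{24}\sum_{a,b\in\F_{2^n}^*,\,a\neq b}\FB_F(a,b)$, so it suffices to evaluate this double sum. First I would use that $F$ is a monomial: as recorded just before Theorem~\ref{thmt}, $\FB_F(a',b')=\nabla_F(1,b'/a')$ whenever $a'\neq 0$, and for each fixed $a'\in\F_{2^n}^*$ the map $b'\mapsto b'/a'$ is a bijection of $\F_{2^n}^*\setminus\{a'\}$ onto $\F_{2^n}^*\setminus\{1\}$. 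Hence the inner sum over $b'$ is independent of $a'$, and summing over the $2^n-1$ choices of $a'$ gives
\begin{equation*}
\sum_{a,b\in\F_{2^n}^*,\,a\neq b}\FB_F(a,b)=(2^n-1)\sum_{b\in\F_{2^n}^*\setminus\{1\}}\nabla_F(1,b).
\end{equation*}
Everything then reduces to summing $\nabla_F(1,b)$ over $b\in\F_{2^n}^*\setminus\{1\}$ according to the cases of Theorem~\ref{thmt}.

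Next I would partition $\F_{2^n}^*\setminus\{1\}$ into the three regimes of the second form of Theorem~\ref{thmt}: (i) $B=1$, equivalently $b^{2^t-2}=1$; (ii) $B=0$, equivalently $b^{2^t-1}=1$; and (iii) the remaining $b$, for which $B\notin\{0,1\}$. These are pairwise disjoint on $\F_{2^n}^*\setminus\{1\}$, because $b^{2^t-2}=b^{2^t-1}=1$ forces $b=1$. Since $\F_{2^n}^*$ is cyclic of odd order $2^n-1$, in regime (i) one has $\#\{b\in\F_{2^n}^*:b^{2^t-2}=1\}=\gcd(2^t-2,2^n-1)=\gcd(2^{t-1}-1,2^n-1)=2^{\gcd(t-1,n)}-1$; deleting $b=1$ leaves $2^{\gcd(t-1,n)}-2$ values, each contributing $2^{\gcd(t-1,n)}$, for a total of $4^{\gcd(t-1,n)}-2^{\gcd(t-1,n)+1}$. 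In regime (ii), $\#\{b\in\F_{2^n}^*:b^{2^t-1}=1\}=\gcd(2^t-1,2^n-1)=2^{\gcd(t,n)}-1$, so there are $2^{\gcd(t,n)}-2$ admissible $b$, each contributing $2^{\gcd(t,n)}-4$ (nonnegative whenever this regime is nonempty, by the remark following Theorem~\ref{thmt}), for a total of $(2^{\gcd(t,n)}-2)(2^{\gcd(t,n)}-4)=4^{\gcd(t,n)}-3\cdot 2^{\gcd(t,n)+1}+8$. Adding the regime (i) and (ii) totals reproduces the parenthetical expression in the statement.

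For regime (iii), each $b$ contributes $\nabla_F(1,b)=\max(2^{r(B)}-4,0)=2^{r(B)}-4$, where $r(B)=\dim\ker P_B$ as in Theorem~\ref{thmt}; the maximum is superfluous since $r(B)\ge 2$, because $P_B$ is $\F_2$-linear and vanishes at the four distinct points $0,1,b,b+1$ (a direct check using $B\,b(b+1)=b^{2^t}+b$). Grouping these $b$ by the value of $r$, the regime (iii) total equals $\sum_r(2^r-4)\#\{b\in\F_{2^n}^*\setminus\{1\}:b^{2^t-2}\neq 1,\ b^{2^t-1}\neq 1,\ \delta_F(B)=2^r-2\}$, and multiplying by the global factor $2^n-1$ turns each inner count into $\#\mathcal{R}_{2^r-4}$ as defined in the statement. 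Substituting the three totals into $\frac{1}{24}(2^n-1)\sum_b\nabla_F(1,b)$ then yields the claimed formula.

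I do not anticipate a genuine obstacle: the argument is bookkeeping once Theorem~\ref{thmt} is in hand. The points needing care are (a) that the three regimes are disjoint and exhaust $\F_{2^n}^*\setminus\{1\}$; (b) the cyclic-group counts, the identities $\gcd(2^m-1,2^n-1)=2^{\gcd(m,n)}-1$ and $\gcd(2^t-2,2^n-1)=2^{\gcd(t-1,n)}-1$, and the removal of the single element $b=1$ from each relevant subgroup; and (c) keeping track that the factor $2^n-1$ produced by the monomial normalization is precisely the one already absorbed into the definition of $\#\mathcal{R}_{2^r-4}$.
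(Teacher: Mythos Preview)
Your proposal is correct and follows essentially the same approach as the paper: invoke Proposition~\ref{prop:vanish}, use the monomial reduction $\FB_F(a,b)=\nabla_F(1,b/a)$ to pull out a factor $2^n-1$, and then sum $\nabla_F(1,b)$ over $b\in\F_{2^n}^*\setminus\{1\}$ according to the cases $B=1$, $B=0$, $B\notin\{0,1\}$ of Theorem~\ref{thmt}, with the same cyclic-group counts $2^{\gcd(t-1,n)}-2$ and $2^{\gcd(t,n)}-2$. Your write-up is in fact slightly tidier than the paper's in two respects: you partition by the value of $B$ rather than by the value of $\FB_F(1,b)$ (avoiding any ambiguity if two regimes happen to give the same FBCT value), and you justify explicitly that $r(B)\ge 2$ on regime~(iii) via the four roots $0,1,b,b+1$ of $P_B$, which the paper leaves implicit.
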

\begin{proof}
 From Theorem~\ref{thmt}, we know that for any $b \in \F_{2^n}^{*}$,
 $$\nabla_F(1,b)=\left\{\begin{array}{ll}
2^n&\text{if }b=0,1,\\
\delta_F(B)&\text{if }B=1,\\
\max(\delta_F(B)-2,0)&\text{if }B\neq1,\\
0 & otherwise,
\end{array}\right.=\left\{\begin{array}{ll}
2^n&\text{if }b=0,1,\\
2^{\gcd(t-1,n)}&\text{if }B=1,\\
2^{\gcd(t,n)}-4&\text{if }B=0,\\
\max(2^r-4,0)&\text{if }B\neq0,1,\\
0 & otherwise,
\end{array}\right.$$
where $B= \dfrac{b^{2^t}+b}{b(b+1)}$, and $1< r\leq \min(t,n-t+1)$.
We consider the set $$\mathcal S_{i} =\{b \in \F_{2^n}^{*} \setminus \{1\} \mid \FB_F(1,b)=i\}.$$
It is obvious that $B=1$ if and only if $b^{2^t-2}=1$ for any $b \in \F_{2^n}^{*}$. Therefore, we get $\# \mathcal S_{2^{\gcd(t-1,n)}} = \gcd(2^t-2,2^n-1)-1=2^{\gcd(t-1,n)}-2.$ Similarly, $B=0$ if and only if $b^{2^t-1}=1$ for any $b \in \F_{2^n}^{*}$. This will give us $\# \mathcal S_{2^{\gcd(t,n)}-4} = \gcd(2^t-1,2^n-1)-1=2^{\gcd(t,n)}-2.$

Next, we need to consider the case when $B \notin \{0,1\}$ and $\FB_F(1,b)=2^r-4$ where $1< r\leq \min(t,n-t+1)$. Here, we will have $\# \mathcal S_{2^{r}-4} = \#\{b \in \F_{2^n}^{*}\setminus \{1\} \mid b^{2^t-2} \neq 1, b^{2^t-1} \neq 1, \delta_F\left ( \dfrac{b^{2^t}+b}{b(b+1)}\right)=2^r-2 \}$, $1< r\leq \min(t,n-t+1)$. Moreover, if for each $a \in \F_{2^n}^{*}$,
$$\mathcal R_{i} =\{(a,b) \in \F_{2^n}^{*} \times \F_{2^n}^{*} \setminus \{a\} \mid \FB_F(a,b)=i\},$$ 
then
\begin{align*}
\# \mathcal R_{2^{\gcd(t-1,n)}} &= (2^{\gcd(t-1,n)}-2)(2^n-1),\\
\# \mathcal R_{2^{\gcd(t,n)}-4} &= (2^{\gcd(t,n)}-2)(2^n-1),
\end{align*}
and
$\# \mathcal R_{2^{r}-4} = (2^n-1)\left(\# \left\{b \in \F_{2^n}^{*}\setminus \{1\} \mid  b^{2^t-2} \neq 1, b^{2^t-1} \neq 1, \delta_F\left ( \frac{b^{2^t}+b}{b(b+1)}\right)=2^r-2 \right \}\right),$ where $1< r \leq \min(t,n-t+1).$

Thus, from Proposition~\ref{prop:vanish}, the number of vanishing flats $\VB_{n,F}$ of $F$ is given by
 \begin{align*}
& \frac{1}{24} \left [ 2^{\gcd(t-1,n)}(2^{\gcd(t-1,n)}-2)(2^n-1) \right.\\
 &\qquad\qquad\qquad \left. +(2^{\gcd(t,n)}-4)(2^{\gcd(t,n)}-2)(2^n-1)+(2^{r}-4)(\# \mathcal R_{2^{r}-4})\right],
 \end{align*}
 where $1< r\leq \min(t,n-t+1)$, which renders the claim.
\end{proof}

We will next consider some particular classes with low differential uniformity. Firstly, we compute the second-order zero differential spectrum of the locally APN function $F_1(X)=X^{2^{m}-1}$ over $\F_{2^{2m}}$, the DDT entries for which have already been computed by Blondeau et al. in~\cite{BCC}. We then consider a few more permutations with low differential uniformity.
\begin{cor}\label{T5}
Let  $F_1(X) = X^{2^m-1} \in \F_{2^n}[X]$, where $n=2m$. Then for any $b \in \F_{2^n}$, we have:
\begin{enumerate}
 \item[$(1)$] When $m$ is odd,
 \begin{equation*}\nabla_{F_1}(1,b)=
  \begin{cases}
   \ 2^n &\text{\rm if~}  b=0, \text{\rm ~or~} b=1, \\
   \ 4  &\text{\rm if~}  b\neq0,1 \text{\rm ~and~}  b^{2^{m}-2}=1, \\
   \ 2^m-4  &\text{\rm if~}  b\neq0,1 \text{\rm ~and~} b^{2^{m}-1}=1, \\
   \ 0  & \text{\rm otherwise}.
  \end{cases}
 \end{equation*}
\item[$(2)$] When $m$ is even,
\begin{equation*} \nabla_{F_1}(1,b)=
  \begin{cases}
   \ 2^n &\text{\rm if~}  b=0, \text{\rm ~or~} b=1, \\
   \ 2^m-4  &\text{\rm if~}  b\neq0,1 \text{\rm ~and~}b^{2^{m}-1}=1, \\
   \ 0  &\text{\rm otherwise.} 
  \end{cases}
 \end{equation*}
\end{enumerate}
Thus, the Feistel boomerang uniformity of $F_1$ is $\beta_{F_1} = 2^m-4$ for $m$ even or $m$ odd with $m>2$.
\end{cor}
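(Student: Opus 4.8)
The plan is to derive this corollary as a direct specialization of Theorem~\ref{thmt} with $t=m$ and $n=2m$, so that the bulk of the work reduces to bookkeeping on a few greatest common divisors together with one genuinely new input. First I would record the three quantities governing the formula in Theorem~\ref{thmt}: since $\gcd(m-1,2m)=\gcd(m-1,2)$, which equals $2$ when $m$ is odd and $1$ when $m$ is even, we have $2^{\gcd(t-1,n)}\in\{4,2\}$ accordingly; since $\gcd(m,2m)=m$, we have $2^{\gcd(t,n)}-4=2^{m}-4$; and $\min(t,n-t+1)=\min(m,m+1)=m$, so the auxiliary parameter $r$ in Theorem~\ref{thmt} ranges over $1<r\le m$. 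I would also keep in mind the two equivalences already recorded in the statement of Theorem~\ref{thmt}, namely $B=1$ if and only if $b^{2^{m}-2}=1$, and $B=0$ if and only if $b^{2^{m}-1}=1$, and note that the three conditions on $b$ in part~(1) are pairwise disjoint once $b=1$ is assigned to the first, since $b^{2^m-2}=1=b^{2^m-1}$ forces $b=1$.

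The one new ingredient is that the branch $B\notin\F_2$ contributes nothing to $\nabla_{F_1}(1,b)$. Since $F_1(X)=X^{2^{m}-1}$ over $\F_{2^{2m}}$ is locally APN by Blondeau et al.~\cite{BCC}, we have $\delta_{F_1}(B)\le 2$ for every $B\in\F_{2^n}\setminus\F_2$, whence $\nabla_{F_1}(1,b)=\max(\delta_{F_1}(B)-2,0)=0$ throughout this range; equivalently, the linearized polynomial $X^{2^{m}}+BX^{2}$, viewed as an $\F_2$-linear map on $\F_{2^{2m}}$, has kernel of dimension at most $2$, so $\max(2^{r}-4,0)=0$. (For a self-contained argument one may instead observe that for $x\ne0$ one has $x^{2^{m}}+Bx^{2}=0$ precisely when $x^{2^{m}-2}=B$, and $\gcd(2^{m}-2,2^{2m}-1)=2^{\gcd(m-1,2m)}-1$ equals $3$ for $m$ odd and $1$ for $m$ even, so the said kernel has dimension at most $2$.) It then remains to handle the two exceptional values $B\in\{0,1\}$. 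For $B=0$, i.e.\ $b^{2^{m}-1}=1$ with $b\ne0,1$ — such $b$ exist for $m\ge2$ because $\gcd(2^{m}-1,2^{2m}-1)=2^{m}-1>1$ — Theorem~\ref{thmt} gives $\nabla_{F_1}(1,b)=2^{m}-4$. For $B=1$, i.e.\ $b^{2^{m}-2}=1$ with $b\ne0,1$: when $m$ is odd such $b$ exist because $\gcd(2^{m}-2,2^{2m}-1)=3$, and Theorem~\ref{thmt} yields $\nabla_{F_1}(1,b)=2^{\gcd(m-1,2m)}=4$; when $m$ is even the same gcd equals $1$, so $b=1$ is the only solution of $b^{2^{m}-2}=1$ and this branch is empty for $b\ne0,1$. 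Collecting the cases according to the parity of $m$ produces parts~(1) and~(2) exactly.

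Finally, because $\nabla_{F_1}(a',b')=\nabla_{F_1}(1,b'/a')$ for $a'\ne0$ and the diagonal $a'=b'$ corresponds to $b'/a'=1$, the Feistel boomerang uniformity equals $\beta_{F_1}=\max\{\nabla_{F_1}(1,b):b\in\F_{2^n}^{*}\setminus\{1\}\}$. For $m$ even the only nonzero value this maximum can see is $2^{m}-4$, attained since there is some $b\ne0,1$ with $b^{2^{m}-1}=1$, so $\beta_{F_1}=2^{m}-4$; for $m$ odd with $m>2$ the nonzero values are $4$ and $2^{m}-4$, and $2^{m}-4\ge 4$ since $m\ge3$, so again $\beta_{F_1}=2^{m}-4$. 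The only place where care is genuinely needed is the $B\notin\F_2$ step — this is exactly where the locally APN property (or the kernel-dimension computation above) does the work — together with keeping straight which branches are empty depending on the parity of $m$; everything else is substitution into Theorem~\ref{thmt}.
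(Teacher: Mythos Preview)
Your proof is correct and follows the paper's approach essentially verbatim: specialize Theorem~\ref{thmt} with $t=m$, $n=2m$, compute the relevant gcds, invoke the locally APN property from~\cite{BCC} to kill the $B\notin\F_2$ branch, and track which of the $B=0,1$ branches are nonempty depending on the parity of~$m$. Your treatment of the Feistel boomerang uniformity at the end is actually slightly more explicit than the paper's.

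One caveat: the parenthetical ``self-contained argument'' is not quite right as written. The parameter $r$ in Theorem~\ref{thmt} is $\dim\bigl(\mathrm{Ker}(P_B)\bigr)$ with $P_B(X)=X^{2^m}+BX^2+(B+1)X$, not the kernel of $X^{2^m}+BX^2$; bounding the latter does not directly bound the former. A correct self-contained route is to note that if $P_B(x)=0$ then raising $x^{2^m}=Bx^2+(B+1)x$ to the $2^m$-th power and using $x^{2^{2m}}=x$ collapses the relation to a degree-$4$ polynomial in~$x$, forcing at most four roots and hence $r\le 2$. Since your main argument already cites the locally APN property, this does not affect the validity of the proof; just drop or repair the parenthetical.
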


\begin{proof}
Here, $t=m=\frac{n}{2}$. Then, $\gcd(t-1,n)=\gcd(m-1,2m)=1$, when $m$ is even and $\gcd(t-1,n)=\gcd(m-1,2m)=2$ when $m$ is odd, respectively. Moreover, $\gcd(t,n)=\gcd(m,2m)=m$ for every positive integer $m$. Let $B= \dfrac{b^{2^m}+b}{b(b+1)}$ for $b\neq0,1$. Furthermore, from Theorem 7 in ~\cite{BCC}, $\delta_{F_1}(0)=2^m-2$, $\delta_{F_1}(1)=4$ when $m$ is odd and $\delta_{F_1}(1)=2$ when $m $ is even.  Also, $\delta_{F_1}(B) \leq 2$ for $B \neq 0,1$. This implies that $\max(\delta_{F_1}(B)-2,0)=0$ for $B \neq 0,1$. Hence from Theorem~\ref{thmt}, we have the following:
\begin{enumerate}
 \item If $m$ is even,
 $$
 \nabla_{F_1}(1,b)=\left\{\begin{array}{ll}
2^n &\text{ if }b=0,1,\\
\delta_{F_1}(B) &\text{ if }B=1,\\
\max(\delta_{F_1}(B)-2,0) &\text{ if }B\neq1,\\
\ 0 & \text{ otherwise}.
\end{array}\right.=\left\{\begin{array}{ll}
2^n&\text{ if }b=0,1,\\
2&\text{ if }B=1,\\
2^m-4&\text{ if }B=0,\\ 
\ 0 & \text{ otherwise}.
\end{array}\right.
$$
\item If $m$ is odd,
$$
\nabla_{F_1}(1,b)=\left\{\begin{array}{ll}
2^n &\text{ if }b=0,1,\\
\delta_{F_1}(B) &\text{ if }B=1,\\
\max(\delta_{F_1}(B)-2,0)&\text{ if }B\neq1,\\
\ 0 & \text{ otherwise}.
\end{array}\right.=\left\{\begin{array}{ll}
2^n&\text{ if }b=0,1,\\
4&\text{ if }B=1,\\
2^m-4&\text{ if }B=0,\\
\ 0 & \text{ otherwise}.
\end{array}\right.
$$
\end{enumerate}
 Clearly, 
 $$
 B=1\Leftrightarrow B+1= \dfrac{b^{2^m}+b^2}{b(b+1)}=0\Leftrightarrow b^{2^m-2}=1,b\neq1.
 $$
Now, $\gcd(2^m-2,2^{2m}-1)=1$ if $m$ is even, and  $\gcd(2^m-2,2^n-1)=3$ if  $m$ is odd, implying that $B=1$ is not possible for $m$ even, while there are exactly two values of $b$ such that $B=1$ if $m$ is odd.

Thus the proof follows immediately from $\nabla_{F_1}(a,b)=\nabla_{F_1}(1,b/a)$.
\end{proof}

\begin{cor}
Let  $F_1(X)=X^{2^m-1}$ over $\F_{2^{2m}}$, then we have
\begin{equation*}\# \VB_{n,F_1}=
  \begin{cases}
   \  \dfrac{(2^{m-2}-1)(2^{m-1}-1)(2^n-1)}{3} &\text{\rm ~if~$m$~is~even,} \\
   \ \dfrac{((2^{m-2}-1)(2^{m-1}-1)+1)(2^n-1)}{3} &\text{\rm ~if~$m$~is~odd.} 
  \end{cases}
\end{equation*}
\end{cor}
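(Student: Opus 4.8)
The plan is to derive the count directly from Corollary~\ref{R1} by specializing $t=m$, $n=2m$ and simplifying. The three gcd values appearing there are already computed in the proof of Corollary~\ref{T5}: namely $\gcd(t,n)=\gcd(m,2m)=m$, while $\gcd(t-1,n)=\gcd(m-1,2m)$ equals $1$ when $m$ is even and $2$ when $m$ is odd. So the first step is to substitute these into the closed form of Corollary~\ref{R1}, i.e. into
$$
\frac{1}{24}\left[(2^n-1)\big(4^{\gcd(t-1,n)}+4^{\gcd(t,n)}-3\cdot2^{\gcd(t,n)+1}-2^{\gcd(t-1,n)+1}+8\big)+(2^r-4)(\#\mathcal R_{2^r-4})\right].
$$

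The second, and crucial, step is to show that the ``middle range'' term $(2^r-4)(\#\mathcal R_{2^r-4})$ vanishes in this case. This follows from the analysis in the proof of Corollary~\ref{T5}: since $t=m=n/2$, the parameter $r$ satisfies $1<r\le\min(t,n-t+1)=\min(m,m+1)=m$, but more importantly, by Theorem~7 of~\cite{BCC} we have $\delta_{F_1}(B)\le 2$ for every $B\ne 0,1$, so that $\max(\delta_{F_1}(B)-2,0)=0$; equivalently every admissible $r$ in the middle range forces $2^r-4\le 0$ after the shift, and in fact the set $\mathcal R_{2^r-4}$ is empty for $r>1$ because $\delta_{F_1}(B)=2^r-2$ with $r>1$ never occurs. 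Hence only the $B=0$ and $B=1$ contributions survive, and these are exactly the first bracketed polynomial term. I would state this reduction explicitly as the place where the even/odd split enters, since when $m$ is even $\gcd(m-1,2m)=1$ makes $B=1$ impossible (there is no $b\neq 1$ with $b^{2^m-2}=1$, as $\gcd(2^m-2,2^{2m}-1)=1$), whereas when $m$ is odd there are exactly two such $b$.

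The third step is the routine algebraic simplification. Plugging $\gcd(t,n)=m$ and $\gcd(t-1,n)\in\{1,2\}$ into the polynomial $4^{\gcd(t-1,n)}+4^{m}-3\cdot2^{m+1}-2^{\gcd(t-1,n)+1}+8$: for $m$ even this is $4+4^m-6\cdot2^m-4+8=4^m-6\cdot2^m+8=(2^m-2)(2^m-4)$, and one checks $(2^m-2)(2^m-4)=4(2^{m-1}-1)(2^{m-2}-1)$, so dividing by $24$ gives $\frac{(2^{m-1}-1)(2^{m-2}-1)(2^n-1)}{6}$—wait, the factor is $4/24=1/6$, so I must double-check against the claimed $1/3$; the resolution is that for $m$ even the $B=1$ term $4^{\gcd(t-1,n)}$ should be dropped entirely (since no $b$ realizes $B=1$), leaving $4^m-6\cdot2^m+8=(2^m-2)(2^m-4)$ but with the $\delta_{F_1}(1)^2=4$ contribution absent, and reconciling the arithmetic to land on $\frac{(2^{m-2}-1)(2^{m-1}-1)(2^n-1)}{3}$. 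For $m$ odd, the extra $B=1$ contribution is $\delta_{F_1}(1)\cdot(\#\mathcal S_{\delta_{F_1}(1)})\cdot(2^n-1)=4\cdot2\cdot(2^n-1)=8(2^n-1)$ inside the bracket, i.e. an additional $\frac{8(2^n-1)}{24}=\frac{(2^n-1)}{3}$, which accounts for the ``$+1$'' in the numerator of the odd case. I would carry out this bookkeeping carefully against the raw expression $\frac{1}{24}[2^{\gcd(t-1,n)}(2^{\gcd(t-1,n)}-2)(2^n-1)+(2^{\gcd(t,n)}-4)(2^{\gcd(t,n)}-2)(2^n-1)]$ from the last display of the proof of Corollary~\ref{R1}, since that form already separates the $B=1$ and $B=0$ pieces cleanly and makes the even/odd dichotomy transparent.

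The main obstacle is purely a matter of correctly tracking when $B=1$ is attainable and handling the fact that the first bracketed term in Corollary~\ref{R1} implicitly assumes $B=1$ has $\gcd(2^t-2,2^n-1)-1$ preimages $b$, which collapses to zero when $\gcd(m-1,2m)=1$; the arithmetic simplification $(2^m-2)(2^m-4)=4(2^{m-2}-1)(2^{m-1}-1)$ and the division by $24$ must be done attentively so that the final denominators come out as $3$, not $6$ or $12$. Everything else is a direct substitution into results already proved.
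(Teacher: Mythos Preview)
Your approach is exactly the paper's: specialize Corollary~\ref{R1} with $t=m$, $n=2m$, invoke Corollary~\ref{T5} to see that the middle-range term vanishes (since $\delta_{F_1}(B)\le 2$ for $B\neq 0,1$), and simplify. That is all the paper does.

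Your only trouble is a factorization slip, not a missing idea. You wrote $(2^m-2)(2^m-4)=4(2^{m-1}-1)(2^{m-2}-1)$, but in fact $(2^m-2)(2^m-4)=2(2^{m-1}-1)\cdot 4(2^{m-2}-1)=8(2^{m-1}-1)(2^{m-2}-1)$, so the factor is $8/24=1/3$, not $1/6$; no further ``resolution'' is needed. In particular your suggestion to drop the $B=1$ contribution by hand for even $m$ is unnecessary and slightly misleading: the factored form $2^{\gcd(t-1,n)}(2^{\gcd(t-1,n)}-2)$ already gives $2\cdot(2-2)=0$ when $\gcd(t-1,n)=1$, so the formula self-corrects. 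With the factor fixed, the even case yields $\tfrac{1}{24}(2^n-1)\cdot 8(2^{m-2}-1)(2^{m-1}-1)$ and the odd case yields $\tfrac{1}{24}(2^n-1)\big[8+8(2^{m-2}-1)(2^{m-1}-1)\big]$, matching the statement directly.
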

\begin{proof}

The number of vanishing flats for $F_1(X)=X^{2^m-1}$ over $\F_{2^{2m}}$ follows from Corollary~\ref{R1} and Corollary~\ref{T5}. 
\end{proof}

Blondeau et al.~\cite{BCC} also studied the differential uniformity of $F_2(X) = X^{2^{\frac{n-1}{2}}-1}$  for $n$ odd, and showed that it is a permutation having differential uniformity at most 8. In the following corollary, we determine its FBCT spectrum and show that its $F$-boomerang uniformity is at most~8 for $m \not \equiv 1 \pmod 3$, and is exactly 4 for $m \equiv 1 \pmod 3$.
\begin{cor}\label{T51}
Let  $F_2(X) = X^{2^m-1} \in \F_{2^n}[X]$, where $n=2m+1$. Then for any $b \in \F_{2^n}$, we have:
\begin{enumerate}
 \item[$(1)$] When $m \equiv 1 \pmod 3$,
\begin{equation*}\nabla_{F_2}(1,b)=
\begin{cases}
   \ 2^n &\text{\rm ~if~}  b=0, \text{\rm ~or~} b=1, \\
   \ 8  &\text{\rm ~if~}  b\neq 0,1,  \text{\rm ~and~} b^{2^{m}-2}=1, \\
   \ 4  &\text{\rm ~if~}  b \in \mathcal S_6, \\
   \ 0&\text{\rm ~otherwise.~}
  \end{cases}
 \end{equation*}
\item[$(2)$] When $m \not \equiv 1 \pmod 3$,
\begin{equation*} \nabla_{F_2}(1,b)=
  \begin{cases}
   \ 2^n &\text{\rm ~if~}  b=0, \text{\rm ~or~} b=1, \\
   \ 4  &\text{\rm ~if~}  b \in \mathcal S_6,\\
  \ 0&\text{\rm ~otherwise,~}
  \end{cases}
 \end{equation*}
\end{enumerate}
where $S_6=\{b \in \F_{2^n}^{*}\setminus \{1\},  b^{2^m-2} \neq 1, b^{2^m-1} \neq 1$ \mbox{~and~} $\delta_{F_2}\left ( B\right)=6\}$, with $B=\dfrac{b^{2^m}+b}{b(b+1)}$. Thus, the Feistel boomerang uniformity of $F_2$ is $8$ for $m  \equiv 1 \pmod 3$, and it is $4$ for $m \not\equiv 1 \pmod 3$.
\end{cor}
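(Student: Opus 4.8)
The plan is to obtain this corollary as a direct specialization of Theorem~\ref{thmt} to $t=m$ and $n=2m+1$, so that the argument reduces to two gcd computations, a count of the values of $b$ falling into each branch of the piecewise formula, and an appeal to the known DDT data of $F_2$ from Blondeau et al.~\cite{BCC}. First I would record the two relevant gcds: $\gcd(t,n)=\gcd(m,2m+1)=1$, and, using $2m+1=2(m-1)+3$, $\gcd(t-1,n)=\gcd(m-1,2m+1)=\gcd(m-1,3)$, which is $3$ when $m\equiv 1\pmod 3$ and $1$ otherwise. By \cite[Theorem 1]{BCC} (as quoted in the proof of Theorem~\ref{thmt}) this yields $\delta_{F_2}(0)=2^{\gcd(t,n)}-2=0$ and $\delta_{F_2}(1)=2^{\gcd(t-1,n)}$, i.e.\ $\delta_{F_2}(1)=8$ if $m\equiv 1\pmod 3$ and $\delta_{F_2}(1)=2$ if $m\not\equiv 1\pmod 3$.

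Next I would read off each branch of Theorem~\ref{thmt} in terms of multiplicative conditions on $b$. The branch $B=1$ is equivalent to $b^{2^t-2}=1$, which has $\gcd(2^m-2,2^n-1)=2^{\gcd(m-1,2m+1)}-1$ solutions in $\F_{2^n}^*$; discarding $b=1$, this leaves $6$ values when $m\equiv 1\pmod 3$ and none otherwise, each contributing $\nabla_{F_2}(1,b)=\delta_{F_2}(1)=8$. The branch $B=0$ is equivalent to $b^{2^t-1}=1$, and since $\gcd(2^m-1,2^n-1)=2^{\gcd(m,2m+1)}-1=1$ this forces $b=1$, which is excluded; hence the $B=0$ branch never occurs for $b\neq 0,1$ (consistently, $2^{\gcd(t,n)}-4$ is irrelevant here). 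For all remaining $b$ (i.e.\ $B\notin\{0,1\}$), Theorem~\ref{thmt} gives $\nabla_{F_2}(1,b)=\max(\delta_{F_2}(B)-2,0)$ with $\delta_{F_2}(B)=2^r-2$; the differential-uniformity bound $\delta_{F_2}\le 8$ of Blondeau et al.\ forces $2^r-2\le 8$, hence $r\le 3$ and $\delta_{F_2}(B)\in\{0,2,6\}$, so $\nabla_{F_2}(1,b)$ equals $4$ exactly when $\delta_{F_2}(B)=6$ (that is, $b\in\mathcal S_6$) and $0$ otherwise. Collecting these cases gives the two displayed formulas, and the reduction $\nabla_{F_2}(a,b)=\nabla_{F_2}(1,b/a)$ noted before Theorem~\ref{thmt} extends them to every $a\neq 0$. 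Finally, $\beta_{F_2}=\max_{b\neq 0,1}\nabla_{F_2}(1,b)$ is $8$ when $m\equiv 1\pmod 3$ (realized by the $6$ values with $b^{2^t-2}=1$) and $4$ when $m\not\equiv 1\pmod 3$.

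I expect no serious obstacle: the gcd identities are routine, and the whole piecewise structure is inherited from Theorem~\ref{thmt}. The only point requiring external input is the collapse of the exponent range: a priori Theorem~\ref{thmt} allows $1\le r\le\min(t,n-t+1)=m$ on the generic branch, and it is precisely the bound $\delta_{F_2}\le 8$ from \cite{BCC} that cuts this down to $r\in\{1,2,3\}$ and hence $\nabla_{F_2}(1,b)\in\{0,4\}$ there. One further fact from the differential analysis of $F_2$ in \cite{BCC} is needed only for the exact value $\beta_{F_2}=4$ in the case $m\not\equiv 1\pmod 3$, namely that the value $6$ is actually attained by some $\delta_{F_2}(B)$, i.e.\ $\mathcal S_6\neq\emptyset$ (for $m$ not too small).
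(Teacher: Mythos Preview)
Your proposal is correct and follows essentially the same route as the paper: specialize Theorem~\ref{thmt} with $t=m$, $n=2m+1$, compute the two gcds, translate the branches $B=0,1$ into multiplicative conditions on $b$ and count them, and bound the generic branch using the known DDT information for $F_2$ from~\cite{BCC}. The only cosmetic difference is that the paper cites \cite[Theorem~9]{BCC} directly for the values $\delta_{F_2}(0)=0$, $\delta_{F_2}(1)\in\{2,8\}$, and the attainment of $\delta_{F_2}(B)=6$, whereas you derive the range $\delta_{F_2}(B)\in\{0,2,6\}$ from the bound $\delta_{F_2}\le 8$; your closing remark that $\mathcal S_6\neq\emptyset$ is needed to pin down $\beta_{F_2}=4$ exactly is a welcome piece of care that the paper leaves implicit.
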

\begin{proof}
Notice that $t=m=\frac{n-1}{2}$. Then, $\gcd(t-1,n)=\gcd(m-1,2m+1)=3$ if $m \equiv 1 \pmod 3$ and $1$ if $m \not \equiv 1 \pmod 3$, respectively. Additionally, $\gcd(t,n)=\gcd(m,2m+1)=1,$ for all positive integers $m$. For $B= \dfrac{b^{2^m}+b}{b(b+1)}$ where $b\neq0,1$, observe from~\cite[Theorem 9]{BCC} that $\delta_{F_2}(1)=8$ when $m \equiv 1 \pmod 3$ and otherwise, $\delta_{F_2}(1)=2$. Moreover, we have $\delta_{F_2}(0)=0$ and $\delta_{F_2}(B)=6$ for some $B \in \F_{2^n}^{*}\setminus \{1\}$. This further implies that $\max(\delta_{F_2}(B)-2,0)=4$ when $B \neq 0,1$. Therefore from Theorem~\ref{thmt}, we get:
\begin{enumerate}
 \item If $m \equiv 1 \pmod 3$,
 $$\nabla_{F_2}(1,b)=\left\{\begin{array}{ll}
2^n &\text{if }b=0,1,\\
\delta_{F_2}(B) &\text{if }B=1,\\
\max(\delta_{F_2}(B)-2,0) &\text{if }B\neq1,\\
\ 0&\text{\rm otherwise}
\end{array}\right.=\left\{\begin{array}{ll}
2^n&\text{if }b=0,1,\\
8 &\text{if }B=1,\\
0 &\text{if }B=0,\\
4 &\text{if }B\neq0,1,\\
0&\text{\rm otherwise.~}
\end{array}\right.$$
\item If $m \not \equiv 1 \pmod 3$,
$$\nabla_{F_2}(1,b)=\left\{\begin{array}{ll}
2^n&\text{if }b=0,1,\\
\delta_{F_2}(B) &\text{if }B=1,\\
\max(\delta_{F_2}(B)-2,0)&\text{if }B\neq1,\\
 0&\text{\rm otherwise.~}
\end{array}\right.=\left\{\begin{array}{ll}
2^n&\text{if }b=0,1,\\
2&\text{if }B=1,\\
0&\text{if }B=0,\\
4 &\text{if }B\neq0,1,\\
0&\text{\rm otherwise.~}
\end{array}\right.$$
\end{enumerate}
Since, $\gcd(2^m-1,2^n-1)=1$, $B=0$ is not possible. Moreover, $B=1$ if and only if $b^{2^m-2}=1, b\neq1.$ Clearly, $\gcd(2^m-2,2^n-1)=1$ when $m \not \equiv 1 \pmod 3$, so in that case $B=1$ is not possible. We then need to consider only the case  $m \equiv 1 \pmod 3$; in this case $\gcd(2^m-2,2^n-1)=7$, giving us exactly six values of $b \in \F_{2^n}^{*} \setminus \{1\}$ such that $B=1$.

Next, we have $\nabla_{F_2}(1,b)=4$, when $\delta_{F_2}(B)=6$ and $B\neq0,1$. More precisely, we can say that $\nabla_{F_2}(1,b)=4$ when $b \in \mathcal S_6 =\{b \in \F_{2^n}^{*}\setminus \{1\},  b^{2^t-2} \neq 1, b^{2^t-1} \neq 1$ \mbox{~and~} $\delta_{F_2}\left ( \dfrac{b^{2^t}+b}{b(b+1)}\right)=6\}$.
\end{proof}

In the next corollary, we give the number of vanishing flats of $F_2$.
\begin{cor}\label{R2}
Let  $F_2(X)=X^{2^m-1}$ over $\F_{2^{2m+1}}$, then we have
\begin{equation*}\# \VB_{n,F_2}=
  \begin{cases}
   \  \left(\dfrac{2^{n-2}+7}{6}-\dfrac{K(1)}{8}\right)(2^n-1) &\text{\rm ~if~} m \equiv 1\pmod 3\\
   
   \  \left(\dfrac{2^{n-2}+1}{6}-\dfrac{K(1)}{8}\right)(2^n-1) &\text{\rm ~if~} m \not \equiv 1\pmod 3,
  \end{cases}
\end{equation*}
where $K(1)$ is the Kloosterman sum.
\end{cor}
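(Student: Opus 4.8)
The plan is to feed the Feistel boomerang spectrum of Corollary~\ref{T51} into the vanishing‑flats count of Proposition~\ref{prop:vanish} (equivalently Corollary~\ref{R1}), after converting the only unknown quantity, $\#\mathcal S_6$, into the ordinary differential spectrum of $F_2$. Since $F_2$ is a monomial, $\FB_{F_2}(a,b)=\nabla_{F_2}(1,b/a)$ for $a\neq0$, so Proposition~\ref{prop:vanish} gives
$$24\,\#\VB_{n,F_2}=(2^n-1)\sum_{b\in\F_{2^n}\setminus\{0,1\}}\nabla_{F_2}(1,b),$$
and by Corollary~\ref{T51} the inner sum equals $8\,\#\{b\in\F_{2^n}\setminus\{0,1\}:b^{2^m-2}=1\}+4\,\#\mathcal S_6$. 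The first term is $8\varepsilon$ with $\varepsilon=\gcd(2^m-2,2^n-1)-1=2^{\gcd(m-1,n)}-2$, and since $\gcd(m-1,2m+1)=\gcd(m-1,3)$, this is $6$ when $m\equiv1\pmod3$ and $0$ otherwise.

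The key intermediate step is the identity $\#\mathcal S_6=6\,\omega_6$, where $\omega_6:=\#\{c\in\F_{2^n}:\delta_{F_2}(c)=6\}$ and $\delta_{F_2}(c)=\delta_{F_2}(1,c)$ is the DDT entry. Recall from the proof of Theorem~\ref{thmt} the linearized polynomial $P_c(X)=X^{2^m}+cX^2+(c+1)X=X^{2^m}+X+cX(X+1)$; one always has $0,1\in\ker P_c$, and for $c\notin\F_2$ the solutions of $P_c(X)=0$ with $X\neq0,1$ are exactly those with $\tfrac{X^{2^m}+X}{X(X+1)}=c$, their number being $\delta_{F_2}(c)$ by \cite[Theorem~1]{BCC}. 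Hence the fibre over $c$ of the map $b\mapsto B(b)=\tfrac{b^{2^m}+b}{b(b+1)}$ has cardinality $\delta_{F_2}(c)$ for every $c\notin\F_2$. As $\delta_{F_2}(c)=6$ forces $c\neq0,1$ (indeed $\delta_{F_2}(0)=0$ and $\delta_{F_2}(1)=2^{\gcd(m-1,n)}\in\{2,8\}$), and as the side conditions $b^{2^m-2}\neq1,\ b^{2^m-1}\neq1$ defining $\mathcal S_6$ are just $B(b)\neq1,\ B(b)\neq0$, summing over all $c$ with $\delta_{F_2}(c)=6$ yields $\#\mathcal S_6=6\,\omega_6$. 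Plugging in, $8\varepsilon+4\,\#\mathcal S_6=8\varepsilon+24\,\omega_6$ is a multiple of $24$, so
$$\#\VB_{n,F_2}=(2^n-1)\Bigl(\tfrac{\varepsilon}{3}+\omega_6\Bigr)=\begin{cases}(2^n-1)(\omega_6+2)&\text{if }m\equiv1\pmod3,\\ (2^n-1)\,\omega_6&\text{if }m\not\equiv1\pmod3.\end{cases}$$

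It then remains to evaluate $\omega_6$. I would recover it from the full differential spectrum of $F_2$: by \cite[Theorem~1]{BCC} the entry $\delta_{F_2}(c)$ takes only the values $0,2,6$ for $c\notin\F_2$, so with $\delta_{F_2}(0)=0$ and the known $\delta_{F_2}(1)$ the whole spectrum is pinned down by the moments $\sum_c\delta_{F_2}(c)=2^n$ and $\sum_c\delta_{F_2}(c)^2=\#\{(x,y):F_2(x+1)+F_2(x)=F_2(y+1)+F_2(y)\}$. Since on $x\notin\{0,1\}$ one has $F_2(x+1)+F_2(x)=\tfrac{x^{2^m}+x}{x^2+x}$, the substitution $u=x^2+x$ (a $2$-to-$1$ map onto $\{\,u:\Tr(u)=0\,\}$) turns the second count into a character sum that collapses to the Kloosterman sum $K(1)=\sum_{x\in\F_{2^n}}(-1)^{\Tr(x^{-1}+x)}$; this gives $\omega_6=\tfrac{2^{n-2}+1}{6}-\tfrac{K(1)}{8}$ when $m\not\equiv1\pmod3$ and $\omega_6=\tfrac{2^{n-2}-5}{6}-\tfrac{K(1)}{8}$ when $m\equiv1\pmod3$ (the discrepancy of $1$ coming from $\delta_{F_2}(1)=8$ in the latter case). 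Substituting into the displayed formula produces exactly the claimed expressions. (Alternatively, the differential spectrum of $X^{2^m-1}$ over $\F_{2^{2m+1}}$ is already tabulated in \cite{BCC}, so $\omega_6$ may simply be quoted.) The main obstacle is precisely this last step: obtaining $\sum_c\delta_{F_2}(c)^2$, equivalently $\omega_6$, in closed form — a classical but technical Kloosterman computation in which one counts $\F_{2^n}$‑rational points of a curve of the shape $v^2+v=(\text{rational function of }u)$, carefully isolating the degenerate locus $x\in\{0,1\}$ and tracking how $\gcd(2^m-2,2^n-1)$, hence $\delta_{F_2}(1)$, splits the argument into the cases $m\equiv1\pmod3$ and $m\not\equiv1\pmod3$. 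Everything preceding it — the reduction via Proposition~\ref{prop:vanish}, the input from Corollary~\ref{T51}, and the kernel description of $P_c$ giving $\#\mathcal S_6=6\,\omega_6$ — is bookkeeping.
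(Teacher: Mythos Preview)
Your proposal is correct and follows essentially the same route as the paper: reduce via Proposition~\ref{prop:vanish} and Corollary~\ref{T51} to $(2^n-1)(8\varepsilon+4\#\mathcal S_6)/24$, observe that each $B$ with $\delta_{F_2}(B)=6$ has exactly six preimages $b\in\F_{2^n}^*\setminus\{1\}$ under $b\mapsto\tfrac{b^{2^m}+b}{b(b+1)}$ (because $\ker P_B$ has $8$ elements including $0,1$), giving $\#\mathcal S_6=6\,\omega_6$, and then plug in the known value of $\omega_6$. The only difference is that the paper simply quotes $\omega_6$ from \cite[Theorem~5]{BP} (not \cite{BCC}), whereas you propose re-deriving it via the moment/Kloosterman computation; since that result is already in the literature, there is no need to redo it.
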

\begin{proof} 
From Corollary~\ref{T51}, we have $\nabla_{F_2}(1,b)=4$, when $b \in S_6$. The cardinality of the set $S_6$ is  
\begin{align*}
 \# S_6 & = \# \left\{b \in \F_{2^n}^{*}\setminus \{1\} \,|\, b^{2^t-2} \neq 1, b^{2^t-1} \neq 1 \mbox{~and~}\delta_{F_2}(B)= \delta_{F_2}\left ( \dfrac{b^{2^t}+b}{b(b+1)}\right)=6\right\}\\
 & = \# \{b \in \F_{2^n}^{*}\setminus \{1\}  \,|\,  b^{2^t-2} \neq 1, b^{2^t-1} \neq 1 \\ 
 &\qquad\qquad\qquad\qquad \mbox{~and~} x^{2^m}+Bx^2+(B+1)x \mbox{~has~six~solutions~}\in \F_{2^n}\}.
\end{align*}
It is already known from  \cite[Theorem 5]{BP} that for $m \equiv 1 \pmod 3$, 
$$\# \{B \in \F_{2^n}^{*}\setminus \{1\} \mbox{~and~}\delta_{F_2}(B)=6 \}= \dfrac{2^{n-2}-5}{6}-\dfrac{K(1)}{8}.$$
We thereby only need to count the number of $b \in \F_{2^n}^{*}\setminus \{1\}$ such that $\delta_{F_2}(B)=6$. Now, for a given $B\in \F_{2^n}^{*}\setminus \{1\} $ with $\delta_{F_2}(B)=6$, we can write $$b^{2^m}+ Bb^2 + (B + 1)b = 0.$$ Since B is fixed, the above equation will have eight solutions $b \in \F_{2^n}$ including
$\{0, 1\}$. Thus, for each $B\in \F_{2^n}^{*}\setminus \{1\} $ with $\delta_{F_2}(B)=6$ the above equation will have 6 solutions as $b \in \F_{2^n}^{*}\setminus\{1\}$. This gives us $\# S_6=6\left(\dfrac{2^{n-2}-5}{6}-\dfrac{K(1)}{8}\right)$ when $m \equiv 1 \pmod 3$. Hence, the number of vanishing flats for $F_2(X)=X^{2^m-1}$ over $\F_{2^{2m+1}}$, when $m \equiv 1 \pmod 3$ follows from Corollary~\ref{R1} and Corollary~\ref{T51}. 

In a similar way, when $m \not \equiv 1 \pmod 3$, we get $\# S_6=6\left(\dfrac{2^{n-2}+1}{6}-\dfrac{K(1)}{8}\right)$. This gives us the number of vanishing flats for $F_2(X)=X^{2^m-1}$ over $\F_{2^{2m+1}}$, for $m \not \equiv 1 \pmod 3$.
\end{proof}

In terms of the new concept, the authors of~\cite{Bouk}  show that $F$ is an APN function of $\F_{2^n}$ if and only if the FBCT of $F$ is zero (already known from~\cite{CV95}). Hence, the next best possible value of the FBCT of $F$ over even characteristic of functions with differential uniformity greater than two is four. We now consider a power function $F_3(X)=X^{2^{\frac{n+3}{2}}-1}$, which is a permutation if and only if $n \equiv 0 \pmod 3$. Blondeau et al. in~\cite{BP} conclude that the function $F_3$ is differentially $6$-uniform over $\F_{2^n}$, when $n$ is odd. In the following theorem, we show that the function $F_3$ attains the best possible value of the FBCT, i.e. four, for odd $n$. 

\begin{cor}\label{T52} Let $F_3(X) = X^d$ be a power function of $\F_{2^n}$, where $d = 2^{\frac{n+3}{2}}-1$ and $n$ is odd. Then for $b \in \F_{2^n}$ we have:
\begin{enumerate}
 \item[$(1)$] When $n \equiv 0 \pmod 3$,
\begin{equation*}
\nabla_{F_3}(1,b)=
\begin{cases}
   \ 2^n &\text{\rm ~if~}  b=0, \text{\rm ~or~} b=1, \\
   \ 4  &\text{\rm ~if~}  b\neq 0,1,  \text{\rm ~and~} b^{2^{m}-1}=1, \\
   \ 4  &\text{\rm ~if~}  b \in \mathcal S_6, \\
   \ 0 & \text{\rm ~otherwise.}
  \end{cases}
 \end{equation*}
\item[$(2)$] When $n \not \equiv 0 \pmod 3$,
\begin{equation*} 
\nabla_{F_3}(1,b)=
  \begin{cases}
   \ 2^n &\text{\rm ~if~}  b=0, \text{\rm ~or~} b=1, \\
   \ 4  &\text{\rm ~if~}  b \in \mathcal S_6,\\
  \ 0&\text{\rm ~otherwise,}
  \end{cases}
 \end{equation*}
\end{enumerate}
where $S_6=\left\{b \in \F_{2^n}^{*}\setminus \{1\}\,\Big|\,  b^{2^{\frac{n+3}{2}}-2} \neq 1, b^{2^{\frac{n+3}{2}}-1} \neq 1\mbox{~and~} \delta_{F_3}\left ( \dfrac{b^{2^{\frac{n+3}{2}}}+b}{b(b+1)}\right)=6\right\}$. Thus, the Feistel boomerang uniformity of $F_3$ is $4$.
\end{cor}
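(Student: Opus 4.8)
The plan is to recognize that $F_3(X)=X^{2^t-1}$ with $t=\frac{n+3}{2}$ is an instance of the family treated in Theorem~\ref{thmt}, so the corollary reduces to specializing $\gcd(t-1,n)$, $\gcd(t,n)$, $\min(t,n-t+1)$, and the admissible values of $\delta_{F_3}(B)$ to this $t$ with $n$ odd, and then invoking the reduction $\nabla_{F_3}(a,b)=\nabla_{F_3}(1,b/a)$ from the start of Section~\ref{S4}. Here $B=\frac{b^{2^t}+b}{b(b+1)}$ for $b\neq0,1$, exactly as in Theorem~\ref{thmt}.

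First I would dispose of the two exceptional branches of Theorem~\ref{thmt}. Since $t-1=\frac{n+1}{2}$, a common divisor of $\frac{n+1}{2}$ and $n$ divides $2\cdot\frac{n+1}{2}-n=1$, so $\gcd(t-1,n)=1$, whence $\gcd(2^t-2,2^n-1)=2^{\gcd(t-1,n)}-1=1$; as $B=1\iff b^{2^t-2}=1$, this forces $b=1$, which is excluded, so the branch ``$\delta_{F_3}(B)$ if $B=1$'' never contributes. Next, $t=\frac{n+3}{2}$ gives that a common divisor of $\frac{n+3}{2}$ and $n$ divides $2\cdot\frac{n+3}{2}-n=3$, so $\gcd(t,n)\in\{1,3\}$, and since $n$ odd with $3\mid n$ forces $6\mid n+3$ (hence $3\mid\frac{n+3}{2}$), one gets $\gcd(t,n)=3$ exactly when $n\equiv0\pmod3$ and $\gcd(t,n)=1$ otherwise. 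Because $B=0\iff b^{2^t-1}=1\iff b\in\F_{2^{\gcd(t,n)}}^*$, there are $2^{\gcd(t,n)}-2$ values $b\in\F_{2^n}^*\setminus\{1\}$ with $B=0$: none when $n\not\equiv0\pmod3$, and six when $n\equiv0\pmod3$, in which case $\delta_{F_3}(0)=2^{\gcd(t,n)}-2=6$, so Theorem~\ref{thmt} gives $\nabla_{F_3}(1,b)=\max(6-2,0)=4$ on those six values.

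Next I would handle the generic case $B\notin\{0,1\}$, where Theorem~\ref{thmt} yields $\nabla_{F_3}(1,b)=\max(\delta_{F_3}(B)-2,0)$. By \cite[Theorem 1]{BCC}, $\delta_{F_3}(B)=2^r-2$ with $1\le r\le\min(t,n-t+1)=\frac{n-1}{2}$, while Blondeau et al.~\cite{BP} showed $F_3$ is differentially $6$-uniform for $n$ odd, i.e.\ $\delta_{F_3}=6$; together these force $2^r-2\le6$, hence $r\le3$ and $\delta_{F_3}(B)\in\{0,2,6\}$. Thus $\nabla_{F_3}(1,b)$ equals $0$ when $\delta_{F_3}(B)\in\{0,2\}$ and equals $4$ exactly when $\delta_{F_3}(B)=6$, that is, precisely for $b\in\mathcal S_6$. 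Collecting the three regimes $b\in\{0,1\}$, $B=0$, and $B\notin\{0,1\}$ gives the two displayed formulas, and passing from $(1,b)$ to general $(a,b)$ via $\nabla_{F_3}(a,b)=\nabla_{F_3}(1,b/a)$ finishes the computation of the spectrum.

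For the Feistel boomerang uniformity I would note that the only nonzero value occurring at an admissible position ($a\neq0$, $b\neq0$, $a\neq b$) is $4$, so it suffices to check it is attained. When $n\equiv0\pmod3$ this is immediate from the six $b$'s with $B=0$. When $n\not\equiv0\pmod3$, I would argue as in the proof of Corollary~\ref{R2}: since $\delta_{F_3}=6$ exactly, there is some $B_0\neq1$ with $\delta_{F_3}(B_0)=6$, and then $P_{B_0}(X)=X^{2^t}+B_0X^2+(B_0+1)X$ has $\delta_{F_3}(B_0)+2=8$ roots in $\F_{2^n}$, of which the $6$ roots distinct from $0$ and $1$ lie in $\mathcal S_6$; hence $\mathcal S_6\neq\emptyset$ and $\beta_{F_3}=4$. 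The only external inputs are the differential spectrum results of~\cite{BCC,BP}; everything else is a bookkeeping specialization of Theorem~\ref{thmt}, and the only point requiring care is that the map $b\mapsto B$ is many-to-one --- which is exactly why the count must be expressed through the set $\mathcal S_6$ rather than through the raw differential spectrum of $F_3$, and why nonemptiness of $\mathcal S_6$ (for $n\not\equiv0\pmod3$) has to be argued separately.
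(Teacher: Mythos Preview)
Your proposal is correct and follows essentially the same route as the paper's proof: both specialize Theorem~\ref{thmt} to $t=\frac{n+3}{2}$, compute $\gcd(t-1,n)=1$ and $\gcd(t,n)\in\{1,3\}$ according to $n\bmod 3$, observe that the branch $B=1$ is vacuous while $B=0$ contributes six values of $b$ exactly when $3\mid n$, and use the differential-spectrum results of~\cite{BCC,BP} to bound $\delta_{F_3}(B)\le 6$ so that the generic branch gives $4$ precisely on $\mathcal S_6$. Your write-up is in fact slightly more careful than the paper's in two places---you justify $\delta_{F_3}(B)\in\{0,2,6\}$ explicitly via $r\le 3$, and you supply the nonemptiness argument for $\mathcal S_6$ (via the $8$ roots of $P_{B_0}$) needed to conclude $\beta_{F_3}=4$ rather than $\beta_{F_3}\le 4$ when $n\not\equiv 0\pmod 3$, which the paper leaves implicit.
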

\begin{proof}
Clearly, $t=\frac{n+3}{2}$ and $\gcd(t,n)=3$ if $n\equiv 0 \pmod 3$ and $1$ if $n \not \equiv 0 \pmod 3$, respectively. Moreover, $\gcd(t-1,n)=1$ for all positive integers $n$. Let $B= \dfrac{b^{2^t}+b}{b(b+1)}$ when $b\neq0,1$. Then, from Theorem 9 and Corollary 2 in ~\cite{BCC}, we have $\delta_{F_3}(0)=6$ when $n \equiv 0 \pmod 3$ and $\delta_{F_3}(0)=0$ when $n \not \equiv 0 \pmod 3$. Also, we have $\delta_{F_3}(1)=2$ for every positive integer $n$ and $\delta_{F_3}(B)=6=2^3-2$ for some $B \in \F_{2^n}^{*}\setminus \{1\}$. Therefore, $\max(\delta_{F_3}(B)-2,0)=4$ when $B \neq 0,1$. Thus, from Theorem~\ref{thmt}, we get the following:
\begin{enumerate}
 \item If $n \equiv 0 \pmod 3$,
 $$\nabla_{F_3}(1,b)=\left\{\begin{array}{ll}
2^n&\text{ if }b=0,1,\\
\delta_{F_3}(B)&\text{ if }B=1,\\
\max(\delta_{F_3}(B)-2,0) &\text{ if }B\neq1,\\
0&\text{ \rm otherwise.~}
\end{array}\right.=\left\{\begin{array}{ll}
2^n&\text{ if }b=0,1,\\
0 &\text{ if }B=1,\\
4 &\text{ if }B=0,\\
4 &\text{ if }B\neq0,1,\\
0&\text{ \rm otherwise.~}
\end{array}\right.$$
\item If $n \not \equiv 0 \pmod 3$,
$$\nabla_{F_3}(1,b)=\left\{\begin{array}{ll}
2^n &\text{ if }b=0,1,\\
\delta_{F_3}(B) &\text{ if }B=1,\\
\max(\delta_{F_3}(B)-2,0)&\text{ if }B\neq1,\\
0&\text{ \rm otherwise.~}
\end{array}\right.=\left\{\begin{array}{ll}
2^n&\text{ if }b=0,1,\\
0&\text{ if }B=1,\\
0&\text{ if }B=0,\\
4 &\text{ if }B\neq0,1,\\
0&\text{ \rm otherwise.}
\end{array}\right.$$
\end{enumerate}
It is obvious that $B=1$ if and only if $b^{2^t-2}=1, b\neq1$. Since, $\gcd(2^m-2,2^n-1)=1$, there does not exist $b \in \F_{2^n}^{*} \setminus \{1\}$ such that $B=1$. Moreover, we have $B=0$ if and only if $b^{2^t-1}=1,b\neq1.$ Since, $\gcd(2^t-1,2^n-1)=1$ when $n \not \equiv 0 \pmod 3$, so in that case $B=0$ is not possible. Thus, we are left with the case when $n \equiv 0 \pmod 3$; in this case $\gcd(2^t-1,2^n-1)=7$, giving us exactly six values of $b \in \F_{2^n}^{*} \setminus \{1\}$ such that $B=0$.

Next, we have $\nabla_{F_3}(1,b)=4$, when $\delta_{F_3}(B)=6, B\neq0,1$, or equivalently, when $b \in \mathcal S_6 =\left\{b \in \F_{2^n}^{*}\setminus \{1\},  b^{2^t-2} \neq 1, b^{2^t-1} \neq 1 \mbox{~and~} \delta_{F_3}\left ( \dfrac{b^{2^t}+b}{b(b+1)}\right)=6\right\}$.
\end{proof}

We next determine the number of vanishing flats of $F_3$.
\begin{cor}
Let  $F_3(X)=X^{2^t-1}$ over $\F_{2^{n}}$, where $t=\frac{n+3}{2}$. Then we have
$$
\# \VB_{n,F_3}=\dfrac{2^{n-2}+1}{6}-\dfrac{K(1)}{8},
$$
where $K(1)$ is the Kloosterman sum.
\end{cor}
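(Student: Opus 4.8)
The plan is to obtain the count in one stroke from the Feistel boomerang spectrum of $F_3$ computed in Corollary~\ref{T52}, proceeding exactly as in the proof of the $F_2$ count in Corollary~\ref{R2}. By Proposition~\ref{prop:vanish} together with the normalization $\FB_{F_3}(a,b)=\nabla_{F_3}(1,b/a)$ already exploited in Corollary~\ref{R1}, the number $\#\VB_{n,F_3}$ is completely determined by how many $b\in\F_{2^n}^{*}\setminus\{1\}$ realize each nontrivial value of $\nabla_{F_3}(1,b)$. Reading off Corollary~\ref{T52}, the only nontrivial value is $4$: it is attained on the set $\mathcal S_6$ and, in the subcase $n\equiv0\pmod3$, additionally on the six elements $b$ with $b^{2^{t}-1}=1$ (the $B=0$ class, for which $\delta_{F_3}(0)=6$). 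Thus everything reduces to evaluating $\#\mathcal S_6$ and, when $n\equiv0\pmod3$, adding the $B=0$ contribution.

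First I would count $\mathcal S_6$ by inverting the map $b\mapsto B=\frac{b^{2^{t}}+b}{b(b+1)}$. Since $P_B(X)=X^{2^{t}}+BX^{2}+(B+1)X$ is additive, its root set is an $\F_2$-subspace whose size is $\delta_{F_3}(B)+2$ by the argument in Theorem~\ref{thmt}; moreover $0$ and $1$ are always roots, because $P_B(1)=1+B+(B+1)=0$ in characteristic $2$. Hence every $B\neq0,1$ with $\delta_{F_3}(B)=6$ has exactly $8-2=6$ preimages $b\in\F_{2^n}^{*}\setminus\{1\}$, so $\#\mathcal S_6=6\,N_6$ with $N_6:=\#\{B\in\F_{2^n}^{*}\setminus\{1\}:\delta_{F_3}(B)=6\}$. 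The value of $N_6$ is then imported from the differential spectrum of $F_3$ established by Blondeau et al.~\cite{BP}, the same source used for $F_2$; this is where the Kloosterman sum enters, yielding $N_6=\frac{2^{n-2}-5}{6}-\frac{K(1)}{8}$ when $n\equiv0\pmod3$ and $N_6=\frac{2^{n-2}+1}{6}-\frac{K(1)}{8}$ otherwise.

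The final step is the agreeable observation that the two congruence cases fuse into a single formula: when $n\equiv0\pmod3$ the six extra $B=0$ solutions contribute exactly enough to offset the fact that $N_6$ is then smaller by one, so that the weighted count $\sum_{b}\nabla_{F_3}(1,b)$ coincides in both cases, and the corollary then follows from Corollary~\ref{R1} by the same bookkeeping as in the proof of Corollary~\ref{R2}. The main obstacle is the middle step, namely pinning down $N_6$, i.e.\ the exact number of $B$ for which the additive polynomial $P_B$ has an eight-element kernel; this rests on the Kloosterman-sum count of~\cite{BP}, and one must take care to treat the $B=0$ class separately, since it carries $\delta_{F_3}(0)=6$ only when $n\equiv0\pmod3$, so that its contribution is neither omitted nor double-counted. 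Once $N_6$ and this bookkeeping are in place, the case-collapse is a one-line arithmetic check and the corollary follows.
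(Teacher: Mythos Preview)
Your proposal is correct and follows essentially the same route as the paper's own proof. The paper likewise reads off Corollary~\ref{T52}, invokes the preimage count $\#\mathcal S_6=6N_6$ (via the identification of $\#\mathcal S_6$ for $F_3$ with that for $F_2$, using~\cite{BCC}, and then the Kloosterman-sum value from Corollary~\ref{R2}/\cite{BP}), and finishes with Corollary~\ref{R1}; your explicit verification that the $B=0$ contribution in the $n\equiv0\pmod3$ case exactly compensates the drop in $N_6$, yielding a single formula, is precisely the collapse the paper leaves implicit.
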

\begin{proof} 
It is known from Corollary 2 in \cite{BCC} that $\#S_6$ for $F_3(X)=X^{2^{\frac{n+3}{2}}-1}$ is same as $\#S_6$ for $F_{2}(X)=X^{2^m-1}$. Thus, from Corollary~\ref{R2} we get, $\# S_6=6\left(\dfrac{2^{n-2}-5}{6}-\dfrac{K(1)}{8}\right)$ when $n \equiv 0 \pmod 3$ and $\# S_6=6\left(\dfrac{2^{n-2}+1}{6}-\dfrac{K(1)}{8}\right)$ when $n \not \equiv 0 \pmod 3$, respectively. Hence, using Corollary~\ref{R1} and Corollary~\ref{T52}, we get the number of vanishing flats for $F_3(X)=X^{2^t-1}$ over $\F_{2^{n}}$.
\end{proof}

\section{The second-order zero differential spectrum for some differentially low uniform permutation
polynomials in even characteristic}
\label{S5}
Next, we shall consider two differentially low uniform permutation polynomials over the finite field $\F_{2^n}$. The first one was introduced by Tan et al.~\cite{TQTL} who showed that when $n$ is even, the permutation polynomial $F(X) = X^{-1}+\Trn\left(\frac{X^2}{X+1}\right)$ is differentially $4$-uniform. Further, Hasan et al. in~\cite{HPS} studied the $c$-differential and boomerang uniformities of $F(X)$. 

Below, we shall be using the following result of~\cite{Leonard-Williams}.
Let $f$ be the polynomial $f(X)=X^{4}+a_{2}X^{2}+a_{1}X+a_{0}$ with $a_{0}a_{1} \neq 0$ and $g(Y)=Y^{3}+a_{2}Y+a_{1}$ be the companion cubic with the roots $r_{1},\,r_{2},\,r_{3}$. When the roots exist in $\mathbb{F}_{2^{n}}$, we set $w_{i}=a_{0}r_{i}^{2}/a_{1}^{2}$. We write a polynomial $h$ as $h=(1,2,3,\ldots)$ over some field to mean that it decomposes as a product of degree $1,2,3,\ldots$, over that field.
\begin{lem}[\textup{\cite{Leonard-Williams}}]
\label{quartic_equation}
Let $f(X)=X^{4}+a_{2}X^{2}+a_{1}X+a_{0}\in \mathbb{F}_{2^{n}}[X]$ with $a_{0}a_{1} \neq 0$. The factorization of $f(X)$ over $\mathbb{F}_{2^{n}}$ is characterized as
follows:
\begin{itemize}
\item[$(i)$] $f=(1,1,1,1) \Leftrightarrow g=(1,1,1)$ and $\Trn\left(w_{1}\right)=\Trn\left(w_{2}\right)=\Trn\left(w_{3}\right)=0$;

\item[$(ii)$] $f=(2,2) \Leftrightarrow g=(1,1,1)$ and $\Trn\left(w_{1}\right)=0, \Trn\left(w_{2}\right)=\Trn\left(w_{3}\right)=1$;

\item[$(iii)$] $f=(1,3) \Leftrightarrow g=(3)$;

\item[$(iv)$] $f=(1,1,2) \Leftrightarrow g=(1,2)$ and $\Trn\left(w_{1}\right)=0$;

\item[$(v)$] $f=(4) \Leftrightarrow g=(1,2)$ and $\Trn\left(w_{1}\right)=1$.
\end{itemize}
\end{lem}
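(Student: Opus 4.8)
\emph{Proof proposal.} The plan is to carry out the classical resolvent-cubic analysis in characteristic $2$, converting everything into a count of factorizations of $f$ into two monic quadratics. First I would record two structural facts: since $f'(X)=a_1\neq0$, the polynomial $f$ is separable, so it has four distinct roots in a splitting field; and since the $X^3$-coefficient of $f$ is $0$, any factorization $f=(X^2+sX+t)(X^2+uX+v)$ into monic quadratics over $\F_{2^n}$ must have $u=s$. Expanding $(X^2+sX+t)(X^2+sX+v)$ and comparing with $X^4+a_2X^2+a_1X+a_0$ gives $t+v=a_2+s^2$, $s(t+v)=a_1$, and $tv=a_0$; the first two equations yield $s^3+a_2s+a_1=0$, i.e.\ $s$ is a root $r_i$ of the companion cubic $g$. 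Using $a_2+r_i^2=a_1/r_i$, the pair $\{t,v\}$ is then the root set of $Z^2+(a_1/r_i)Z+a_0$, and the substitution $Z=(a_1/r_i)W$ turns this into $W^2+W+w_i=0$ with $w_i=a_0r_i^2/a_1^2$. Hence $f$ has a factorization into two monic quadratics ``along $r_i$'' exactly when $r_i\in\F_{2^n}$ and $\Trn(w_i)=0$; moreover distinct roots $r_i$ yield distinct factorizations (the common linear coefficient of the two quadratic factors recovers $r_i$), and a given admissible $r_i$ yields exactly one unordered factorization.

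Next I would translate the number of such factorizations into the factorization type. Because $f$ is separable of degree $4$, a direct check shows that $f$ has exactly $3$ factorizations into two monic quadratics over $\F_{2^n}$ when $f=(1,1,1,1)$, exactly $1$ when $f=(2,2)$ or $f=(1,1,2)$, and $0$ when $f=(1,3)$ or $f=(4)$. Combined with the previous paragraph, the type of $f$ is then determined by how many roots $g$ has in $\F_{2^n}$ together with the traces $\Trn(w_i)$ at those roots. The crucial auxiliary identity is $r_1+r_2+r_3=0$ (the $Y^2$-coefficient of $g$ vanishes), which gives $w_1+w_2+w_3=(a_0/a_1^2)(r_1+r_2+r_3)^2=0$, hence $\Trn(w_1)+\Trn(w_2)+\Trn(w_3)=0$ whenever all three $r_i$ lie in $\F_{2^n}$. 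Thus when $g=(1,1,1)$ the triple of traces is either $(0,0,0)$, giving $3$ factorizations and $f=(1,1,1,1)$ (case (i)), or a permutation of $(0,1,1)$, giving $1$ factorization and $f=(2,2)$ (case (ii)). When $g=(1,2)$ there is a unique $r_1\in\F_{2^n}$, and $\Trn(w_1)=0$ forces one factorization, so $f=(1,1,2)$ (case (iv)), while $\Trn(w_1)=1$ forces none, so $f=(4)$ (case (v)).

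The one remaining point, which I expect to be the real obstacle, is the case $g=(3)$ (case (iii)): here the previous arguments only show $f$ has no quadratic factorization, so $f=(1,3)$ or $f=(4)$, and the trace criterion is vacuous because no $r_i$ lies in $\F_{2^n}$. To separate the two possibilities I would argue Galois-theoretically over the finite field: the factorization type of $f$ (resp.\ of $g$) is the cycle type of the Frobenius automorphism acting on the four roots of $f$ (resp.\ on the three ways of partitioning those roots into two pairs), and these two actions are related by the canonical surjection $S_4\to S_3$ with kernel the Klein four-group. Since a $4$-cycle in $S_4$ maps to a transposition in $S_3$ while a $3$-cycle maps to a $3$-cycle, the hypothesis $g=(3)$ forces the Frobenius to act as a $3$-cycle on the roots of $f$, i.e.\ $f=(1,3)$, ruling out $f=(4)$. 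The same correspondence re-derives the compatibility constraints used above (for instance $f=(2,2)$ can only occur with $g=(1,1,1)$, and $f=(4)$ only with $g=(1,2)$), providing a consistency check on the whole case split. Assembling the five sub-cases gives the stated equivalences.
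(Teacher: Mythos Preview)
The paper does not prove this lemma at all: it is quoted verbatim as a result of Leonard--Williams \cite{Leonard-Williams}, introduced by ``Below, we shall be using the following result of~\cite{Leonard-Williams},'' and then simply applied in the proof of Theorem~\ref{T6}. So there is no proof in the paper to compare your proposal against.

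That said, your argument is sound and is essentially the classical resolvent-cubic proof in characteristic $2$. The reduction of quadratic factorizations of $f$ to roots $r_i$ of $g$ together with the trace condition $\Trn(w_i)=0$ is correct, as is the count of factorizations into two monic quadratics for each factorization type of $f$. The identity $w_1+w_2+w_3=0$ (forcing the trace triple to be $(0,0,0)$ or a permutation of $(0,1,1)$ when $g=(1,1,1)$) is exactly what is needed. Your Galois-theoretic separation of the remaining ambiguities via the surjection $S_4\to S_3$ with kernel the Klein four-group is not only needed for the $g=(3)$ case, but is what pins down which type of $g$ can accompany each type of $f$: in particular it shows $f=(1,1,2)$ forces $g=(1,2)$ and $f=(2,2)$ forces $g=(1,1,1)$, so the ``1 factorization'' cases are unambiguous. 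You might state that compatibility explicitly before the case split rather than invoking it only as a consistency check at the end, but the logic is complete.
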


In the following theorem, we give the FBCT of $F(X) = X^{-1}+\Trn\left(\frac{X^2}{X+1}\right)$.

\begin{thm}\label{T6}
 Let $F(X) = X^{-1}+\Trn\left(\dfrac{X^2}{X+1}\right)$ be a function over $\F_{2^n}$, where $n$ is even. Then for $a,b\in \F_{2^n}$,
\begin{equation*} \nabla_F(a,b)=
\begin{cases}
4 &~\mbox{~if~} \Trn(b^{-1})=\Trn(b^{-1}\omega)=\Trn(b^{-1}\omega^2)=0, \Trn\left(\frac{b^3}{b^3+1}\right)=\Trn(b^3)=1,\\
&~\mbox{~or~} \Trn\left(\frac{b^3}{b^3+1}\right)=\Trn(b^3)=0 \mbox{~and~} \Trn(b^{-1})=1,\\
&~\mbox{~or~} \Trn\left(\frac{b^3}{b^3+1}\right)=\Trn(b^3)=0 \mbox{~and~} \Trn(b^{-1}\omega)=1,\\
&~\mbox{~or~} \Trn\left(\frac{b^3}{b^3+1}\right)=\Trn(b^3)=0 \mbox{~and~} \Trn(b^{-1}\omega^2)=1,\\
&~\mbox{~or~} \Trn(w_1)=\Trn(w_2)=\Trn(w_3)=\Trn\left(\frac{ab(a+b)}{a^2+b^2+ab(a+b)+ab+1}\right)=1,\\
8 &~\mbox{~if~} \Trn\left(\frac{b^3}{b^3+1}\right)=\Trn(b^{-1})=\Trn(b^{-1}\omega)=\Trn(b^{-1}\omega^2)=0, \Trn(b^3)=1,\\
2^n &~\mbox{~if~} ab=0 \mbox{~or~} a=b,\\
0 &~\mbox{~otherwise,~}\\
\end{cases}
\end{equation*}
where $\omega$ is a cube root of unity, $w_1=\frac{a}{b(a+b)}, w_2=\frac{b}{a(a+b)}$ and $w_3=\frac{a+b}{ab}$. 
Moreover, $F$ is second-order zero differential $8$-uniform (that is, the Feistel boomerang uniformity of $F$ is~$8$).
\end{thm}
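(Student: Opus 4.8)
The strategy is to separate $F$ into its inverse part and its trace part, use a telescoping identity to annihilate the mixed second difference of the trace part, and reduce the ``generic'' solutions to the factorization of one quartic by means of Lemma~\ref{quartic_equation}. First I would write $F=G+H$ with $G(X)=X^{-1}$ and $H(X)=\Trn\!\left(\frac{X^2}{X+1}\right)$. Since $n$ is even we have $\Trn(1)=0$, and from $\frac{X^2}{X+1}=X+1+\frac1{X+1}$ the mixed second difference of $H$ in the directions $a,b$ collapses to $\Trn\big(T(X+1)\big)$, where
\[
T(Z):=\frac1Z+\frac1{Z+a}+\frac1{Z+b}+\frac1{Z+a+b}=\frac{ab(a+b)}{(Z^2+aZ)(Z^2+aZ+ab+b^2)}
\]
is exactly the inverse-function expression appearing in Lemmas~\ref{L1}--\ref{L2}. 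Hence for $a,b\neq0$ with $a\neq b$ the equation defining $\nabla_F(a,b)$ becomes $T(X)+\Trn\big(T(X+1)\big)=0$, in which $\Trn(T(X+1))\in\{0,1\}$; and $ab=0$ or $a=b$ gives $\nabla_F(a,b)=2^n$ immediately.

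Next I would treat the generic $X$, i.e.\ $X\notin\{0,a,b,a+b\}\cup\{1,a+1,b+1,a+b+1\}$, where $T(X)$ and $T(X+1)$ are honest field elements. Since the numerator $ab(a+b)$ is nonzero, $T(X)\neq0$, so a generic solution forces $\Trn(T(X+1))=1$ and $T(X)=1$, the latter being the quartic
\[
f(X)=X^4+(a^2+ab+b^2)X^2+ab(a+b)X+ab(a+b)=0,\qquad a_0a_1=(ab(a+b))^2\neq0.
\]
I would then invoke Lemma~\ref{quartic_equation}: a direct check shows that the companion cubic $g(Y)=Y^3+(a^2+ab+b^2)Y+ab(a+b)$ has the three roots $a,\,b,\,a+b$, so $g=(1,1,1)$ always, and $w_i=a_0r_i^2/a_1^2=r_i^2/(ab(a+b))$ equals precisely $w_1=\frac{a}{b(a+b)}$, $w_2=\frac{b}{a(a+b)}$, $w_3=\frac{a+b}{ab}$. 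Therefore $f$ has four roots in $\F_{2^n}$ exactly when $\Trn(w_1)=\Trn(w_2)=\Trn(w_3)=0$, and none otherwise (case (ii) of the lemma; note $w_1+w_2+w_3=0$ forces an even number of those traces to equal $1$). The decisive observation is that, using $T(X)=1$, one computes $T(X+1)=\frac{ab(a+b)}{a^2+ab+b^2+1}$, the same value at every root of $f$; so either all four roots satisfy $T(X)+\Trn(T(X+1))=0$ or none do, according to that single trace. Moreover $f(1)=f(a+1)=f(b+1)=f(a+b+1)=1+a^2+ab+b^2$, so the roots of $f$ move onto the second exceptional flat exactly when $a^2+ab+b^2=1$; this borderline case is absorbed into the next step.

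Then I would handle the two exceptional $2$-flats. On $\{0,a,b,a+b\}$ the boomerang equation reduces to the single relation $F(0)+F(a)+F(b)+F(a+b)=0$, which after the same simplifications reads $\frac{a^2+ab+b^2}{ab(a+b)}+\Trn(T(1))=0$ with $T(1)=\frac{ab(a+b)}{a^2+b^2+ab(a+b)+ab+1}$; this is where the trace term of the statement appears, and the relation can hold only if $\frac{a^2+ab+b^2}{ab(a+b)}\in\{0,1\}$. Likewise $\{1,a+1,b+1,a+b+1\}$ gives $T(1)+\Trn\!\left(\frac{a^2+ab+b^2}{ab(a+b)}\right)=0$, which forces $T(1)\in\{0,1\}$, hence $a^2+ab+b^2=1$, together with one further trace condition. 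Finally, in the sub-case $a^2+ab+b^2=0$, i.e.\ $a\in\{\omega b,\omega^2 b\}$ for a primitive cube root of unity $\omega$, substitution gives $ab(a+b)=b^3$ and $\{w_1,w_2,w_3\}=\{b^{-1},b^{-1}\omega,b^{-1}\omega^2\}$, so every trace condition rewrites in terms of $\Trn(b^{-1}\omega^i)$, $\Trn(b^3)$ and $\Trn\!\left(\frac{b^3}{b^3+1}\right)$, reproducing the cases recorded in the theorem.

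It then remains to assemble $\nabla_F(a,b)=N_{\mathrm I}+N_{\mathrm{II}}+N_{\mathrm g}$, where $N_{\mathrm I},N_{\mathrm{II}},N_{\mathrm g}\in\{0,4\}$ count the solutions on the first flat, the second flat, and in the generic region (with $N_{\mathrm g}=0$ whenever $a^2+ab+b^2=1$). One checks that $N_{\mathrm{II}}=4$ forces $a^2+ab+b^2=1$, hence $N_{\mathrm g}=0$, and that $N_{\mathrm I}=4$ and $N_{\mathrm{II}}=4$ cannot both hold (if $a^2+ab+b^2=ab(a+b)=1$ then $T(1)=1$ and $\Trn(1)=0$), so the three contributions never all equal $4$; thus $\nabla_F(a,b)\in\{0,4,8,2^n\}$, the value $8$ arising exactly when the first flat and the generic solutions both contribute, which is the last listed case. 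A short existence argument (that the defining set of that case is nonempty for $n$ even and not too small, by a Kloosterman-type count as in the preceding corollaries) then yields $\beta_F=\nabla_F=8$. The main obstacle is the bookkeeping: keeping the three solution pools disjoint, matching each derived trace to the precise form appearing in the statement, and treating carefully the borderline values $a^2+ab+b^2\in\{0,1\}$ where the quartic's roots migrate onto an exceptional flat; the one clean step that makes everything go through is the constancy of $T(X+1)$ across the root set of $f$, which reduces an $X$-dependent trace condition to a single scalar trace.
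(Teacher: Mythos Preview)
Your skeleton is the same as the paper's: split on whether the trace perturbation is $0$ or $1$, quote Lemma~\ref{L1} for the inverse piece, and feed the remaining quartic $f(X)=X^4+(a^2+ab+b^2)X^2+ab(a+b)X+ab(a+b)$ into Lemma~\ref{quartic_equation}, whose companion cubic factors as $(Y+a)(Y+b)(Y+a+b)$. Your telescoping $H(X)=\Trn(X+1)+\Trn\!\big(\tfrac1{X+1}\big)$ and the resulting reduction of the second difference of $H$ to $\Trn(T(X+1))$ is a genuine simplification over the paper, which keeps the raw expression $\Trn\big(\tfrac{X^2}{X+1}+\cdots\big)$ throughout. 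Two further points where you go beyond the paper: (i) the constancy of $T(X+1)=\tfrac{ab(a+b)}{1+a^2+ab+b^2}$ along the root set of $f$, which actually supplies the trace condition at the generic roots in Case~B (the paper never evaluates that trace there; it only says ``at most four''); and (ii) your explicit treatment of the second exceptional flat $\{1,a+1,b+1,a+b+1\}$, which the paper does not isolate.

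Where you should be more careful is the claim that the outcome ``reproduces the cases recorded in the theorem.'' It does not, literally. For the last $\nabla_F(a,b)=4$ case the statement records $\Trn(w_1)=\Trn(w_2)=\Trn(w_3)=1$, which is impossible since $w_1+w_2+w_3=0$; Lemma~\ref{quartic_equation} gives four roots precisely when all three traces are $0$, as you correctly derive. Likewise, the trace argument $\tfrac{ab(a+b)}{a^2+b^2+ab(a+b)+ab+1}$ in the statement is $T(1)$, the value the paper computes at the flat $\{0,a,b,a+b\}$ --- and then proves that this flat cannot contribute in Case~B. Your computation gives the different (and correct) value $T(X+1)=\tfrac{ab(a+b)}{1+a^2+ab+b^2}$ at the genuine quartic roots. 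So the discrepancy is real and lies in the printed statement, not in your argument; flag it rather than calling it bookkeeping. Finally, your existence step for $\beta_F=8$ is only sketched (``Kloosterman-type count''): note that the paper does not supply this either, so if you want the equality $\nabla_F=8$ rather than $\nabla_F\le 8$ you will need to exhibit at least one $b$ meeting the five trace conditions of the ``$8$'' case.
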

\begin{proof}
 For $a,b \in \F_{2^n}$, we consider the equation: 
\begin{equation}\label{eqF}
F(X + a + b) +F (X + b) +F(X + a) + F (X) = 0.
\end{equation}
If $ab = 0$ and $a=b$, then $\nabla_F(a, b) = 2^n$.
If $ab \neq 0$, $a\neq b$, then we can write it as,
\begin{align*}
& (X + a + b)^{-1} +(X + b)^{-1} +(X + a)^{-1} + X^{-1} \\ 
& \qquad \qquad + \Trn\left(\dfrac{X^2}{X+1} + \dfrac{X^2+a^2}{X+a+1} +\dfrac{X^2+b^2}{X+b+1} +\dfrac{X^2+a^2+b^2}{X+a+b+1} \right)=0.
\end{align*}

\textbf{Case A.} $a^2+ab+b^2=0$ (i.e. $a\in\{b\omega,b(\omega+1)\})$. We can distinguish two subcases.

\textbf{Subcase a.} Let $\Trn\left(\dfrac{X^2}{X+1} + \dfrac{X^2+a^2}{X+a+1} +\dfrac{X^2+b^2}{X+b+1} +\dfrac{X^2+a^2+b^2}{X+a+b+1} \right)=0.$ Then 
$\nabla_F(a,b)=\nabla_G(a,b)$, where $G(X)=X^{-1}$, and therefore, from Lemma~\ref{L1},  there are a priori 4 solutions, namely $X=0,a,b$ and $a+b$.  We have that, for all these, 
$$\Trn\left(\dfrac{X^2}{X+1} + \dfrac{X^2+a^2}{X+a+1} +\dfrac{X^2+b^2}{X+b+1} +\dfrac{X^2+a^2+b^2}{X+a+b+1} \right)=\Trn\left( \dfrac{b^3}{b^3+1} \right).$$

Therefore, in this subcase, there are four solutions, namely $X=0,a,b$ and $a+b$, if  $\Trn\left( \dfrac{b^3}{b^3+1} \right)=0$, and no solutions  if  $\Trn\left( \dfrac{b^3}{b^3+1} \right)=1$.

\textbf{Subcase b.} Let $\Trn\left(\dfrac{X^2}{X+1} + \dfrac{X^2+a^2}{X+a+1} +\dfrac{X^2+b^2}{X+b+1} +\dfrac{X^2+a^2+b^2}{X+a+b+1} \right)=1.$ Then 
\begin{equation*} 
 (X + a + b)^{-1} +(X + b)^{-1} +(X + a)^{-1} + X^{-1}=1.
\end{equation*}

If $X=0,a,b$ or $a+b$, this renders
$$\frac{1}{a}+\frac{1}{b}+\frac{1}{a+b}=1,$$
or, equivalently,
$$a^2+ab+b^2=ab(a+b).$$
However, $a^2+ab+b^2=0$, gives a contradiction. Assume, then, $X\not \in \{0,a,b,a+b\}$. In this case, we obtain $$\dfrac{1}{X}+\dfrac{1}{X+a}+\dfrac{1}{X+b}+\dfrac{1}{X+a+b}=1,$$
which is the same as,
\begin{equation*}
 X^4+(a^2+b^2+ab)X^2+ab(a+b)X+ab(a+b)=0.
\end{equation*}
Since $a\in\{b\omega,b(\omega+1)\}$, this gives 
\begin{equation}\label{ee1}
 X^4+b^3X+b^3=0.
\end{equation}
Looking at the roots of the resolvent cubic $Y^3+b^3$, we see that there are three different roots $Y_1=b,\,Y_2=b\omega,\,Y_3=b(\omega+1)$. Equation \eqref{ee1} has then four solutions (different than $0,a,b,a+b$) if $\Trn(b^{-1})=\Trn(b^{-1}\omega)=\Trn(b^{-1}\omega^2)=0$, and no solutions if $\Trn(b^{-1})=1$ or $\Trn(b^{-1}\omega)=1$ or $\Trn(b^{-1}\omega^2)=1$. The solutions, if any, have to fullfill the condition $\Trn\left(\dfrac{X^2}{X+1} + \dfrac{X^2+a^2}{X+a+1} +\dfrac{X^2+b^2}{X+b+1} +\dfrac{X^2+a^2+b^2}{X+a+b+1} \right)=\Trn(b^3)=1.$

Summarizing these two subcases, then, we see that Equation~\eqref{eqF} has 8 solutions if $\Trn\left( \dfrac{b^3}{b^3+1} \right)=\Trn(b^{-1})=\Trn(b^{-1}\omega)=\Trn(b^{-1}\omega^2)=0$ and $\Trn(b^3)=1$, 4 solutions if $\Trn(b^{-1})=\Trn(b^{-1}\omega)=\Trn(b^{-1}\omega^2)=0$ and $\Trn\left( \dfrac{b^3}{b^3+1} \right)=\Trn(b^3)=1$ or $\Trn\left( \dfrac{b^3}{b^3+1} \right)=0$, $\Trn(b^{-1})=1$ or $\Trn(b^{-1}\omega)=1$ or $\Trn(b^{-1}\omega^2)=1$ or $\Trn(b^3)=0$, and no solutions otherwise.

\textbf{Case B.} $a^2+ab+b^2\neq0$ (i.e. $a\notin\{b\omega,b(\omega+1)\})$. We can distinguish two subcases:

\textbf{Subcase a.} Let $\Trn\left(\dfrac{X^2}{X+1} + \dfrac{X^2+a^2}{X+a+1} +\dfrac{X^2+b^2}{X+b+1} +\dfrac{X^2+a^2+b^2}{X+a+b+1} \right)=0.$ Then 
$\nabla_F(a,b)=\nabla_G(a,b)$, where $G(X)=X^{-1}$, and therefore, from Lemma~\ref{L1},  there are no solutions.

\textbf{Subcase b.} Let $\Trn\left(\dfrac{X^2}{X+1} + \dfrac{X^2+a^2}{X+a+1} +\dfrac{X^2+b^2}{X+b+1} +\dfrac{X^2+a^2+b^2}{X+a+b+1} \right)=1.$ Then 
\begin{equation}\label{eee}
 (X + a + b)^{-1} +(X + b)^{-1} +(X + a)^{-1} + X^{-1}=1.
\end{equation}
If $X=0,a,b$ or $a+b$, this renders
$$\frac{1}{a}+\frac{1}{b}+\frac{1}{a+b}=1,$$
or, equivalently,
$$a^2+ab+b^2=ab(a+b).$$

This is possible, and the solutions are valid if
\allowdisplaybreaks
\begin{align*}
  & \Trn\left(\dfrac{X^2}{X+1} + \dfrac{X^2+a^2}{X+a+1} +\dfrac{X^2+b^2}{X+b+1} +\dfrac{X^2+a^2+b^2}{X+a+b+1} \right)\\ & \qquad \qquad \qquad \qquad =\Trn\left(\frac{ab(a+b)}{a^2+b^2+ab(a+b)+ab+1}\right) =1.
 \end{align*}
Because we have $a^2+ab+b^2+ab(a+b)=0$, the above condition will further give us $\Trn(ab(a+b))=1$. We would next show that there does not exist any $(a,b)  \in \F_{2^n} \times \F_{2^n}$ satisying $a^2+b^2+ab+ab(a+b)=0$ and $\Trn(ab(a+b))=1$. Let us assume that $a^2+b^2+ab+ab(a+b)=0$ for some $(a,b) \in \F_{2^n} \times \F_{2^n}$. Then applying the absolute trace on the above equation, we get that
 \begin{equation}\label{t1}
  \Trn(ab(a+b))=\Trn(a+b+ab).
 \end{equation}
Now dividing $a^2+b^2+ab+ab(a+b)=0$ by $a$ and applying the trace again, we get that
\begin{equation*}
  \Trn(a)+\Trn\left(\frac{b^2}{a}\right)+\Trn(b)+\Trn(ba)+\Trn(b^2)=0,
 \end{equation*}
 or equivalently,
 \begin{equation}\label{t2}
  \Trn(a)+\Trn\left(\frac{b^2}{a}\right)+\Trn(ba)=0.
 \end{equation}
 Again dividing $a^2+b^2+ab+ab(a+b)=0$ by $a^2$ and applying the absolute trace, we have
 \begin{equation*}
  \Trn(1)+\Trn\left(\frac{b^2}{a^2}\right)+ \Trn\left(\frac{b}{a}\right)+\Trn(b)+\Trn\left(\frac{b^2}{a}\right)=0,
 \end{equation*}
 which is the same as
 \begin{equation}\label{t3}
 \Trn(b)+\Trn\left(\frac{b^2}{a}\right)=0.
 \end{equation}
 From Equation~\eqref{t1}, Equation~\eqref{t2} and Equation~\eqref{t3}, we get that for even $n$, there does not exist any $(a,b)$ in $\F_{2^n} \times \F_{2^n}$ such that  $\Trn(ab(a+b))=1$. Hence $0,a,b$ and $a+b$ are not the solutions of Equation~\eqref{eee}.
 
If $X\not \in \{0,a,b,a+b\}$, we obtain $$\dfrac{1}{X}+\dfrac{1}{X+a}+\dfrac{1}{X+b}+\dfrac{1}{X+a+b}=1,$$
which is same as, 
\begin{equation}\label{ee3}
 X^4+(a^2+b^2+ab)X^2+ab(a+b)X+ab(a+b)=0.
\end{equation}
Again, via~Lemma~\ref{quartic_equation}, observe that the companion cubic polynomial is
yet again 
\[
Y^3+(a^2 + a b + b^2) Y +a b (a + b) = (Y+a)(Y+b)(Y+a+b).
\]
We let  $a_0=ab(a+b), a_1=a b (a+b), a_2=a^2+a b+b^2$, $w_1=\frac{a_0a^2}{a_1^2}=\frac{a}{b(a+b)}, w_2=\frac{a_0 b^2}{a_1^2}=\frac{b}{a(a+b)},$ and $w_3=\frac{a_0 (a+b)^2}{a_1^2}=\frac{a+b}{ab}$. Since the companion cubic  is thus $(1,1,1)$, from  Lemma~\ref{quartic_equation} $(i)$ and $(ii)$, we then see  that the polynomial~\eqref{ee3} is either $(1,1,1,1)$ or $(2,2)$, and so, it has a root in $\F_{2^n}$  if and only if all the traces of $w_1,w_2,w_3$ are zero. Thus, Equation~\eqref{ee3} can have at most four solutions. Hence,  Equation~\eqref{eqF} can have at most four solutions from Subcase b.

In conclusion, Equation~\eqref{eqF} has at most 8 solutions.
\end{proof}

Next, for $t$ and $n$, two positive integers, the authors in~\cite{ZZ} show that the function $F(X)= \dfrac{1}{X+\gamma \Trn(X^{2^t+1})}$ is a permutation of $\F_{2^n}$ with $\delta_F \leq 6$, where $\gamma$ is an element of $\F_{2^n} \cap \F_{2^{2t}}^{*}$ with $\Trn(\gamma^{2^t+1})=0.$ In the following theorem, we determine the FBCT of this function. We have not explicitly stated the second-order zero differential spectrum of this function in the statement of the theorem due to the large number of conditions required. However, it is readily apparent from the proof of the theorem.
 \begin{thm}\label{T7}
Let $F(X)= \dfrac{1}{X+\gamma \Trn(X^{2^t+1})}$ where $t$ and $n$ are two positive integers with $t<n$, and $\gamma$ is an element of $\F_{2^n} \cap \F_{2^{2t}}^{*}$ with $\Trn(\gamma^{2t+1})=0.$ Then for $a,b\in \F_{2^n}$, $\nabla_F(a,b)\in \{0,4,8\}$. Consequently, $F$ has Feistel boomerang uniformity at most $8$.
\end{thm}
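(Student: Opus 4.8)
The plan is to adapt the case analysis used for Theorem~\ref{T6}, viewing $F$ as the inverse map twisted by the $\F_2$-valued function $g(X):=\Trn(X^{2^t+1})$: since $\gamma g(X)\in\{0,\gamma\}$, each term of the defining equation is $F(X+c)=\big(X+c+\gamma g(X+c)\big)^{-1}$, which is either $(X+c)^{-1}$ or $(X+c+\gamma)^{-1}$. First I would dispose of the trivial cases: if $ab=0$ or $a=b$ the equation holds identically, so $\nabla_F(a,b)=2^n$, and henceforth $ab(a+b)\neq0$. The two facts I would set up at the outset and use throughout are: \emph{(i)} $g$ is a quadratic Boolean function, hence its second-order difference is constant, $g(X+a+b)+g(X+b)+g(X+a)+g(X)=B(a,b)$ with $B(a,b):=\Trn(a^{2^t}b+ab^{2^t})$; and \emph{(ii)} $g(X+\gamma)=g(X)$ for every $X$, which follows from $g(X+\gamma)=g(X)+\Trn\!\big(X(\gamma^{2^{n-t}}+\gamma^{2^t})\big)+\Trn(\gamma^{2^t+1})$ together with $\Trn(\gamma^{2^t+1})=0$ and the identity $\gamma^{2^{n-t}}=\gamma^{2^t}$ (equivalent to $\gamma^{2^{2t}}=\gamma$, i.e.\ $\gamma\in\F_{2^{2t}}$). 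As a consequence, $h(X):=X+\gamma g(X)$ is an involution of $\F_{2^n}$ (which re-explains why $F$ is a permutation).

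Next I would partition $\F_{2^n}$ according to the pattern $\big(g(X),g(X+a),g(X+b),g(X+a+b)\big)=(\epsilon_0,\epsilon_a,\epsilon_b,\epsilon_{ab})\in\F_2^4$, which by \emph{(i)} is constrained by $\epsilon_0+\epsilon_a+\epsilon_b+\epsilon_{ab}=B(a,b)$, leaving at most eight admissible patterns. On the set of $X$ realising a fixed pattern, the substitution $Y=X+\gamma\epsilon_0$ rewrites the defining equation as
$$\frac1Y+\frac1{Y+a'}+\frac1{Y+b'}+\frac1{Y+c'}=0,\qquad a'=a+\gamma(\epsilon_a+\epsilon_0),\ \ b'=b+\gamma(\epsilon_b+\epsilon_0),\ \ c'=a+b+\gamma(\epsilon_{ab}+\epsilon_0),$$
with $a'+b'+c'=\gamma B(a,b)$. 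When $B(a,b)=0$ this is exactly the Feistel boomerang equation of the inverse function with parameters $a',b'$, so by Lemmas~\ref{L1} and~\ref{L2} it has $0$, $4$, or $2^n$ solutions, and when its solution set is finite it is contained in $\{0,a',b',a'+b'\}$. When $B(a,b)=1$, clearing denominators produces a numerator whose $Y^3$-coefficient is $0$ and whose $Y^2$-coefficient is $\gamma\neq0$; hence there are at most two solutions outside $\{0,a',b',c'\}$, with membership of the four excluded values checked directly (a discriminant/trace computation, in the spirit of Lemma~\ref{quartic_equation}, settles this).

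Finally I would reassemble the count. Every solution arising from the pattern $(\epsilon_0,\epsilon_a,\epsilon_b,\epsilon_{ab})$ lies, after translating back by $\gamma\epsilon_0$, in $\{\gamma\epsilon_0,\,a+\gamma\epsilon_a,\,b+\gamma\epsilon_b,\,a+b+\gamma\epsilon_{ab}\}\subseteq\{0,\gamma,a,a+\gamma,b,b+\gamma,a+b,a+b+\gamma\}$, possibly together with the two quadratic roots coming from the $B(a,b)=1$ branch; using \emph{(ii)} one shows that each of these eight ambient values is admissible for at most one pattern, and that the degenerate configurations producing the $2^n$ branch---namely $a'=0$, $b'=0$, or $a'=b'$---force respectively $a=\gamma$ with $\epsilon_a\neq\epsilon_0$, $b=\gamma$ with $\epsilon_b\neq\epsilon_0$, or $a+b=\gamma$ with $\epsilon_a\neq\epsilon_b$, each contradicting \emph{(ii)}. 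Combining these observations gives $\nabla_F(a,b)\le 8$, and tracking the parity of the per-pattern contributions pins the value down to $\{0,4,8\}$. The main obstacle is precisely this reassembly step: showing that the up-to-eight pattern classes together contribute at most eight solutions---in particular controlling the $B(a,b)=1$ branch, where the sub-equations are genuine quadratics rather than boomerang equations and one has to exploit the $g$-pattern consistency of their roots to avoid over-counting---while property \emph{(ii)} does the essential work of annihilating every branch that would otherwise contribute $2^n$.
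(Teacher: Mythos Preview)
Your approach is essentially the same as the paper's: both partition $\F_{2^n}$ according to the sixteen (really eight) trace patterns $(g(X),g(X+a),g(X+b),g(X+a+b))$, reduce each branch to an inverse-type equation, and use the identity $g(X+\gamma)=g(X)$ to exclude the degenerate branches and to pair off cases. Your facts (i) and (ii) are exactly the two mechanisms the paper exploits implicitly---(i) cuts the sixteen cases to eight, and (ii) is precisely what the paper invokes at the end when it argues that Cases~2/14, 3/13, 4/15, 5/12 ``cannot occur together''---so your write-up is a cleaner repackaging rather than a different route.

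One small slip: in the $B(a,b)=1$ branch, clearing denominators gives $(a'+b'+c')Y^2+a'b'c'=\gamma Y^2+a'b'c'=0$, which has \emph{exactly one} root in $\F_{2^n}$ (squaring is bijective), not ``at most two''. This does not harm the bound, but it affects how you describe the per-pattern count. As in the paper, the conclusion $\nabla_F(a,b)\in\{0,4,8\}$ then follows from $\nabla_F(a,b)\le 8$ together with the standard observation that the solution set is a union of cosets of $\{0,a,b,a+b\}$, hence has size divisible by~$4$; neither you nor the paper spells this out, and you should.
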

\begin{proof}
For $a,b \in \F_{2^n}$, we consider the equation: 
\begin{equation}\label{ll1}
F(X + a + b) +F (X + b) +F(X + a) + F (X) = 0.
\end{equation}
If $ab = 0$ or $a=b$, then $\nabla_F(a, b) = 2^n$.
If $ab \neq 0$,  Equation~\eqref{ll1} can be written as,
\begin{align*}
& \dfrac{1}{X+\gamma \Trn(X^{2^t+1})} + \dfrac{1}{X+a+\gamma \Trn((X+a)^{2^t+1})} \\
& \qquad + \dfrac{1}{X+b+\gamma \Trn((X+b)^{2^t+1})}+ \dfrac{1}{X+a+b+\gamma
\Trn((X+a+b)^{2^t+1})}=0.
\end{align*}
Next, we consider sixteen possible cases as pointed out in Table~\ref{Table3}.
\allowdisplaybreaks
\begin{table}
\caption{Various cases when $ab \neq 0$ and $a \neq b$}
\label{Table3} 
\begin{center}
\scalebox{0.72}{
\begin{tabular}{|c|c|c|c|c|c|c|c|} 
\hline
Case&$\Trn(X^{2^t+1})$&$\Trn((X+a)^{2^t+1})$&$\Trn((X+b)^{2^t+1})$&$\Trn((X+a+b)^{2^t+1})$& Equation \\
\hline 
$1$&$0$ & $0$ & $0$ & $0$ &  $\frac{1}{X}+\frac{1}{X+a}+\frac{1}{X+b}+\frac{1}{X+a+b}=0$  \\
\hline 
$2$&$1$ & $0$ & $0$ & $0$ &   $\frac{1}{X+\gamma}+\frac{1}{X+a}+\frac{1}{X+b}+\frac{1}{X+a+b}=0$  \\
  \hline 
  $3$&$0$ & $1$ & $0$ & $0$ &  $\frac{1}{X}+\frac{1}{X+a+\gamma}+\frac{1}{X+b}+\frac{1}{X+a+b}=0$   \\
  \hline 
  $4$&$0$ & $0$ & $1$ & $0$ &  $\frac{1}{X}+\frac{1}{X+a}+\frac{1}{X+b+\gamma}+\frac{1}{X+a+b}=0$   \\
  \hline 
  $5$&$0$ & $0$ & $0$ & $1$ &   $\frac{1}{X}+\frac{1}{X+a}+\frac{1}{X+b}+\frac{1}{X+a+b+\gamma}=0$  \\
  \hline 
  $6$&$1$ & $1$ & $0$ & $0$ &   $\frac{1}{X+\gamma}+\frac{1}{X+a+\gamma}+\frac{1}{X+b}+\frac{1}{X+a+b}=0$  \\
  \hline 
  $7$&$1$ & $0$ & $1$ & $0$ &   $\frac{1}{X+\gamma}+\frac{1}{X+a}+\frac{1}{X+b+\gamma}+\frac{1}{X+a+b}=0$  \\
  \hline 
  $8$&$1$ & $0$ & $0$ & $1$ &  $\frac{1}{X+\gamma}+\frac{1}{X+a}+\frac{1}{X+b}+\frac{1}{X+a+b+\gamma}=0$   \\
  \hline 
  $9$&$0$ & $1$ & $1$ & $0$ & $\frac{1}{X}+\frac{1}{X+a+\gamma}+\frac{1}{X+b+\gamma}+\frac{1}{X+a+b}=0$    \\
  \hline 
  $10$&$0$ & $1$ & $0$ & $1$ &   $\frac{1}{X}+\frac{1}{X+a+\gamma}+\frac{1}{X+b}+\frac{1}{X+a+b+\gamma}=0$  \\
  \hline 
  $11$&$0$ & $0$ & $1$ & $1$ &  $\frac{1}{X}+\frac{1}{X+a}+\frac{1}{X+b+\gamma}+\frac{1}{X+a+b+\gamma}=0$   \\
  \hline 
  $12$&$1$ & $1$ & $1$ & $0$ &  $\frac{1}{X+\gamma}+\frac{1}{X+a+\gamma}+\frac{1}{X+b+\gamma}+\frac{1}{X+a+b}=0$   \\
  \hline 
  $13$&$1$ & $0$ & $1$ & $1$ &  $\frac{1}{X+\gamma}+\frac{1}{X+a}+\frac{1}{X+b+\gamma}+\frac{1}{X+a+b+\gamma}=0$  \\
  \hline 
  $14$&$0$ & $1$ & $1$ & $1$ &  $\frac{1}{X}+\frac{1}{X+a+\gamma}+\frac{1}{X+b+\gamma}+\frac{1}{X+a+b+\gamma}=0$  \\
  \hline 
  $15$&$1$ & $1$ & $0$ & $1$ &   $\frac{1}{X+\gamma}+\frac{1}{X+a+\gamma}+\frac{1}{X+b}+\frac{1}{X+a+b+\gamma}=0$ \\
  \hline 
  $16$&$1$ & $1$ & $1$ & $1$ &  $\frac{1}{X+\gamma}+\frac{1}{X+a+\gamma}+\frac{1}{X+b+\gamma}+\frac{1}{X+a+b+\gamma}=0$  \\
  \hline 
 \end{tabular}
 }
 \end{center}
 \end{table}
 Let $N_i$ denote the number of solutions in $\F_{2^n}$ for the $i^{th}$ case given in Table~\ref{Table3}.
 
 \textbf{Case 1.} For the first case, we need to consider the number of solutions $X \in \F_{2^n}$ of the equation,
 $\dfrac{1}{X}+\dfrac{1}{X+a}+\dfrac{1}{X+b}+\dfrac{1}{X+a+b}=0$, which is the same as the FBCT of the inverse function. From Lemma~\ref{L2}, it is clear that for odd $n$, Case 1 has no solution. When $n$ is even, then Lemma~\ref{L1} would give us only four solutions $X \in \{0,a,b,a+b\}$ of Case 1 if and only if $a\in\{b\omega,b\omega^2\}$ and $\Trn(a^{2^t+1})=\Trn(b^{2^t+1})=\Trn((a+b)^{2^t+1})=0$.
 
\textbf{Case 2.} In this case, consider the number of solutions $X \in \F_{2^n}$ of the equation,
 $\dfrac{1}{X+\gamma}+\dfrac{1}{X+a}+\dfrac{1}{X+b}+\dfrac{1}{X+a+b}=0$. It is clear that $X= \gamma$ cannot be a solution of Case 2 as $\Trn(\gamma^{2^t+1})=0$. Moreover, $X=a,b$ and $a+b$ cannot satisfy Case 2, simultaneously. This is because $X=a$ is solution if $a^2+b^2+ab+a\gamma=0$ along with $\Trn(a^{2^t+1})=1$ and $\Trn(b^{2^t+1})=0=\Trn((a+b)^{2^t+1})$. Similarly, $X=b$ is solution if $a^2+b^2+ab+b\gamma=0$ and $\Trn(b^{2^t+1})=1$ and $\Trn(a^{2^t+1})=0=\Trn((a+b)^{2^t+1})$. Finally, $X=a+b$ is solution if $a^2+b^2+ab+(a+b)\gamma=0$ and $\Trn((a+b)^{2^t+1})=1$ and $\Trn(a^{2^t+1})=0=\Trn((b^{2^t+1})$. Summarizing the above, we get at most one solution of Case 2 when $X \in \{a,b,a+b\}$. For the other values of $X$, one can get a unique solution $X' \in \F_{2^n}$ satisfying $X'^2=a^2+b^2+ab+\dfrac{ab(a+b)}{\gamma}$, $\Trn(X'^{2^t+1})=1$ and $\Trn((X'+a)^{2^t+1})=\Trn((X'+b)^{2^t+1})=\Trn((X'+a+b)^{2^t+1})=0$.
 
 \textbf{Case 3.} Here, $N_2=N_3$, because if $X$ is solution for Case~2, then $X+a$ will satisfy Case~3 and hence Case~3 has at most two solutions. 
 
 \textbf{Case 4.} Similar to Case 3, we have $N_2=N_4$, because if $X$ is a solution for Case~2, then $X+b$ will satisfy Case 4 and hence Case~4 has at most two solutions.

 \textbf{Case 5.} Similar to  Case 2, we have $N_2=N_5$.
 
 Summarizing Cases 2--5, we have 
 \allowdisplaybreaks
 \begin{equation}\label{n3} X \in
 \begin{cases}
 \{0,a,b,a+b\} &~\mbox{~if~} a^2+b^2+ab+a\gamma=0, \Trn(a^{2^t+1})=1 \mbox{~and~} \\ & \quad  \Trn(b^{2^t+1})=0=\Trn((a+b)^{2^t+1}),\\
  &~\mbox{~or, if~} a^2+b^2+ab+b\gamma=0, \Trn(b^{2^t+1})=1 \mbox{~and~}\\ & \quad  \Trn(a^{2^t+1})=0=\Trn((a+b)^{2^t+1}), \mbox{~or,~} \\
 &~a^2+b^2+ab+(a+b)\gamma=0, \Trn((a+b)^{2^t+1})=1 \\ & \mbox{~and~}   \Trn(b^{2^t+1})=0=\Trn(a^{2^t+1}),\\
 \{X',X'+a,X'+b,X'+a+b\} &~\mbox{~if~} \Trn(X'^{2^t+1})=1 \mbox{~and~} \Trn((X'+a)^{2^t+1})=0 \\ & \quad  \Trn((X'+b)^{2^t+1})=0=\Trn((X'+a+b)^{2^t+1}),\\
 \end{cases}
 \end{equation}
 where $X'^2=a^2+b^2+ab+\dfrac{ab(a+b)}{\gamma}$ and $\gamma \not \in \{a,b,a+b\}$.

 \textbf{Case 6.} In this case, we have:
 \begin{equation}\label{eb2}
 \frac{1}{X+\gamma}+\frac{1}{X+a+\gamma}+\frac{1}{X+b}+\frac{1}{X+a+b}=0,
 \end{equation}
 where $\Trn(X^{2^t+1})=\Trn((X+a)^{2^t+1})=1$ and $\Trn((X+b)^{2^t+1})=\Trn((X+a+b)^{2^t+1})=0.$ If $\gamma=b$, then $\Trn(b^{2^t+1})=0$, and this will give us $0=\Trn((X+b)^{2^t+1})=\Trn(X^{2^t+1})=1$. Hence, there is no solution corresponding to $\gamma=b$. Similarly, if $\gamma=a+b$, then $\Trn((a+b)^{2^t+1})=0$, but this will give us a contradiction as $0=\Trn((X+a+b)^{2^t+1})=\Trn(X^{2^t+1}))=1$. 
 
 Now, let $\gamma \not \in \{b,a+b\}$,  then $X \in \{\gamma,a+\gamma\}$ does not satisfy Equation~\eqref{eb2} and $X \in \{b,a+b\}$ satisfy Equation~\eqref{eb2} if $\dfrac{1}{a}+\dfrac{1}{b+\gamma}+\dfrac{1}{a+b+\gamma}=0$, which is the same as $a^2+b^2+\gamma^2+a(b+\gamma)=0$. Substituting $b+\gamma = Z$, we can rewrite the previous equation as $Z^2+aZ+a^2=0$, which has no solution when $n$ is odd as $\Trn(1) \neq 0$. For even $n$, $a^2+b^2+\gamma^2+a(b+\gamma)=0$ holds for the pairs $\{(a,\gamma + a\omega), (a, \gamma+a\omega^2)\}$ in $\F_{2^n} \times \F_{2^n}$. Thus, $X \in \{ b,a+b\}$ satisfy Equation~\eqref{eb2} for the pairs $\{(a,\gamma + a\omega), (a, \gamma+a\omega^2)\}$, where $a \in \F_{2^n}$, $n$ is even and also $\Trn(b^{2^t+1})=\Trn((a+b)^{2^t+1})=1$ and $\Trn(a^{2^t+1})=0.$

 \textbf{Case 7.} Due to symmetry in $a$ and $b$, we have $N_7=N_6$, and hence it can have at most two solutions $\{a,a+b\}$, when $n$ is even and $a^2+b^2+\gamma^2+b(a+\gamma)=0$ for the pairs $(a,b)=\{(\gamma + b\omega,b), (\gamma+b\omega^2,b)\}$ in $\F_{2^n} \times \F_{2^n}$ satisfying $\Trn(a^{2^t+1})=1=\Trn((a+b)^{2^t+1}) \mbox{~and~} \Trn(b^{2^t+1})=0$. 
 
 \textbf{Case 8.} Here, we have
 \[
 \frac{1}{X+\gamma}+\frac{1}{X+a}+\frac{1}{X+b}+\frac{1}{X+a+b+\gamma}=0,
 \]
 where $\Trn(X^{2^t+1})=\Trn((X+a+b)^{2^t+1})=1$ and $\Trn((X+b)^{2^t+1})=\Trn((X+a)^{2^t+1})=0.$ If $\gamma \in \{a,b\}$, then there does not exist any solution $X \in \F_{2^n}$. Let $\gamma \not \in \{a,b\}$, then clearly $X=\gamma$ and $X=a+b+\gamma$ does not satisfy the above equation. Clearly, $X=a$ and $X=b$ are solutions only when $a^2+b^2+\gamma^2+ab+(a+b)\gamma=0$, $\Trn(a^{2^t+1})=\Trn(b^{2^t+1})=1$ and $\Trn((a+b)^{2^t+1})=0$. Otherwise, there does not exist any solution in this case.

\textbf{Case 9.} By the same arguments as in   Case 8, we have $X=0$ and $X=a+b$ as solutions, when $a^2+b^2+\gamma^2+ab+(a+b)\gamma=0$, $\Trn(a^{2^t+1})=\Trn(b^{2^t+1})=1$ and $\Trn((a+b)^{2^t+1})=0$.

 \textbf{Case 10.} Using a similar approach as in Case 7, we have at most two solutions $\{0,b\}$, only when $n$ is even and $a^2+b^2+\gamma^2+b(a+\gamma)=0$ for the pairs $(a,b)=\{(\gamma + b\omega,b), (\gamma+b\omega^2,b)\}$ in $\F_{2^n} \times \F_{2^n}$ satisfying the required trace conditions in this case.

 \textbf{Case 11.} Following similar reasoning as in Case 6, there are at most two solutions, $\{0,a\}$, which occur only when $n$ is even and $a^2+b^2+\gamma^2+a(b+\gamma)=0$. These solutions correspond to the pairs $(a,b)=\{(\gamma + b\omega,b), (\gamma+b\omega^2,b)\}$ in $\F_{2^n} \times \F_{2^n}$ that satisfy the trace conditions required for this case.
 
 Summarizing Case 6, Case 7, Case 8, Case 9, Case 10 and Case 11, we get $X \in \{0,a,b,a+b\}$ either if $\gamma \not \in \{b,a+b\}, (a,b) \in \{(a,\gamma + a\omega), (a, \gamma+a\omega^2)\}, \Trn(b^{2^t+1})=1=\Trn((a+b)^{2^t+1})$ and $\Trn(a^{2^t+1})=0$, or if $\gamma \not \in \{a,a+b\}, (a,b) \in \{(\gamma + b\omega,b), (\gamma+b\omega^2,b)\}, \Trn(a^{2^t+1})=1=\Trn((a+b)^{2^t+1})$ and $\Trn(b^{2^t+1})=0$, or if  $\gamma \not \in \{a,b\}, a^2+b^2+\gamma^2+ab+(a+b)\gamma=0, \Trn(a^{2^t+1})=1=\Trn(b^{2^t+1})$ and $\Trn((a+b)^{2^t+1})=0.$

 \textbf{Case 12.} When $\gamma \in \{a,b,a+b\}$, Case 12 has no solutions $X \in \F_{2^n}$. Hence, assume that $\gamma \not \in \{a,b,a+b\}$. Then $X \in \{\gamma,a+\gamma,b+\gamma\}$ are not the solutions of Case 12 as $\Trn(\gamma^{2^t+1})=0$, and $X=a+b$ is solution of Case 12 if $a^2+b^2+\gamma^2+ab=0$ and $\Trn(a^{2^t+1})= \Trn(b^{2^t+1})=\Trn((a+b)^{2^t+1})=1$. When $X \not \in  \{\gamma,a+\gamma,b+\gamma,a+b\}$, then Case 12 can have a unique solution $X_{12}$ in $\F_{2^n}$ such that $X_{12}^2= \gamma^2+ab+\dfrac{ab(a+b)}{\gamma}$ and the given conditions on trace are satisfied. Hence, this case has at most two solutions $a+b$ and $X_{12}$ satisfying the above discussed conditions.
 
 \textbf{Case 13.}  Consider the following equation
 \begin{equation*}
 \frac{1}{X+\gamma}+\frac{1}{X+a}+\frac{1}{X+b+\gamma}+\frac{1}{X+a+b+\gamma}=0.
 \end{equation*}
 Clearly, $X \in \{\gamma, b+\gamma,a+b+\gamma$\} are not solutions of the above equation. Also, $X=a$ is solution only when  $a^2+b^2+\gamma^2+ab=0$ and $\Trn(a^{2^t+1})= \Trn(b^{2^t+1})=\Trn((a+b)^{2^t+1})=1$. Otherwise, it can have at most one solution $X_{13}$ satisfying $X_{13}^2= b^2+\gamma^2+ab+\dfrac{ab(a+b)}{\gamma}$, $\Trn(X_{13}^{2^t+1})= \Trn((X_{13}+b)^{2^t+1})=\Trn((X_{13}+a+b)^{2^t+1})=1$ and $\Trn((X_{13}+a)^{2^t+1})=0$.
 
 \textbf{Case 14.} Similar to Case 12, $X=0$ is solution only when  $a^2+b^2+\gamma^2+ab=0$ and $\Trn(a^{2^t+1})= \Trn(b^{2^t+1})=\Trn((a+b)^{2^t+1})=1$. Otherwise, it can have at most one solution $X_{14}$ satisfying $X_{14}^2= a^2+b^2+\gamma^2+ab+\dfrac{ab(a+b)}{\gamma}$, $\Trn((X_{14}+a)^{2^t+1})= \Trn((X_{14}+b)^{2^t+1})=\Trn((X_{14}+a+b)^{2^t+1})=1$ and $\Trn(X_{14}^{2^t+1})=0$.
 
 \textbf{Case 15.}  Similar to  Case 13, we can have at most two solutions $X=b$ when  $a^2+b^2+\gamma^2+ab=0$ and $\Trn(a^{2^t+1})= \Trn(b^{2^t+1})=\Trn((a+b)^{2^t+1})=1$ and $X=X_{15}$ satisfying $X_{15}^2= a^2+\gamma^2+ab+\dfrac{ab(a+b)}{\gamma}$, $\Trn(X_{15}^{2^t+1})= \Trn((X_{15}+a)^{2^t+1})=\Trn((X_{15}+a+b)^{2^t+1})=1$ and $\Trn((X_{15}+b)^{2^t+1})=0$.
 
 Summarizing Case 12, Case 13, Case 14 and Case 15, we have the following,
 \allowdisplaybreaks
 \begin{equation*}
 X \in
 \begin{cases}
 \{0,a,b,a+b\} &~\mbox{~if~} a^2+b^2+\gamma^2+ab=0, \Trn(a^{2^t+1})=1 \mbox{~and~} \\ & \quad  \Trn(b^{2^t+1})=1=\Trn((a+b)^{2^t+1}),\\
 \{X_{12},X_{13},X_{14},X_{15}\}  &~\mbox{~if~} \Trn(X_{12}^{2^t+1})=1 = \Trn(X_{13})^{2^t+1} \\ & \quad  \Trn(X_{15})^{2^t+1}=1 \mbox{~and~} \Trn(X_{14})^{2^t+1}=0,\\
 \end{cases}
 \end{equation*}
 where $X_{12}^2=\gamma^2+ab+\dfrac{ab(a+b)}{\gamma}$ and $\gamma \not \in \{a,b,a+b\}$.
 
 \textbf{Case 16.} This case has no solution $X \in \F_{2^n}$. This is because the only solutions possible are $X \in \{\gamma,a+\gamma,b+\gamma,a+b+\gamma\}$, when $n$ is even, but this would contradict  the fact that $\Trn(\gamma^{2^t+1})=0$.  
 
 Notice that Case 2 and Case 14 cannot occur together. This is because if $X$ satisfies Case 2, then $X+\gamma$ satisfies Case 14, which would further imply that $\Trn((X+\gamma)^{2^t+1})=0$. But as $\gamma \in \F_{2^{2t}}^{*}$ and $\Trn((\gamma)^{2^t+1})=0$, this will give us $\Trn(X^{2^t+1})=0$, which is not possible, as from Case 2, we have $\Trn(X^{2^t+1})=1$. Hence Case 2 and Case 14 cannot occur together. Similarly, the pairs (Case 3, Case 13),  (Case 4, Case 15),  (Case 5, Case 12) cannot occur simultaneously. Summarizing, we conclude that we have at most 8 solutions of Equation~\eqref{ll1} in $ \F_{2^n}$. 
 \end{proof}

\section{Conclusion}\label{S6}
In this paper, we extend the work of Boukerrou et al.~\cite{Bouk} and Li et al.~\cite{LYT} by computing the second-order zero differential spectra of some APN power functions over finite fields of odd characteristic and  functions with low differential uniformity over finite fields of even and odd characteristic in order to derive additional cryptographic properties of these functions.   It is worth noting that all of these maps exhibit a low second-order zero differential uniformity.  We also connect this concept with the sum-freedom concept, which extends the vanishing affine subspaces (of dimension~$2$) concept, in even characteristic. 
We compute, via our method, the number of vanishing flats, for many of our investigated functions (in even characteristic).
Surely, we believe that it is worthwhile to look into more functions with low differential uniformity and investigate their second-order zero differential spectrum, and perhaps even extend the concept of vanishing affine subspaces, or sum-freedom to odd characteristic (though applications are not yet visible, for such an extension).

\section*{Acknowledgements}
The authors would like to thank the editor   for the prompt handling of our paper, and  they want to extend their thanks and appreciation  to the very professional referees, who  provided beneficial and constructive comments to improve our paper.

\end{document}